%% file: main.tex
\let\arxiv=1
\let\inclappendix=1

\if\arxiv1
\documentclass[runningheads, 11pt]{llncs}
\else
\documentclass[runningheads]{llncs}
\fi

\usepackage[T1]{fontenc}
\usepackage{graphicx}

\if\arxiv1
\usepackage{charter, geometry}
\usepackage{cite}
\fi

\pdfoutput=1 

\graphicspath{{./imgs/}}

\title{Coalgebraic Path Constraints} 

\usepackage{thmtools}

\usepackage{stmaryrd}
\usepackage{amssymb, amsmath}

\usepackage{wrapfig}
\usepackage[dvipsnames]{xcolor}
\usepackage[colorlinks, citecolor=blue, linkcolor=blue, urlcolor=purple]{hyperref}
\hypersetup{
}
\usepackage{tikz}
\usetikzlibrary{
    cd,
    fit,
    calc,
    positioning,
    arrows,
    automata,
    shapes
}
\tikzset{
    baseline = (current bounding box.center),
    every state/.append style = {
		inner sep = 3pt,
		minimum size = 18pt,
		initial text = {},
        fill=white
	},
	every edge/.append style = {
		->,
		>=stealth,
		bend angle=10,
		thick
	}
}
\usepackage[capitalise]{cleveref}

\newcommand{\Set}{\mathbf{Set}}
\newcommand{\VSp}{\mathbf{Vec}}

\newcommand{\Met}{\mathbf{Met}}

\newcommand{\Cat}{\mathbf{C}}

\newcommand{\Coalg}{\mathsf{Coalg}}
\newcommand{\Cov}{\mathbf{V}}

\newcommand{\Shape}{\operatorname{Shape}}
\newcommand{\Id}{\mathrm{Id}}
\newcommand{\op}{\mathrm{op}}

\newcommand{\Sys}{\mathcal{S}}

\newcommand{\id}[1]{\mathsf{id}_{#1}}

\newcommand{\incl}{\mathsf{incl}}
\newcommand{\proj}{\mathsf{proj}}

\newcommand{\Pow}{\mathcal{P}}

\newcommand{\NN}{\mathbb{N}}
\newcommand{\ZZ}{\mathbb{Z}}
\newcommand{\RR}{\mathbb{R}}

\newcommand{\card}[1]{| #1 |}
\newcommand{\locCard}[1]{| #1 |_{\mathsf{loc}}}

\newcommand{\Grph}{\operatorname{Grph}}

\newcommand{\NatForm}{\mathsf{F}_{\mathrm{nat}}}

\newcommand{\der}{\partial}

\newcommand{\beh}{\mathsf{beh}}
\newcommand{\inv}{{-1}}

\newcommand{\sem}[1]{\llbracket{#1}\rrbracket}

\newcommand{\e}{\varepsilon}
\newcommand{\bang}{\mathsf{!}}

\newcommand{\Word}[1]{\mathbf{Net}_{#1}}
\newcommand{\Net}{\mathcal{N}}
\newcommand{\WFunc}[1]{\sem{#1}}
\newcommand{\Incl}{\mathcal I}

\newcommand{\tr}[1]{
        \raisebox{-1pt}{
            \(\xrightarrow{#1}\)
        }
}

\newcommand{\code}[1]{\mathtt{#1}}

\begin{document}

\author{Todd {Schmid}\inst{1}\orcidID{0000-0002-9838-2363}}

\institute{Bucknell University, Lewisburg PA, USA}


\authorrunning{T. Schmid} 

\maketitle

\begin{abstract}
    Axiomatizing covarieties of coalgebras for an endofunctor is less intuitive than axiomatizing varieties of algebras via equations (Dahlqvist and Schmid, 2022). Existing techniques come from coalgebraic modal logic, pattern avoidance specifications, and hidden algebra. We introduce equational path constraints, a well-behaved and relatively easy to describe class of finitary behavioural properties that provide an algebra-flavoured alternative to coequations. The basic idea is to assign a pair of values to each path through a coalgebra and posit that the two values coincide. We show that equational path constraints define covarieties and construct final coalgebras relative to equational path constraints in some concrete cases. We connect equational path constraints to coequations when values computed from paths live in a monad, and we compute an upper bound on the number of colours needed to express the coequation. One of our constructions is reminiscent of the initial/terminal sequences of (Adámek, 1974) and (Barr, 1993). Motivating examples include commutativity conditions in automata theory, differential equations, bi-infinite streams, and frame conditions.
    \keywords{Coalgebra, coequation, terminal sequence, modal logic}
\end{abstract}

\section{Introduction}
\label{sec:intro}

In the late 1980s/early 1990s, it was realized that many state-based systems from computer science could be encoded as coalgebras for endofunctors on a category. 
Formally, coalgebras are dual to algebras: if we think of the application of an endofunctor \(F\) to a set \(X\) as forming a set \(FX\) of compound objects (for e.g., from a family of abstract operations), then algebras model evaluation of compound elements to values \(FX \to X\), while coalgebras model the dynamics of state based systems as the production of compound elements from values \(X \to FX\).
Initial developments in the general theory of coalgebras, or \emph{universal coalgebra}, frequently took inspiration from the initial developments of universal algebra, especially the work of Birkhoff~\cite{Birkhoff35}.
In universal algebra, Birkhoff's variety theorem characterizes the classes of algebras closed under products, subalgebras, and quotients (called \emph{structural varieties}) as the classes of algebras defined by operations and equations (called \emph{varieties}). 
Many of the first papers on universal coalgebra were interested in providing a useful dual version of Birkhoff's theorem for coalgebras, with a particular focus on finding the right notion of \emph{coequation}~\cite{Rutten00,Rocsu1998birkhoff,Goldblatt2001,Gumm2000,Hughes2002,Jacobs1995} (this is also the main topic of~\cite{DahlqvistS21}).

A particularly influential dual version of Birkhoff's variety theorem is due to Rutten~\cite{Rutten00} (later improved by Gumm~\cite{Gumm2000}).
Rutten's covariety theorem characterizes \emph{structural covarieties}, the classes of coalgebras closed under subobjects, coproducts, and quotients, as the classes of coalgebras defined by their adherence to a specific set of system behaviours (called \emph{covarieties}).
Formally, these behaviours are elements of a cofree coalgebra over some generating set (called the set of \emph{colours}), so in this sense a coequation is a predicate on a cofree coalgebra (see \cref{def:coequation}). 
Predicate coequations are the dual notion to equations if one is happy with the characterization of equations as quotients of a free algebra.

Rutten's covariety theorem is probably the simplest to use among the cited covariety theorems.
Specifying a predicate coequation amounts to describing a set of behaviours with some additional colouring information.
But even still, relative to the number of tools for axiomatizing varieties of algebras, our technology for \emph{coaxiomatizing} covarieties of coalgebras (in the sense of finding a characterizing coequation) is still very limited.


With the goal of adding to the coaxiomatization toolbox, we introduce \emph{equational path constraints}, a class of powerful but simple behaviour specifications that lend themselves well to intuitive equational descriptions.
Being models of state based systems, coalgebras support abstract notions of \emph{state}, \emph{transition}, and \emph{path}.
Informally, a \emph{path constraint} is a predicate on finite paths with the same starting state, and a coalgebra satisfies the path constraint if every path satisfies this predicate.
An \emph{equational} path constraint takes this a step further by defining this predicate via an equation between two ways of extracting a value of some type from a path leaving a given state.
By considering systems of equational path constraints, we are able to capture a number of known examples of covarieties from the literature; these can be found in \cref{sec:path constraints}.



There are significant consequences for the structure of a covariety when there exists a system of equational path constraints that coaxiomatizes it. 
Our first main result has to do with the \emph{chromatics} of covarieties of coalgebras in the category of sets and functions.
From the covariety theorem of Rutten~\cite{Rutten00} and the fact that equational path constraints coaxiomatize covarieties (\cref{thm:path covariety}), it follows that every equational path constraint is equivalent to some predicate coequation.
Unfortunately, neither theorem tells us much about this coequation, only that it exists.
One basic question we can ask about a coequation is how \emph{colourful} it is, i.e., how many colours are needed in its description~\cite{Adamek05}.
In \cref{sec:covariety}, we compute an upper bound on the number of colours needed to write down a coequation equivalent to an equational path constraint \emph{over a monad}. 
One instance of our bound shows that the commutativity and reversibility conditions on Moore automata (and many similar conditions) require at most 2 colours for their coequations.

The second main result of the paper is in \cref{sec:terminal net}. 
In that section, we construct a final \(F\)-coalgebra satisfying a given set of path constraints in an arbitrary complete category. 
For the construction to go through, we need \(F\) and each path constraint \(E\) to be sufficiently continuous and \emph{singular}, in a sense that we describe in \cref{sec:path constraints}.
We call the construction the \emph{terminal net construction}, as it is a kind of higher-dimensional version of the \emph{terminal sequence} construction of final coalgebras due to Ad\'amek~\cite{Adámek74} and Barr~\cite{Barr93}. \smallskip

\noindent\emph{Outline.}
To reiterate, our intention is to introduce path constraints and develop their basic theory.
\if\arxiv1
Proofs can be found in~\cref{sec:appendix}.
\fi
\if\arxiv0
See the full version~\cite{fullversion} for omitted proofs.
\fi
\begin{itemize}
	\item Some of the background needed in later sections can be found in \cref{sec:coalgebra}.

    \item The formal definitions of path constraint and equational path constraint can be found in \cref{sec:path constraints}. 
    The same section contains a number of examples drawn from the literature.

    \item In \cref{sec:covariety}, we show that equational path constraints coaxiomatize covarieties.
    We also give an upper-bound on the \emph{chromatic number} of the covariety coaxiomatized by an equational path constraint over a monad.

    \item The last technical section is \cref{sec:terminal net}, which introduces the \emph{terminal net} construction of the final coalgebra relative to a given system of path constraints, assuming that the coalgebraic signature and path constraints are sufficiently continuous.

	\item In \cref{sec:discussion}, we discuss the limitations of path constraints, related work, and questions we would like to see further explored.
\end{itemize}

\noindent\emph{Acknowledgements.}
\if\arxiv1
The paper review process is the backbone of the scientific community, and I sincerely appreciate every ounce of labour the reviewers have put into this paper.
A quality review is not only thorough, but also helpful and encouraging, especially for an early-career researcher. 
I feel very lucky to have received so many quality reviews.
I am indebted to the reviewers at CSL'26 and CMCS'26, whose insight greatly improved the presentation of the paper (and in some cases, the results!).
I would also like to thank Fred Dahlqvist, Larry Moss, Alexandra Silva, Jurriaan Rot, and Tobias Kapp\'e for many helpful discussions and pointers to the literature.
\else
I would like to thank the reviewers, whose insights have greatly improved the paper, and Fred Dahlqvist, Larry Moss, Alexandra Silva, Jurriaan Rot, and Tobias Kapp\'e for many very helpful discussions.
\fi


\section{Coalgebraic Preliminaries}
\label{sec:coalgebra}
The paper is squarely situated in the subject of universal coalgebra~\cite{Rutten00}.
Familiarity with the basic language of category theory is assumed, but for the reader's convenience we set notation and reminders of the key definitions in this section.

Names of categories appear in bold:
our motivating examples will be
	\(\Set\), the category of sets and functions, and
	\(\VSp\), the category of (real) vector spaces and linear maps.
We use \(\Cat\) to denote an arbitrary category, and \(X\) and \(Y\) to denote objects of \(\Cat\).  
We write \(\Id\) for the identity functor on \(\Cat\) that does nothing, defined by \(\Id X = X\) and \(\Id(f) = f\) for any object \(X\) and arrow \(f\).
For a given object \(B\) of \(\Cat\), we write \(\Delta_B\) for the \emph{constant} endofunctor on \(\Cat\) defined by \(\Delta_BX = B\) and \(\Delta_B(f) = \id{B}\) for any arrow \(f\).
Here, \(\id{B} \colon B \to B\)  is the identity.\smallskip

\noindent\emph{Limits.}
A \emph{cone} for a diagram (formally, a functor) \(D \colon \mathbf D \to \Cat\) of shape \(\mathbf D\) is a pair \((A, a)\) consisting of an object \(A\) of \(\Cat\) and a natural transformation \(a \colon \Delta_A \Rightarrow D\).
The diagram \(D\) is \emph{small} if the objects and arrows of \(\mathbf D\) form a set.
A cone homomorphism \(f \colon (A, a) \to (B, b)\) for \(a \colon \Delta_A \Rightarrow D\) and \(b \colon \Delta_B \Rightarrow D\) is an arrow \(f \colon A \to B\) such that \(b_{j} \circ f = a_{j}\) for every object \(j\) of \(\mathbf D\).
A \emph{limit cone} for \(D\) is a final cone.
Limit cones are unique up to cone isomorphism.
A category is complete if every small diagram has a limit cone.
A functor \(F\) \emph{preserves} \(\mathbf D\)-limits if \((FA, F(c))\) is a limit cone for \(F \circ D\) whenever \((A, a)\) is a limit cone for \(D\).
A functor \(F\) \emph{reflects} \(\mathbf D\)-limits if \((A, a)\) is a limit cone whenever \((FA, F(a))\) is.
\smallskip

\noindent%
\textbf{\color{red} Note.} We assume throughout that \(\Cat\) is complete and that the functors \(F\) and \(H\) preserve monics (and in concrete categories like \(\Set\)~\cite{AdamekHH09}, preserve inclusions). 
\smallskip

The notions of \emph{cocone}, \emph{colimit}, and \emph{cocompleteness} are dual to those of cone, limit, and completeness.
Standard references for category theory are~\cite{Riehl16,maclane1978categories}.

\begin{definition}
    [\(F\)-coalgebra]
	Given an endofunctor \(F\) on a category \(\Cat\), an \emph{(\(F\)-)coalgebra} is a pair \((X, \beta)\) consisting of an object \(X\) of \(\Cat\) and an arrow \(\beta \colon X \to FX\).
	The endofunctor \(F\) is called the \emph{coalgebraic signature} of \((X, \beta)\).
	A \emph{coalgebra homomorphism} between \(F\)-coalgebras \((X, \beta_X)\) and \((Y, \beta_Y)\) is an arrow \(h \colon X \to Y\) such that \(F(h) \circ \beta_X = \beta_Y \circ h\).
	We write \(h \colon (X, \beta_X) \to (Y, \beta_Y)\) to denote that \(h\) is a coalgebra homomorphism.
\end{definition}

\noindent%
Collectively, \(F\)-coalgebras and the coalgebra homomorphisms between them form the category \(\Coalg(F)\).
Standard reference materials for coalgebra are~\cite{Rutten00,Gumm99,Rutten01}.

Having concrete examples of coalgebras to keep in mind will be useful.
Given \(F,G \colon \Cat \to \Cat\), we write \(F \times G\) and \(F + G\) for the (pointwise) product and coproduct endofunctors, wherever defined, and for a set \(A\) we define \(\Id^A = \prod_{a \in A} \Id\).
In \(\Set\), \(\Id^A X = X^A\) coincides with the set of functions \(A \to X\).

\begin{example}
    \label{eg:examples of coalgebras}
    Fix two sets \(A\) and \(B\), of \emph{input} and \emph{output letters}, respectively.
    \begin{enumerate}
        \item \label{eg:it:B-colouring}
        A \emph{\(B\)-colouring} is a \(\Delta_B\)-coalgebra, i.e., the same data as an arrow \(X \to B\).
        
        \item \label{eg:it:transition} 
        A \emph{transition system} is a \(\Pow\)-coalgebra, where \(\Pow \colon \Set \to \Set\) is the powerset functor. 
        For this functor, the action of \(\Pow\) on a function \(f \colon X \to Y\) takes the value of the image map \(\Pow(f) \colon \Pow X \to \Pow Y\) defined by \(\Pow(f)(U) = f(U) = \{f(x) \mid x \in U\}\) for each \(U \in \Pow X\).
        Given a transition system \((X, \beta)\), \(x, y \in X\), we write \(x \to y\) if \(y \in \beta(x)\).
        
        \item \label{eg:it:LTS}
        A \emph{labelled transition system} or (\emph{LTS}) is a \(\Pow(\Delta_B \times \Id)\)-coalgebra. 
        Given a LTS \((X, \beta)\), \(x, y \in X\), we write \(x \tr{a} y\) if \((a, y) \in \beta(x)\).
        
        
        \item 
        \label{eg:it:moore} 
        A \emph{Moore automaton in \(\Cat\)} is a \(\Delta_B \times \Id^A\)-coalgebra.
        If \(\Cat\) is a \emph{concrete} category~\cite{AdamekHH09}, like \(\Set\), \(\VSp\), and \(\Met\), then given a Moore automaton \((X, \langle o, \delta\rangle)\), i.e., \(o \colon X \to B\) and \(\delta \colon X \to X^A\), we write \(x \tr{a} y\) if \(\delta(x)(a) = y\) and \(x \downarrow b\) if \(o(x) = b\).
        In \(\Set\), and with \(B = 2 = \{0,1\}\), this is the same data as a deterministic automaton in the classical sense.
        Moore automata in \(\VSp\) with \(B = \RR\) are \emph{linear weighted automata} as studied in~\cite{Bonchi2012}.    
        

        
    \end{enumerate}
\end{example}

\noindent%
For more examples, see~\cite{Rutten00,Gumm99,Jacobs16,Adamek05Intro,Rutten2019book}.\smallskip

\noindent\emph{Final coalgebras.}
For \(F \colon \Set \to \Set\), a state in an \(F\)-coalgebra can be thought of as a process that exhibits \emph{behaviour} as it unfolds, intuitively understood.
Coalgebra homomorphisms are the behaviour preserving maps between coalgebras, and two states \(x \in X\) and \(y \in Y\) of \(F\)-coalgebras \((X, \beta_X)\) and \((Y, \beta_Y)\) are called \emph{behaviourally equivalent} if there is a third \(F\)-coalgebra \((Z, \beta_Z)\) and coalgebra homomorphisms \(h \colon (X, \beta_X) \to (Z, \beta_Z)\) and \(k \colon (Y, \beta_Y) \to (Z, \beta_Z)\) such that \(h(x) = k(y)\).
Behavioural equivalence is an equivalence relation, and a behavioural equivalence class is known as a \emph{behaviour}. 
When the behaviours form a set (instead of a proper class), that set carries the structure of a \emph{final \(F\)-coalgebra}.

\begin{definition}
    [Finality]
    \label{def:final}
    Given a full subcategory \(\Cov \subseteq \Coalg(F)\), an \(F\)-coalgebra \((Z, \zeta)\) is said to be \emph{final relative to \(\Cov\)} if \((Z, \zeta) \in \Cov\), and for any \((X, \beta) \in \Cov\), there is a unique coalgebra homomorphism \(\beh_\beta \colon (X, \beta) \to (Z, \zeta)\). 
    In the case where \(\Cov = \Coalg(F)\), \((Z, \zeta)\) is just called a \emph{final coalgebra}. 
\end{definition}

Note that an \(F\)-coalgebra is final relative to \(\Cov\) if it is a final object of \(\Cov\).
It was shown by Barr~\cite{Barr93} that if \(F \colon \Set \to \Set\) is \emph{accessible}, in the following sense, a final \(F\)-coalgebra exists.
Write \(\card{X}\) for the cardinality of a set \(X\), and recall that a cardinal \(\kappa\) is \emph{regular} if for any \(f \colon I \to \kappa\) with \(\bigcup_{i \in I} f(i) = \kappa\), \(\card I \ge \kappa\).

\begin{definition}
    [Accessible]
    \label{def:accessible}
    Let \(\kappa\) be a regular cardinal and \(F \colon \Set \to \Set\).
    Then \(F\) is \emph{\(\kappa\)-accessible} if for any set \(X\) and any \(t \in FX\), there exists \(U \subseteq X\) with \(\card U < \kappa\) such that \(t \in FU \subseteq FX\) (recall that \(F\) preserves inclusions).
    A functor is \emph{accessible} if it is \(\kappa\)-accessible for some regular cardinal \(\kappa\).
\end{definition}

Most endofunctors of interest in practical applications are accessible~\cite{Rutten00}.
Accessibility is a ``smallness'' condition that guarantees the behaviours form a set.
See also~\cite{KawaharaM00,Rutten00,Gumm2000,AdamekP01} for other conditions that deliver a final \(F\)-coalgebra.
%
In one of our main results (\cref{thm:coax}) a slightly different smallness condition is more convenient, as well as a related notion of cardinality for functors.

\begin{definition}
    [Local Cardinality]
    \label{def:small}
    Let \(\kappa\) be any cardinal and \(F \colon \Set \to \Set\).
    We say \(F\) is \emph{\(\kappa\)-small} if for any set \(X\) and any \(t \in FX\), there exists \(U \subseteq X\) with \(\card U \le \kappa\) such that \(t \in FU \subseteq FX\).
    A functor is \emph{small} if it is \(\kappa\)-small for some \(\kappa\).
    The \emph{local cardinality} \(\locCard{F}\) of a small \(F\) is the least \(\kappa\) such that \(F\) is \(\kappa\)-small.
\end{definition}

For example, \(\locCard{\Id} = 1\), and more generally, \(\locCard{\Id^A} = \card{A}\), because for any \(\theta \colon A \to X\), \(\card{\{\theta(x) \mid x \in X\}} \le \card{A}\).
The finite powerset functor \(\Pow_{\mathsf{fin}}\) and the countable powerset functor \(\Pow_\omega\) are \(\aleph_0\)-small, but the (unbounded) powerset functor \(\Pow\) is not small at all.
The difference between \(\kappa\)-small and \(\kappa\)-accessible is visible with \(\Pow_\omega\), which is \(\aleph_0\)-small but not \(\aleph_0\)-accessible (it is \(\aleph_1\)-accessible instead).
However, there is no largest regular cardinal, so a functor is small if and only if it is accessible.\smallskip

\noindent\emph{Covarieties and coequations.}
When manipulating and reasoning about multiple coalgebras at a time, there are a few basic constructions used frequently.

\begin{definition}[Covariety]
    \label{def:covariety}
    Let \(F \colon \Cat \to \Cat\) be an endofunctor.
    \begin{itemize}
        \item Let \(I\) be a set and \((X_i, \beta_i)\) be an \(F\)-coalgebra for each \(i \in I\) such that the coproduct \(\coprod_{i \in I} X_i\) exists in \(\Cat\). 
        Let \(\incl_i \colon X_i \to \coprod_{i\in I} X_i\) be the inclusion arrow for each \(i \in I\) and write \([f_i]_{i\in I}\) for the arrow induced by the universal property of coproducts applied to a family of arrows \(f_i \colon X_i \to Y\). 
        We obtain the \emph{coproduct formula} \(\coprod_{i \in I} (X_i, \beta_i) \cong (\coprod_{i \in I} X_i, [F(\incl_i) \circ \beta_i]_{i \in I})\).%

        \item A \emph{regular epi} is an arrow \(q \colon X \twoheadrightarrow Q\) for which there exists a pair \(l,r \colon R \rightrightarrows X\) such that \(R \rightrightarrows X \xrightarrow{q} Q\) is a coequalizer diagram~\cite{AdamekHH09}.
        A \emph{homomorphic image} of \((X, \beta)\) is an \(F\)-coalgebra \((Q, \beta_Q)\) with a regular epi \(q \colon X \twoheadrightarrow Q\) (in \(\Cat\)) such that \(q \colon (X, \beta) \to (Q, \beta_Q)\) is a coalgebra homomorphism.
        
        \item A \emph{subcoalgebra} of \((X, \beta)\) is an \(F\)-coalgebra \((U, \beta_U)\) with a monic arrow \(\incl_U \colon U \hookrightarrow X\) that is a coalgebra homomorphism \(\incl_U \colon (U, \beta_U) \to (X, \beta)\). 
    \end{itemize}
    A \emph{covariety} is a full subcategory \(\Cov \subseteq \Coalg(F)\) that contains all coproducts, subcoalgebras, and homomorphic images of \(F\)-coalgebras in \(\Cov\).%
    \footnote{%
        \label{fn:regularity}%
        This definition of covariety differs slightly from~\cite{Hughes2002,Goldblatt05}, where the underlying monics are regular instead of the epis.
        The two definitions coincide in \(\Set\), \(\VSp\), and any topos~\cite[IV.1.2,IV.5.3]{maclane1992sheaves} or Abelian category~\cite{maclane1978categories}, where all monics/epis are regular.
    }
\end{definition}

Regular epis are the categorified version of quotient maps.
In \(\Set\), \(\VSp\), and many other concrete categories, every arrow \(f \colon X \to Y\) factors into a regular epi followed by a mono, \(f = \iota \circ q \colon X \twoheadrightarrow U \hookrightarrow Y\), and then \(U\) is called the \emph{image of \(X\) under \(f\)}.
Note that since \(F\) preserves monics, \(\beta_U\) and \(\beta_Q\) are uniquely determined by \(\incl_U\) and \(q\) respectively in \cref{def:covariety}. 


\begin{definition}[Predicate Coequation]
    \label{def:coequation}
    Let \(F\colon \Cat \to \Cat\) and \(K\) be an object of \(\Cat\). 
    Let \((Z_K, \langle\ell, \zeta_K\rangle)\) be a final coalgebra for the functor \(\Delta_K \times F\).
    Then a \emph{coequation} is a subobject \(C \subseteq Z_K\). 
    An \(F\)-coalgebra \((X, \beta)\) is said to \emph{satisfy} \(C\) if for any \(k \colon X \to K\) (called a \emph{colouring}; see \cref{eg:examples of coalgebras}\eqref{eg:it:B-colouring}), the unique \(\Delta_K\times F\)-coalgebra homomorphism \(\beh_k \colon (X, \langle k, \beta\rangle) \to (Z_K, \langle \ell, \zeta_K\rangle)\) factors through the inclusion \(C \hookrightarrow Z_K\).
    We write \(\Coalg(F; C)\) for the class of \(F\)-coalgebras satisfying \(C\), and say that \(C\) is a \emph{coequational coaxiomatization} of \(\Coalg(F; C)\).

    The object \(K\) is called the \emph{colour palette}, and the \(F\)-coalgebra \((Z_K, \zeta_K)\) obtained from \((Z_K, \langle \ell, \zeta_K\rangle)\) is the \emph{cofree \(F\)-coalgebra in \(K\) colours}.
\end{definition}

A cofree coalgebra in \(K\) colours is the same data as an \(F\)-coalgebra \((X, \beta)\) equipped with an arrow \(k \colon X \to K\). 
This map \(k\) can be thought of as ``colouring'' the states in \(X\), and the behaviour \(\beh_k(x) \in Z_K\) as the unfurling of the process \(x\) with state names replaced by the colours of each state recorded as it progresses. 
Elements of \(Z_K\) are often called \emph{behaviour patterns in \(K\) colours}. 

For certain \(F\)-coalgebras with \(F \colon \Set\to \Set\), covarieties are precisely the classes of \(F\)-coalgebras coaxiomatized by coequations.
This is the content of \cref{thm:rutten-gumm} below, due to Rutten~\cite{Rutten00} and later refined~\cite{Gumm2000} and generalized~\cite{AdamekP03}.

\begin{theorem}[\cite{Rutten00,Gumm2000}]
    \label{thm:rutten-gumm}
    Let \(F \colon \Set \to \Set\) be accessible. 
    Then a class of \(F\)-coalgebras \(\Cov\) is a covariety if and only if there is a colour palette \(K\) such that \(\Cov = \Coalg(F;C)\) for some coequation \(C \subseteq Z_K\).
\end{theorem}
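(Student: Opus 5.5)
We prove the two directions separately. Throughout, accessibility of \(F\) gives accessibility of \(\Delta_K \times F\) for every set \(K\). By Barr's theorem the final \(\Delta_K\times F\)-coalgebra \((Z_K, \langle \ell, \zeta_K\rangle)\) therefore exists, so \cref{def:coequation} makes sense.

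\emph{Coequations define covarieties.} Fix \(C \subseteq Z_K\).

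For subcoalgebras, let \(\incl_U \colon (U,\beta_U)\hookrightarrow (X,\beta)\) with \((X,\beta)\in\Coalg(F;C)\). Any colouring \(k \colon U \to K\) extends to \(k' \colon X \to K\), provided \(U\neq\emptyset\) or \(K\neq\emptyset\); the degenerate empty cases are trivial. Then \(\beh_{k'}\circ\incl_U\) is a \(\Delta_K\times F\)-homomorphism out of \((U,\langle k,\beta_U\rangle)\). By finality it equals \(\beh_k\), so \(\beh_k\) lands in \(C\).

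For coproducts, a colouring of \(\coprod X_i\) restricts to colourings \(k_i\) of each \(X_i\). By finality, \(\beh_k\circ \incl_i = \beh_{k_i}\), and each of these lands in \(C\).

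For homomorphic images \(q\colon X\twoheadrightarrow Q\), a colouring \(k\) of \(Q\) gives the colouring \(k\circ q\) of \(X\), and \(\beh_k\circ q=\beh_{k\circ q}\) by finality. Since \(q\) is surjective, the image of \(\beh_k\) equals the image of \(\beh_{k\circ q}\), which lies in \(C\).

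\emph{Covarieties are coequational.} Let \(\Cov\) be a covariety, and let \(F\) be \(\kappa\)-accessible. The main obstacle is to choose the palette \(K\) large enough that a single coequation captures \(\Cov\).

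First, I would show that every coalgebra \((X,\beta)\) is the union of subcoalgebras generated by single points, each of cardinality below some fixed \(\lambda\) depending only on \(\kappa\). To build them, close \(\{x\}\) under the map \(t\mapsto\) (a chosen \(U\) with \(t\in FU\), \(\card U<\kappa\)) for \(\omega\), or \(\kappa\), many steps. This yields a subset \(S\) with \(\beta(S)\subseteq FS\). Because \(F\) preserves inclusions, \(S\) is a subcoalgebra of size at most \(\lambda\).

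Next, take \(K\) of cardinality \(\lambda\) and put
\[
  C=\bigcup\{\operatorname{im}(\beh_k) \mid (X,\beta)\in\Cov,\ k\colon X\to K\}.
\]
Each image is a subcoalgebra of \(Z_K\), so \(C\) is a subset of \(Z_K\). Every member of \(\Cov\) satisfies \(C\) by construction, so \(\Cov\subseteq\Coalg(F;C)\).

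For the converse, let \((X,\beta)\) satisfy \(C\), and let \(S\subseteq X\) be a point-generated subcoalgebra of size at most \(\lambda\). Choose an injective colouring \(k\colon X\to K\) on \(S\); if \(S\) does not exhaust \(K\), extend \(k\) arbitrarily elsewhere. Then \(\beh_k\) restricted to \(S\) is injective, since the colour component \(\ell\) recovers \(k\). By hypothesis \(\beh_k(S)\subseteq C\), and as \(S\) is small, \(\beh_k(S)\) is covered by the images of finitely or boundedly many members of \(\Cov\). Their coproduct \(Y\) lies in \(\Cov\), and its image in \(Z_K\) is a homomorphic image of \(Y\), hence also in \(\Cov\). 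That image contains a subcoalgebra isomorphic to \(S\), because the subcoalgebra generated by \(\beh_k(S)\) is the image of \(S\) itself. So \(S\in\Cov\) by closure under subcoalgebras.

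Finally, \(X\) is a homomorphic image of the coproduct of its point-generated subcoalgebras, via the codiagonal, which is surjective. Hence \(X\in\Cov\).

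The delicate step is the injectivity and size argument. If \(\beh_k(S)\) is not directly covered by a single image, I would instead take the union of the relevant images, which is again a homomorphic image of a coproduct of members of \(\Cov\), and argue the same way.
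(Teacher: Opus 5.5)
Your argument is correct. The paper gives no proof of its own here; it simply cites \cite{Rutten00,Gumm2000}. What you have written is essentially the standard Rutten--Gumm argument: bounded point-generated subcoalgebras, an injective colouring into a large enough palette, and \(X\) recovered as a quotient of the coproduct of its point-generated pieces. Your \(\lambda\) can be taken to be \(\max(\kappa,\aleph_0)\), which is consistent with the paper's remark that \(K=\kappa^+\) suffices.

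The one step to tighten is the vague phrase ``finitely or boundedly many''. Instead, pick one witness \((Y_s,\gamma_s)\in\Cov\), \(k_s\), \(y_s\) for each \(s\in S\) and take the \(S\)-indexed coproduct. To avoid choosing from a proper class, choose these witnesses among point-generated subcoalgebras of size at most \(\lambda\), taken up to isomorphism.
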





\begin{definition}[Chromatic Number]
    \label{def:chromatic number}
	The \emph{chromatic number} of a covariety \(\Cov\) is the smallest cardinal \(\kappa\) (if it exists) such that there is colour palette \(K\) with \(\card{K} = \kappa\) and a coequation \(C \subseteq Z_K\) such that \(\Cov = \Coalg(F;C)\).
\end{definition}

The chromatic number of a covariety is related to the number of propositional variables needed to express a given frame condition as a modal formula (see \cref{rem:frame}): given a set of propositional variables \(\mathbb P\), a Kripke model is a coalgebra \(X \to \Pow\mathbb P \times \Pow X\). 
So, taking \(C = \Pow\mathbb P\), a frame condition in \(n\) propositional variables defines a coequation in \(2^n\) colours.
Chromatic numbers can also be thought of as dual to the numbers of variables needed to axiomatize varieties of algebras with equations. 
For example, the equational theory of monoids requires at most three variables to axiomatize.
Closely reading the proofs of~\cite[Thm.~17.5]{Rutten00} and~\cite[Thm.~4.1]{AdamekP01} reveals that for \(\kappa\)-accessible \(F\), we can take \(K = \kappa^+\) (the least cardinal greater than \(\kappa\)) in \cref{thm:rutten-gumm}.


\section{Path Constraints}
\label{sec:path constraints}

Typically, one models systems as coalgebras by finding a functor \(F\) for which the category of systems is precisely \(\Coalg(F)\).
But there are cases where the full category of \(F\)-coalgebras is too wide in scope. 
The example that initially motivated the current work comes from a paper of Boreale~\cite{Boreale19}, where partial differential equations find a coalgebraic model in terms of Moore automata (\cref{eg:examples of coalgebras}\eqref{eg:it:moore}).
Boreale's coalgebraic model of PDEs specifically requires \emph{commutativity}, a property of Moore automata that can be described locally as a constraint on paths of length \(2\).
See~\cref{eg:commutative Moore automata} for details.
In this section, we generalize the commutativity conditions of Boreale to a broader class of properties of \(F\)-coalgebras called \emph{path constraints}.
We develop some of the basic theory of path constraints and discuss a number of examples.

To start with, we need an abstract notion of \emph{path through an \(F\)-coalgebra}.
This is intuitive in the examples of \(F\)-coalgebras we saw in the previous subsection (automata, labelled transition systems, etc.), but for arbitrary coalgebras the right notion of path is more subtle.
Here, we treat a path through an \(F\)-coalgebra as a series of \emph{steps} obtained by repeatedly unfolding the coalgebra.

\begin{definition}
    \label{def:steps}
    For a natural number \(n \in \NN\), \(F^n\) is defined recursively by \(F^{0} = \Id\) and \(F^{n+1} = F \circ F^n\).
    Given an \(F\)-coalgebra \((X, \beta)\), we write \(\beta^n\colon X \to F^nX\) for the arrow recursively defined by \(\beta^0 = \id{X}\) and \(\beta^{n + 1} = F(\beta^n) \circ \beta\) for any \(n \in \NN\).
    We refer to \((X, \beta^n)\) as the \emph{\(n\)-step coalgebra} of \((X, \beta)\).
\end{definition}

Intuitively, \(\beta^n(x)\) records all of the paths of length \(n\) exiting the state \(x \in X\). 
The idea behind path constraints is that they are predicates on the possible paths of \(F\)-coalgebras that are preserved by the action of \(F\) on functions.
Soon we will see a broad definition that allows for relations \emph{between} paths instead of just predicates on paths of a single length, but we begin at the predicate level for the sake of scaffolding.

Let us say that a \emph{subfunctor} of \(F\) is a functor \(E\) with a natural transformation \(e \colon E \Rightarrow F\) such that \(e_X\) is monic for every object \(X\) of \(\Cat\).
We write \(e \colon E \subseteq F\) to express that \(E\) is a subfunctor of \(F\) equipped with \(e\).

\begin{definition}
    \label{def:singular path constraint} 
    Given \(n \in \NN\), a \emph{singular path constraint} \emph{(of length \(n\))} is a subfunctor \(e \colon E \subseteq F^n\).
    An \(F\)-coalgebra \((X, \beta)\) \emph{satisfies} \(E\) 
    if the \(n\)-step coalgebra of \((X, \beta)\) factors through \(e\), i.e., there exists \(\e_\beta \colon X \to EX\) such that the diagram
    \begin{equation}
        \label{eq:satisfy singular path constraint}
        \begin{tikzcd}[row sep = 0.5em]
            X 
            	\ar[r, "\beta"] 
            	\ar[drr, dashed, to path={ 
            		-- ([yshift=-3ex]\tikztostart.south) -> node[below, pos=0.5] {\(\e_\beta\)} (\tikztotarget)
            	}, rounded corners = 5pt]
            	\ar[rrrr, to path={ -- ([yshift=2ex]\tikztostart.north) -| node[above, pos=0.25] {\(\beta^n\)} (\tikztotarget)}, rounded corners=5pt]
            & FX \ar[r, "F(\beta)"]
            & F^2X \ar[r, "F^2(\beta)"]
            & \cdots \ar[r]
            & F^n X\\
            && EX \ar[urr, hook, to path={ 
            		-- ([xshift=20ex]\tikztostart.east)  node[below, pos=0.5] {\(e_X\)} -> (\tikztotarget)
            	}, rounded corners = 5pt]
        \end{tikzcd}
        \qquad
        e_X \circ \e_\beta = \beta^n
    \end{equation}
    commutes.
    We write \(\Coalg(F, E)\) for the full subcategory of \(\Coalg(F)\) consisting of \(F\)-coalgebras that satisfy \(E\).
\end{definition}

Intuitively, singular path constraints dictate how paths of a single fixed length can behave. 
Note that since \(e_X\) is monic, the factorization in~\eqref{eq:satisfy singular path constraint} is unique.

\begin{example}
    \label{eg:trivial examples}
    Before getting into the interesting examples, it is always worth mentioning some trivial ones. 
    For an initial object \(\emptyset\) of \(\Cat\), one always has the empty path constraint \(\Delta_\emptyset \subseteq F^0 = \Id\) of length \(0\) (or of any length, really).
    Only the empty \(F\)-coalgebra \((\emptyset, ?)\) with \(? \colon \emptyset \to F\emptyset\) satisfies \(\Delta_\emptyset\). 
    There is also the trivial path constraint \(\id{} \colon E = F^0\) of length \(0\) (again, any length will do).
    Every \(F\)-coalgebra satisfies the trivial path constraint. 
\end{example}

\begin{example}
    \label{eg:commutative Moore automata}
    Commutativity of two inputs to a Moore automaton (\cref{eg:examples of coalgebras}\eqref{eg:it:moore}) is a singular path constraint of length \(2\).
    Let \(F = \Delta_B \times \Id^A\).
    Given an input letter \(a \in A\), define the \emph{\(a\)-derivative} \(\der a \colon F \Rightarrow \Id\) by \((\der a)_X(p, \theta) = \theta(a)\) for any set \(X\) and \((p, \theta) \in FX\).
    Note that \((\der a)_X\) is natural in \(X\).
    Given \(a,b \in A\), we define the singular path constraint
    \[
        EX = \big\{(p, \Theta) \in F^2X \mid (\der a)_{X} \circ (\der b)_{FX}(p, \Theta) = (\der b)_{X} \circ (\der a)_{FX} (p, \Theta)\big\}
    \]
    For brevity, we tend to drop subscripts and \(\circ\) between derivatives.
    Observe that \(E\) is a subfunctor of \(F^2\), and that for a Moore automaton \((X, \beta)\), \(\beta^2\) factors through \(EX\) if and only if \(\der a \der b\beta^2 = \der b \der a \beta^2\).
    Writing \(\beta = \langle o, \delta\rangle\), satisfaction of \(E\) is equivalent to 
    \if\arxiv1
    \begin{gather*}
        \delta(\delta(x)(b))(a) = \delta(\delta(x)(a))(b)
        \qquad 
        \begin{tikzpicture}[yscale=0.6]
           \node[state] (x) at (0, 0) {\(x\)};
           \node[state, dotted] (u) at (1.5, 0.5) {\phantom{\(u\)}};
           \node[state, dotted] (v) at (1.5, -0.5) {\phantom{\(v\)}};
           \node[state] (y) at (3, 0) {\phantom{\(y\)}};
           \draw (x) edge[bend left] node[above] {\(a\)} (u);
           \draw (x) edge[bend right] node[below] {\(b\)} (v);
           \draw (u) edge[bend left] node[above] {\(b\)} (y);
           \draw (v) edge[bend right] node[below] {\(a\)} (y);
       \end{tikzpicture}
    \end{gather*}
    \else
    \(\delta(\delta(x)(b))(a) = \delta(\delta(x)(a))(b)\)
    \fi
    for all \(x \in X\).
    That is, the actions of inputting \(a\) and \(b\) \emph{commute} in \((X, \beta)\).
\end{example}

\begin{example}
    \label{eg:determinism}
    Given a LTS \((X, \beta)\) (see \cref{eg:examples of coalgebras}\eqref{eg:it:LTS}), we say that it is \emph{deterministic} if \(\card{\{ y \mid x \tr{a} y\}} = 1\) for all \(x \in X\) and \(a \in A\).
    This is a path constraint of length \(1\).
    Given a function \(f \colon A \to B\), write \(\Grph(f) = \{(a, f(a)) \mid a \in A\}\).
    Then the subfunctor \(EX = \{\Grph(f) \mid f \colon A \to X\} \subseteq \Pow(A \times X)\) is the path constraint for determinism of LTSs.
    Clearly, \(\Coalg(\Pow(A \times \Id), E) \cong \Coalg(\Id^A)\).
\end{example}

\noindent\emph{Nonsingular path constraints.}
There are many familiar examples of path-based properties that involve paths of multiple lengths.
For instance, \emph{transitivity} of a transition system (\cref{eg:transitivity} below) mentions paths of length \(2\) and of length \(1\): ``if there is a \emph{two}-step path \(x \to y \to z\), then there is a \emph{one}-step path \(x \to z\)''.
Transitivity can be captured by the generalization of singular path constraints below, which includes constraints on relations between paths of varying lengths.

\begin{definition}
    \label{def:path constraint}
    The set \(\Shape(F)\) of \emph{path shapes over \(F\)} is the collection of functors built from the formation rules below (where \(I\) ranges over any set),
    \[
        J, J_i, K ::= \Id \mid F \mid \textstyle\prod_{i \in I} J_i \mid K \circ J
    \]
    Given an \(F\)-coalgebra \((X, \beta)\) and \(J \in \Shape(F)\), the \emph{\(J\)-path coalgebra} \((X, \beta^{J})\) is defined recursively by 
    \(\beta^\Id = \id{X}\), 
    \(\beta^{F} = \beta\), 
    \(\beta^{\prod J_i} = \langle \beta^{J_i}\rangle_{i \in I} \colon X \to \prod_{i \in I} J_iX\), and 
    \(\beta^{K\circ J} = K(\beta^J) \circ \beta^K \colon X \to KX \to KJX\).
    A \emph{\(J\)-path constraint} (or \emph{\(J\)-constraint}) is a subfunctor \(e \colon E \subseteq J\). 
    An \(F\)-coalgebra \((X, \beta)\) \emph{satisfies} \(E\) if \(\beta^{J}\) factors through \(e\).
    We write \(\Coalg(F, E)\) for the full subcategory of \(F\)-coalgebras that satisfy \(E\).
\end{definition}

\begin{example}
    \label{eg:transitivity}
    A transition system \((X, \beta)\) (\cref{eg:examples of coalgebras}\eqref{eg:it:transition}) is \emph{transitive} if \(x \to y \to z\) implies \(x \to z\).
    Transitivity is a \((\Pow \times \Pow^2)\)-constraint. 
    To see this, let \(EX = \{(U, \Phi) \mid \bigcup \Phi \subseteq U\} \subseteq \Pow X \times \Pow^2 X\).
    In a transition system \((X, \beta)\) with \(x \in X\), we have 
    \(\beta^2(x) = \{V \subseteq \Pow X \mid \exists y \in X,~ x \to y \text{ and } V = \beta(y)\}\).
    Then we have \(\bigcup \beta^2(x) = \{z \mid \exists y \in X,~ x \to y \to z\}\), 
    so \((\beta(x), \beta^2(x)) \in EX\) if and only if
    \[
        \{z \mid \exists y \in X,~ x \to y \to z\} 
        = \bigcup \beta^2(x) 
        \subseteq \beta(x) 
        = \{z \mid x \to z\}
    \]
    for all \(x \in X\).
    This is equivalent to \((X, \beta)\) being transitive.
\end{example}

\if\arxiv1
We will see more examples of nonsingular path constraints shortly. 
For now, let us record the following technical lemma that will be used throughout the paper. 

\begin{restatable}{lemma}{jpathfunctorial}
    \label{lem:always a homom}
    For any \(J \in \Shape(F)\), the mapping defined by 
    \((X, \beta)^J = (X, \beta^J)\) for any \(F\)-coalgebra \((X, \beta)\) 
    and 
    \((h)^J = h\) for any coalgebra homomorphism \(h \colon (X, \beta) \to (Y, \gamma)\)
    is a functor \[(-)^J \colon \Coalg(F) \to \Coalg(J)\] that preserves and reflects colimits, homomorphic images, and the subcoalgebra relation.
\end{restatable}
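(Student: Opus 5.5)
The plan is to reduce everything to one fact: $(-)^J$ commutes with the forgetful functors $U_F \colon \Coalg(F) \to \Cat$ and $U_J \colon \Coalg(J) \to \Cat$. Concretely, $U_J \circ (-)^J = U_F$ on objects and arrows. The real content is therefore that $(-)^J$ is well defined on arrows. Once that holds, functoriality is automatic, because identities and composites are computed in $\Cat$.

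\emph{Step 1: homomorphisms are sent to homomorphisms.} Fix an $F$-homomorphism $h \colon (X,\beta) \to (Y,\gamma)$, so $F(h)\circ\beta = \gamma\circ h$. I will prove $J(h)\circ\beta^J = \gamma^J\circ h$ by structural induction on $J \in \Shape(F)$.
\begin{itemize}
\item For $J = \Id$ and $J = F$ this is trivial or is exactly the hypothesis.
\item For $J = \prod_{i\in I} J_i$, I compose with each projection. Naturality of the projections gives $\pi_i\circ(\prod J_i)(h) = J_i(h)\circ\pi_i$, and the induction hypothesis for each $J_i$ then yields the equation componentwise.
\item For $J = K\circ J'$, I compute
\[
(KJ')(h)\circ K(\beta^{J'})\circ\beta^K = K\bigl(J'(h)\circ\beta^{J'}\bigr)\circ\beta^K = K(\gamma^{J'})\circ K(h)\circ\beta^K = K(\gamma^{J'})\circ\gamma^K\circ h.
\]
The first equality is functoriality of $K$. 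The second uses the hypothesis for $J'$, and the third uses the hypothesis for $K$. Note that $K$ is itself a shape, so the induction must be over the formation rules and not only over a length.
\end{itemize}
This composite case is the only place needing care. I expect it to be the main, though modest, obstacle: one must make sure the induction hypothesis is applied to the shape $K$ and not merely to $F$.

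\emph{Step 2: colimits.} I will use the standard fact that for any endofunctor $G$ the forgetful functor $U_G\colon\Coalg(G)\to\Cat$ creates colimits, so a cocone in $\Coalg(G)$ is a colimit iff its image under $U_G$ is a colimit in $\Cat$. The coproduct formula in \cref{def:covariety} is an instance of this.
\begin{itemize}
\item For preservation: if a cocone is a colimit in $\Coalg(F)$, its image under $U_F = U_J\circ(-)^J$ is a colimit in $\Cat$. Since $U_J$ creates colimits, the image cocone is then a colimit in $\Coalg(J)$.
\item For reflection: if the image cocone is a colimit in $\Coalg(J)$, then applying $U_J$ shows the underlying cocone is a colimit in $\Cat$. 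By creation along $U_F$ (the cocone's structure maps are already $F$-homomorphisms), it is a colimit in $\Coalg(F)$.
\end{itemize}

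\emph{Step 3: homomorphic images and subcoalgebras.} Both notions in \cref{def:covariety} consist of a coalgebra homomorphism whose underlying arrow is a regular epi, respectively a monic, in $\Cat$. By Step 1, $(-)^J$ sends an $F$-homomorphism to a $J$-homomorphism with the same underlying arrow. Hence an $F$-homomorphism $q$ exhibits a homomorphic image (respectively, a subcoalgebra) in $\Coalg(F)$ iff $(q)^J$ does in $\Coalg(J)$, because the extra condition on $q$ lives entirely in $\Cat$ and is unchanged. I read ``preserves and reflects'' in this sense, i.e.\ relative to arrows already in $\Coalg(F)$. A bare $J$-homomorphism need not be an $F$-homomorphism (take $J=\Id$), so no stronger reading can hold.
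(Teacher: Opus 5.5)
Your proposal is correct and follows essentially the same route as the paper: a structural induction on $J$ (using the hypothesis for both factors in the composite case) to show that $F$-homomorphisms are $J$-homomorphisms, then the identity $U^J \circ (-)^J = U^F$ together with creation of colimits by the forgetful functors, which gives the colimit, homomorphic-image, and subcoalgebra claims. Your remark that ``reflects'' must be read relative to arrows already in $\Coalg(F)$ is a sensible precision that the paper leaves implicit in its appeal to the commuting triangle.
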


\else
\medskip 
\fi

\noindent\emph{Equational path constraints.}
It was mentioned in the introduction that path constraints can often be described as equations.
Informally, for \(J \in \Shape(F)\) and \(e \colon E \subseteq J\), what this means is that the elements of \(EX\) are precisely the \(t \in JX\) that satisfy some identity \(\lambda(t) = \rho(t)\), where \(\lambda\) and \(\rho\) are two different ways to extract some value from \(t\).
We make this more precise in \cref{def:equational path constraint} below, which captures \(\lambda\) and \(\rho\) abstractly as natural transformations. 

\begin{definition}
    \label{def:equational path constraint}
    Given \(J \in \Shape(F)\), a \(J\)-path constraint \(e \colon E \subseteq J\) is said to be \emph{equational over \(H\)} if there are two natural transformations \(\lambda,\rho \colon J \Rightarrow H\), called \emph{left} and \emph{right} respectively, such that \(e_X\) is an equalizer of \(\lambda_X,\rho_X \colon JX \rightrightarrows HX\) for all \(X\).
    We say that \((X, \beta)\) \emph{satisfies \(\lambda \equiv \rho\)} if it satisfies the \(J\)-constraint \(E\) given by the left and right transformations \(\lambda,\rho \colon J \Rightarrow H\).
\end{definition}

Note that a path constraint can be equational over multiple functors.
From \cref{def:equational path constraint}, we immediately obtain the following lemma.

\begin{lemma}
    \label{lem:equational}
    Let \(F\) be an endofunctor on \(\Cat\), \(\lambda,\rho \colon J \Rightarrow H\).
    Then an \(F\)-coalgebra \((X, \beta)\) satisfies \(\lambda \equiv \rho\) if and only if \(\lambda_X \circ \beta^J = \rho_X \circ \beta^J\).
\end{lemma}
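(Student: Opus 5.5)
The argument is a direct unwinding of the definitions: satisfaction of the constraint means factorization through the equalizer, and factorization through an equalizer means equalizing the parallel pair. So I would prove the two directions separately, using only \cref{def:path constraint} and \cref{def:equational path constraint} together with the universal property of equalizers.

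Fix \(J \in \Shape(F)\), \(\lambda,\rho \colon J \Rightarrow H\), and let \(e \colon E \subseteq J\) be the \(J\)-constraint they determine. By \cref{def:equational path constraint}, \(e_X \colon EX \to JX\) is an equalizer of \(\lambda_X,\rho_X\) for every \(X\). By \cref{def:path constraint}, \((X,\beta)\) satisfies \(\lambda \equiv \rho\) exactly when there is an arrow \(\e \colon X \to EX\) with \(e_X \circ \e = \beta^J\).

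\emph{Forward direction.} Suppose such a factorization \(\e\) exists. Then
\[
\lambda_X \circ \beta^J = \lambda_X \circ e_X \circ \e = \rho_X \circ e_X \circ \e = \rho_X \circ \beta^J,
\]
where the middle equality holds because \(e_X\) equalizes \(\lambda_X\) and \(\rho_X\).

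\emph{Backward direction.} Suppose \(\lambda_X \circ \beta^J = \rho_X \circ \beta^J\). Then \(\beta^J \colon X \to JX\) equalizes the pair \(\lambda_X,\rho_X\). The universal property of the equalizer \(e_X\) therefore gives a unique \(\e \colon X \to EX\) with \(e_X \circ \e = \beta^J\). This is precisely the factorization that \cref{def:path constraint} requires, so \((X,\beta)\) satisfies \(\lambda \equiv \rho\).

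There is no real obstacle. The only point to check is that the componentwise equalizer condition in \cref{def:equational path constraint} is applied at the carrier \(X\) of the coalgebra. Naturality of \(\lambda\) and \(\rho\) is not needed for this lemma; it only matters for \(E\) to be a subfunctor. Uniqueness of the factorization, though not needed for the equivalence, follows from \(e_X\) being monic, as already noted after \cref{def:singular path constraint}.
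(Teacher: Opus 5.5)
Your proof is correct and is exactly the definitional unwinding the paper intends; the paper gives no separate proof, saying only that the lemma is immediate from \cref{def:equational path constraint}. Your forward direction uses that \(e_X\) equalizes \(\lambda_X\) and \(\rho_X\), and your backward direction uses the universal property of the equalizer \(e_X\) at the carrier \(X\), which is precisely that argument.
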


Interestingly, for an accessible endofunctor \(F\) on \(\Set\), every path constraint is equational.
Fix a cardinal \(\kappa\). 
For a given set \(X\), call a set of functions \(S \subseteq \{f \mid \exists \lambda < \kappa,~f \colon X \to \lambda\}\) a \emph{\(\kappa\)-co-sieve on \(X\)} if for any \(f \colon X \to \lambda_1\) in \(S\) and any \(g \colon \lambda_1 \to \lambda_2\) with \(\lambda_2 < \kappa\), \(g \circ f \in S\).
We define an endofunctor \(\Omega_\kappa \colon \Set \to \Set\) by
\(
    \Omega_\kappa X = \{S \mid \text{\(S\) is a \(\kappa\)-co-sieve on \(X\)}\}
\) and
\(
    \Omega_\kappa(h)(S) = \{g \mid g \circ h \in S\}
\) for any set \(X\) and function \(h \colon X \to Y\).
The general idea behind the following theorem is that \(\Omega_\kappa\) mimics the canonical construction of a subobject classifier of \(\Set^{\Set_\kappa^\op}\), where \(\Set_\kappa\) is the category of sets of cardinality at most \(\kappa\)~\cite{maclane1992sheaves}.

\begin{restatable}{theorem}{weaklyaccessibleequational}
    \label{thm:acc set eq}
    Let \(F\) be an accessible endofunctor on \(\Set\), \(J \in \Shape(F)\), and \(e \colon E \subseteq J\).
    Then \(E\) is equational over \(\Omega_\kappa\) for some regular cardinal \(\kappa\).
\end{restatable}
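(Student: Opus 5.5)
The plan is to realise \(E\) as the equalizer of a ``characteristic map'' \(\chi \colon J \Rightarrow \Omega_\kappa\) and the constant ``top'' transformation \(\top \colon J \Rightarrow \Omega_\kappa\). Here \(\top_X(t)\) is the co-sieve of \emph{all} functions \(X \to \lambda\) with \(\lambda < \kappa\). The characteristic map is
\[
  \chi_X(t) = \{ f \colon X \to \lambda \mid \lambda < \kappa,~ J(f)(t) \in E\lambda \},
\]
where \(E\lambda \subseteq J\lambda\) via \(e_\lambda\) (in \(\Set\) we may treat \(e\) as an inclusion, by the standing assumption).

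\emph{Step 1: well-definedness and naturality.} The set \(\chi_X(t)\) is a \(\kappa\)-co-sieve: if \(J(f)(t) \in E\lambda_1\) and \(g \colon \lambda_1 \to \lambda_2\), then naturality of \(e\) gives \(J(g)J(f)(t) \in E\lambda_2\). For \(h \colon X \to Y\) we compute
\[
  \chi_Y(J(h)(t)) = \{ g \mid J(g\circ h)(t) \in E \} = \Omega_\kappa(h)(\chi_X(t)),
\]
so \(\chi\) is natural. Naturality of \(\top\) is immediate, since \(g \circ h\) is always a function into \(\lambda\). The inclusion \(EX \subseteq \{ t \mid \chi_X(t) = \top_X(t) \}\) follows again from \(E\) being a subfunctor: \(t \in EX\) implies \(J(f)(t) \in E\lambda\) for every \(f\).

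\emph{Step 2: choosing \(\kappa\), and accessibility of path shapes.} We need every \(J \in \Shape(F)\) to be accessible. Argue by induction on the formation of \(J\). The functors \(\Id\) and \(F\) are accessible. For compositions, if \(t \in K(JX)\) then \(t \in K(V)\) with \(V \subseteq JX\) small. Each \(v \in V\) lies in some \(J(U_v)\) with \(U_v\) small, and we take \(U = \bigcup_v U_v\); this uses preservation of inclusions and a regular \(\kappa\) exceeding both cardinals involved. For products \(\prod_{i\in I} J_i\), take the union of the witnesses for the components, with \(\kappa > \card I\) regular. Since a shape is a finite term, only finitely many cardinals are involved, so we can choose an infinite regular \(\kappa\) larger than all of them. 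This step is routine but is where the care lies: \(\kappa\) must be chosen uniformly, and empty sets must be handled so that \(\Set\)-inclusions of \(\emptyset\) behave.

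\emph{Step 3: the reverse inclusion.} Suppose \(\chi_X(t) = \top_X(t)\). By Step 2 there is \(U \subseteq X\) with \(\card U < \kappa\) and \(t = J(i)(s)\) for some \(s \in JU\), where \(i \colon U \hookrightarrow X\). If \(X = \emptyset\), then \(X\) itself has cardinality \(< \kappa\); choose a bijection \(X \cong \lambda\) and conclude directly, using that a bijection \(f\) satisfies \(J(f)(t) \in E\lambda\) iff \(t \in EX\). Otherwise enlarge \(U\) by one point so that \(U \neq \emptyset\), which keeps \(\card U < \kappa\) since \(\kappa\) is infinite. Fix a bijection \(b \colon U \cong \lambda\) with \(\lambda < \kappa\), and a retraction \(r \colon X \to U\) with \(r \circ i = \id{U}\). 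Put \(f = b \circ r \colon X \to \lambda\). Because \(f \in \top_X(t) = \chi_X(t)\), we get
\[
  J(f)(t) = J(b)J(r)J(i)(s) = J(b)(s) \in E\lambda .
\]
Applying naturality of \(e\) to \(b^{-1}\) gives \(s \in EU\), and applying it to \(i\) gives \(t = J(i)(s) \in EX\). Hence \(EX\) is exactly the equalizer of \(\chi_X\) and \(\top_X\), so \(E\) is equational over \(\Omega_\kappa\) with left and right transformations \(\chi\) and \(\top\).

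The main obstacle is Step 2. The reduction in Step 3 depends on having a uniform regular \(\kappa\) such that every element of \(JX\) is supported on a subset of size \(< \kappa\), and this requires checking that accessibility survives the arbitrary-arity product clause of \(\Shape(F)\) together with composition.
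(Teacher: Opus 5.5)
Your proposal is correct and follows essentially the paper's proof. It uses the same pair $\chi^E, \top \colon J \Rightarrow \Omega_\kappa$, accessibility of $J$, a retraction of $X$ onto a small support $U$, and a bijection $U \cong \lambda$. Your explicit naturality check and your one-point enlargement of $U$ add welcome care, since the paper's retraction $X \to V$ tacitly needs $V \neq \emptyset$ when $X \neq \emptyset$.

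There is one slip in Step 2: shapes are not finite terms, because $\prod_{i \in I}$ ranges over arbitrary sets, so ``only finitely many cardinals are involved'' is false. The fix is easy: let the structural induction choose, at each product node, a regular cardinal above both $\card{I}$ and the supremum of the children's cardinals, e.g.\ the successor cardinal of their maximum.
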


The takeaway is that it is more useful to observe that a path constraint is equational \emph{over a particular functor \(H\)} than simply equational, generally.
This will come up again in \cref{thm:coax}. 
At the time of writing, the author is unaware of any example of a non-equational path constraint in \(\Set\).
For instance, transitivity is an equational path constraint even though \(\Pow\) is not accessible.

\begin{example}
    \label{eg:transitivity again}
    Recall the path constraint \(E\) capturing transitivity of transition systems in \cref{eg:transitivity}.
    This constraint is equational over \(\Pow\). 
    Let \(\lambda,\rho \colon \Pow \times \Pow^2 \Rightarrow \Pow\) be the natural transformations defined by \(\lambda_X(U, \Phi) = U\) and \(\rho_X(U, \Phi) = U \cup \bigcup \Phi\) for any set \(X\) and \((U, \Phi) \in \Pow X \times \Pow^2 X\). 
    In general, \(U_1 \subseteq U_2\) if and only if \(U_1 \cup U_2 = U_2\), so \((U, \Phi) \in EX\) (that is, \(\bigcup \Phi \subseteq U\)) if and only if 
    \[
        \lambda_X(U, \Phi)
        = U 
        = U \cup \bigcup \Phi 
        = \rho_X(U, \Phi) 
    \]
    To the reader familiar with modal logic~\cite{BlackburnRV2001}, this example can be generalized as follows:
    transitivity can be written as a \emph{frame condition}, which is a kind of shape of modal formulas describing a class of transition systems (in that context called \emph{Kripke frames}).
    Transitivity is given by the frame condition \(\Diamond\Diamond p \to \Diamond p\).
    Another simple example is reflexivity, which is captured by \(p \to \Diamond p\).
    Both examples correspond to equational path constraints over \(\Pow\).

    Let \(\NatForm\) be the set of \emph{natural} modal formulas, generated by the grammar \[
       \varphi,\psi ::= p \mid \bot \mid \varphi \vee \psi \mid \Diamond \varphi
    \]
    Here, \(p\) is meant to represent a fixed basic proposition.
    We define a functor \(K^\varphi\) and a natural transformation \(\sem{\varphi} \colon K^\varphi \Rightarrow \Pow\) for each \(\varphi \in \NatForm\) recursively by 
    \begin{gather*}
        \begin{aligned}
            K^p &= K^\bot = \Id 
            & \sem{p}_X(x) &= \{x\}
            \quad \sem{\bot}_X(x) = \emptyset
            \\
            K^{\varphi \vee \psi} 
                &= K^\varphi \times K^\psi 
            & \sem{\varphi\vee \psi}_X(U, V) 
                &= \sem{\varphi}_X(U) \cup \sem{\psi}_X(V) \
            \\
            K^{\Diamond \varphi} 
                &= \Pow \circ K^\varphi
            & \sem{\Diamond \varphi}_X(U) 
                &= \bigcup \Pow (\sem{\varphi}_X)(U) \\
        \end{aligned}  
    \end{gather*}
    Note that \(K^\varphi \in \Shape(\Pow)\) for all \(\varphi \in \NatForm\).
    Given \(\varphi,\psi \in \NatForm\), the frame condition \(\varphi \leftrightarrow \psi\) is equivalent to the constraint \(\lambda \equiv \rho\) obtained from the transformations 
    \(\lambda = \sem{\varphi} \circ \proj_{K^\varphi} \colon K^\varphi \times K^\psi \Rightarrow \Pow\) and
    \(\rho = \sem{\psi} \circ \proj_{K^\psi} \colon K^\varphi \times K^\psi \Rightarrow \Pow\).
    Now, the frame condition \(\varphi \to \psi\) is equivalent to \(\varphi \vee \psi \leftrightarrow \psi\), so every frame condition of the form \(\varphi \to \psi\) with \(\varphi,\psi \in \NatForm\) is equational over \(\Pow\).
\end{example}

\begin{remark}
	\label{rem:frame}
    The reader might have noticed that symmetry, captured by the frame condition \(p \to \Box \Diamond p\), is not included in \cref{eg:transitivity again}.
    This is because \cref{eg:transitivity again} only shows that frame conditions of the form \(\varphi \leftrightarrow \psi\) for \(\varphi,\psi \in \NatForm\) can be expressed as equational path constraints.
    The issue with \(\wedge\), \(\Box\), and \(\neg\) is that \(\cap \colon (\Pow X)^2 \to \Pow X\) and complement \((-)^c \colon \Pow X \to \Pow X\) are not natural in \(X\).
    
    Interestingly, symmetry can be expressed as an equational path constraint using a different method, and over \(\Pow^2\) instead of \(\Pow\).
    Take \(J = \Id \times \Pow^2\), and let \(\lambda, \rho \colon J \Rightarrow \Pow^2\) be defined as follows:
    given a pair \((x, \Phi) \in JX\), let \(\lambda_X(x, \Phi) = \proj_{\Pow^2X} (x, \Phi) = \Phi\) and 
    \(
        \rho_X(x, \Phi) = \{\{x\} \cup U \mid U \in \Phi\}
    \).
    Then a frame \((X, \beta)\) satisfies \(\lambda \equiv \rho\) if and only if for any \(x \in X\) and \(y \in \beta(x)\), \(\{x\} \cup \beta(y) = \beta(y)\).
    This is the same as saying that for any \(x,y \in X\), \(x \to y\) implies \(x \to y \to x\).
    \if\arxiv1 
        For details, see \cref{sec:appendix}. 
    \else 
        For details, see the full version~\cite{fullversion}.
    \fi
    We leave it open as to whether every frame condition is equational over \(\Pow\) (or more generally, \(\Pow^n\) for some \(n \in \NN\)).
\end{remark}

\begin{example}
    \label{eg:heat equation}
    Differential equations provide many interesting examples of equational path constraints.
    In physics, a \emph{heat equation} in two spatial variables \(x,y\) and a time variable \(t\) is a partial differential equation of the form
    \begin{equation}
        \label{eq:heat}
        \frac{df}{dt} = c\Big(\frac{d^2f}{dx^2} + \frac{d^2f}{dy^2}\Big)
    \end{equation}
    for some constant \(c \in \RR\).
    One way to model this equation coalgebraically follows \cref{eg:commutative Moore automata} to the extent that it uses the same language of derivatives and input variables, but it requires scaling by a constant \(c\) and adding derivatives together. 
    In other words, it requires a vector space structure. 
    
    Let \(\Id \colon \VSp \to \VSp\) be the identity functor on the category of vector spaces and linear maps, and let \(A = \{t, x, y\}\).
    The transformations \(\der a \colon \Id^A \Rightarrow \Id\) for \(a \in A\) are defined as in \cref{eg:commutative Moore automata} by \(\der a \theta = \theta(a)\).
    Then \eqref{eq:heat} can be translated into an equational path constraint on \(\Id^A\)-coalgebras with 
    \(\lambda, \rho \colon \Id^A \times (\Id^A)^A \Rightarrow \Id\) given by
    \(
        \lambda = \der t \circ \proj_{\Id^A}
    \) and \(
        \rho = c(\der x \der x + \der y \der y) \circ \proj_{(\Id^A)^A}
    \).
    Given a linear weighted automaton \((X, \langle o, \delta\rangle)\) (\cref{eg:examples of coalgebras}\eqref{eg:it:moore}), the path constraint \(\lambda \equiv \rho\) requires that
    \[
        \der t \beta(x) = c(\der x \der x \beta^2(x) + \der y\der y \beta^2(x))
    \]     
    for any \(x \in X\).
    For a concrete example, consider the vector space \(C^{\omega}\) of all analytic functions in three variables.
    Then \(C^\omega\) obtains the structure of an \(\Id^A\)-coalgebra \((C^\omega, \beta)\) where \(\beta \colon C^\omega \to (C^\omega)^A\) is given by partial derivatives \(\beta(f)(a) = \frac{d}{da}f\). 
    If we write \(C^\omega_{ht}\) for the set of analytic solutions to \eqref{eq:heat}, then \(C^\omega_{ht}\) is a subcoalgebra of \((C^\omega, \beta)\) that satisfies the equational path constraint \(\lambda \equiv \rho\). 
\end{example}

Commutativity of Moore automata (\cref{eg:commutative Moore automata}) can also be written as an equational path constraint over \(\Id\), using the left and right transformations \(\lambda, \rho \colon B \times (B \times \Id^A)^A \Rightarrow \Id\) defined by \(\lambda = \der a\der b\) and \(\rho = \der b \der a\).
Of course, this only covers two commuting letters.
To impose multiple path constraints simultaneously, we need to consider \emph{systems} of path constraints.

\begin{definition}
    \label{def:systems of pcs}
    A \emph{system of path constraints} is a set of path constraints on \(F\)-coalgebras.
    An \(F\)-coalgebra \emph{satisfies} a system \(\Sys\) of path constraints if it satisfies every path constraint in \(\Sys\).
    We write \(\Coalg(F, \Sys)\) for the full subcategory of \(F\)-coalgebras that satisfy \(\Sys\).
    A \emph{system of equational path constraints (over \(H\))} is a system of path constraints that are equational (over \(H\)). 
\end{definition}

\begin{example}
    \label{eg:independence}
    Conditions similar to commutativity can also be found in concurrency theory~\cite{Mazurkiewicz}. 
    Given an alphabet \(A\) of \emph{action symbols}, an \emph{independency relation} is a symmetric relation \(I \subseteq A^2\) that indicates which sequences of actions can be taken in any order without affecting the outcome.
    In a LTS \((X, \beta)\) (here, \(F = \Pow(A \times \Id)\)), if \((a, b) \in I\) and \(u,w \in A^*\), then every state \(x \in X\) executes the trace \(uabw\) if and only if it executes the trace \(ubaw\).
    This can be enforced using an equational path constraint involving the family of transformations \(\pi^{w} \colon F^{\mathsf{len}(w)}\Rightarrow \Pow\) defined by 
    \(\pi_X^\e(x) = \{x\}\), 
    \(\pi_X^a(U) = \{x \in X \mid (a,x) \in U\}\), and 
    \(\pi_X^{aw}(\Phi) = \bigcup \{\pi_{F^{\mathsf{len}(w)}}^{a}(U) \mid (a, U) \in \Phi\}\). 
    Then \(\pi^{ab} \equiv \pi^{ba}\) requires 
    \[
        \{z \mid \exists y,~ x \tr{a} y \tr{b} z\} 
        = \pi_X^{ab} \circ \beta^2(x) 
        = \pi_X^{ba} \circ \beta^2(x) 
        = \{z \mid \exists y,~x \tr{b} y \tr{a} z\}
    \]
    for every state \(x \in X\).
    The entire independence relation can then be modelled  as the system of path constraints \(\Sys_I = \{\pi^{ab} \equiv \pi^{ba} \mid (a,b) \in I\}\) over \(\Pow\).
\end{example}

\noindent\emph{Monoid presentations.}
A wealth of examples of equational path constraints can be obtained from \emph{monoid presentations}, systems of formal identities between words from an alphabet.
To begin with, let \(A^*\) be the set of words in the alphabet \(A\), and recall that the algebraic structure \((A^*, \cdot, 1)\) consisting of concatenation \({\cdot} \colon A^* \times A^* \to A^*\) and the empty word \(1 \in A^*\) is the \emph{free monoid generated by \(A\)}. 
That is, \((A^*, \cdot, 1)\) is a monoid, and for each function \(f \colon A \to M\) into a monoid \((M, *, 1_M)\) there is the unique monoid homomorphism \(f^\# \colon (A^*, \cdot, 1) \to (M, *, 1_M)\) such that \(f^\#(a) = f(a)\) for each \(a \in A \subseteq A^*\). 

Given a relation \(R \subseteq A^* \times A^*\) between words, let \(\approx_R\) be the smallest \emph{congruence} generated by \(R\), the smallest equivalence relation \(\approx\) such that \(R \subseteq {\approx}\) and if \(s_1 \approx t_1\) and \(s_2 \approx t_2\), then \(s_1 * t_1 \approx s_2 * t_2\). 
A monoid \((M, *, 1_M)\) is said to be \emph{presented by \(A\) and \(R\)}, and we write \((M, *, 1_M) = \langle A \mid R\rangle\), if there is an isomorphism \((M, *, 1_M) \cong (A^*/{\approx_R}, \star, [1]_{\approx_R})\).
Here, the binary operation of the monoid is given by \([w]_{\approx_R} \star [u]_{\approx_R} = [w u]_{\approx_R}\).

Now let \(F = \Delta_B \times \Id^A\) and \(R \subseteq A^* \times A^*\).
We can translate \(R\) into a system of singular equational path constraints \(\Sys_R\) over \(\Id\) by extending the derivative transformations (\cref{eg:commutative Moore automata}) to all words.
Given a word \(w = a_1 \cdots a_n \in A^*\), we define \(\der w = \der a_n \cdots \der a_1 \colon F^{\mathsf{len}(w)} \Rightarrow \Id\).

\begin{restatable}{proposition}{monoidpresentation}
    \label{thm:monoid presentations}
    Let \(F = \Delta_B \times \Id^A\), \(R \subseteq A^* \times A^*\), and \((M, *, 1_M) = \langle A \mid R\rangle\).
    Define \(\Sys_R = \{\der w \equiv \der u \mid (w,u) \in R\}\) and define  
    \(
        (Z_R, \langle o_R, \delta_R\rangle)
    \) 
    to be the Moore automaton with \(Z_R = B^M\) and 
    \(
        o_R(\theta) = \theta(1_M)
    \) 
    and 
    \(
        \delta(\theta)(a)(s) = \theta(s * a) 
    \) for any \(\theta \in B^M\), \(a \in A\), and \(s \in M\).
    Then 
    \(
        (Z_R, \langle o_R, \delta_R\rangle)
    \)
    is final relative to  \(\Coalg(F, \Sys_R)\).
\end{restatable}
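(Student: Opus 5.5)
We will use the classical final Moore automaton \((B^{A^*}, \langle o, \delta\rangle)\), with \(o(\phi) = \phi(1)\) and \(\delta(\phi)(a)(w) = \phi(aw)\). The behaviour map of a Moore automaton \((X, \langle o_X, \delta_X\rangle)\) is \(\beh(x)(w) = o_X(\delta_X^*(x, w))\), where \(\delta_X^*\) is the run extended to words in the order the letters are read. Let \(q \colon A^* \to M\) be the quotient monoid homomorphism, and let \(q^* \colon B^M \to B^{A^*}\) be precomposition, \(q^*(\theta) = \theta \circ q\). This map is injective because \(q\) is surjective.

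The plan has three steps.

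\textbf{Step 1.} We first check that \((Z_R, \langle o_R, \delta_R\rangle)\) is a Moore automaton and that \(q^*\) is a coalgebra homomorphism into the final Moore automaton. The output clause holds because \(q(1) = 1_M\). For transitions, unfolding \(\delta_R\) along a word \(w = a_1\cdots a_n\) gives \(\delta_R^*(\theta, w)(s) = \theta(s * q(w))\). This is a short induction using associativity in \(M\), and it confirms that the letters compose in the correct order.

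\textbf{Step 2.} Next we show that a Moore automaton satisfies \(\Sys_R\) if and only if \(\delta_X^*(x, w) = \delta_X^*(x, u)\) for all \(x\) and all \(w \approx_R u\). By \cref{lem:equational} and unfolding \(\der w\) against \(\beta^{\mathsf{len}(w)}\), satisfying \(\der w \equiv \der u\) means exactly \(\delta_X^*(x, w) = \delta_X^*(x, u)\) for all \(x\); this is the same index check as in \cref{eg:commutative Moore automata}.

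We then show that the relation \(\{(w, u) \mid \forall x,\ \delta_X^*(x, w) = \delta_X^*(x, u)\}\) is a congruence containing \(R\). It is clearly an equivalence relation. It is closed under left concatenation because \(\delta_X^*(x, vw) = \delta_X^*(\delta_X^*(x, v), w)\) and the hypothesis holds at every state. It is closed under right concatenation because \(\delta_X^*(x, wv) = \delta_X^*(\delta_X^*(x, w), v)\). Hence it contains \(\approx_R\).

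Consequently, \(Z_R\) itself satisfies \(\Sys_R\): by Step 1, \(\delta_R^*(\theta, w)\) depends only on \(q(w)\), and \(q(w) = q(u)\) for every \((w, u) \in R\).

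\textbf{Step 3.} For existence of the homomorphism, let \(X \in \Coalg(F, \Sys_R)\). By Step 2, \(\beh(x)(w) = o_X(\delta_X^*(x, w))\) depends only on \(q(w)\). So we can define \(h(x)(q(w)) = o_X(\delta_X^*(x, w))\), which gives \(q^* \circ h = \beh\). To see that \(h\) is a coalgebra homomorphism, we can either check it directly with the formula from Step 1, or argue that \(q^* \circ h\) is a homomorphism and \(q^*\) is an injective homomorphism, so \(F\) reflects the needed equations.

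For uniqueness, suppose \(h, h'\) are two homomorphisms \(X \to Z_R\). Then \(q^* \circ h\) and \(q^* \circ h'\) are both homomorphisms into the final Moore automaton, so they are equal. Since \(q^*\) is injective, \(h = h'\).

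The main obstacle is bookkeeping: the order convention in \(\der w = \der a_n \cdots \der a_1\) and in \(\delta_R(\theta)(a)(s) = \theta(s * a)\) must match the run \(\delta_X^*\) consistently. I would fix a single convention in Step 1 and verify it with a two-letter example before running the inductions.
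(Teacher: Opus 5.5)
\textbf{Gap in Step 1.} Your Step 1 formula is wrong for the transition map as stated. With \(\delta_R(\theta)(a)(s) = \theta(s*a)\), reading \(a\) and then \(b\) gives \(s \mapsto \theta(s*b*a)\). In general \(\delta_R^*(\theta, a_1\cdots a_n)(s) = \theta(s * q(a_n\cdots a_1))\), so the letters come out reversed, not ``in the correct order''. Consequently \(q^*\) does not commute with transitions: one side of the square gives \(\theta(q(w)*a)\), the other \(\theta(a*q(w))\). Also, \(\delta_R^*(\theta,w)\) does not depend only on \(q(w)\), so your argument that \(Z_R\) satisfies \(\Sys_R\) breaks too.

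No bookkeeping will rescue this, because the statement is false as literally written. Take \(R=\{(ab,a)\}\). The rewriting system \(ab\to a\) terminates, has no critical pairs, and has normal forms \(b^ia^j\), so \([ba]_{\approx_R}\neq[a]_{\approx_R}\). If \(|B|\ge 2\), any \(\theta\) separating these two elements gives \(o_R(\delta_R^*(\theta,ab))=\theta([ba]_{\approx_R})\neq\theta([a]_{\approx_R})=o_R(\delta_R^*(\theta,a))\). Hence \(Z_R\) violates \(\der(ab)\equiv\der a\). With the right action, \(Z_R\) is instead the relatively final object for the reversed relations, i.e.\ for the opposite monoid.

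The intended map is \(\delta_R(\theta)(a)(s)=\theta(a*s)\). This is what the paper's proof actually uses when it computes \(\delta_R(\beh_X(x))(a)([w]_{\approx_R})=\beh_X(x)([a]_{\approx_R}\star[w]_{\approx_R})\). The two-letter check you planned at the end is exactly what would have exposed this; asserting the formula instead of carrying out the check is the gap.

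\textbf{With the corrected convention.} With the left action you get \(\delta_R^*(\theta,w)(s)=\theta(q(w)*s)\), and \(q^*\) is a homomorphism. The rest of your argument then goes through unchanged and gives a correct proof by a route different from the paper's.

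The paper defines \(\beh_X(x)([w]_{\approx_R})=o_X(\der w\,x)\) directly on \(M\). It proves well-definedness by induction on the derivation of \(\approx_R\), and checks the homomorphism equations and uniqueness by hand, the latter by induction on words.

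You instead embed \(Z_R\) into the ordinary final Moore automaton via the injective homomorphism \(q^*\). The homomorphism property of \(h\) then follows because \(F\) preserves injections, and uniqueness is inherited from finality of \(B^{A^*}\). Your observation that the kernel of the runs is a congruence containing \(R\) replaces the induction on derivations.

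Your route also explicitly verifies that \(Z_R\) itself lies in \(\Coalg(F,\Sys_R)\). Relative finality requires this, and the paper's proof never checks it. It is precisely this check that reveals the convention error.
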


Under the same assumptions as in \cref{thm:monoid presentations}, we will say that a Moore automaton \((X, \beta)\) \emph{satisfies \(R\)} if \((X, \beta) \in \Coalg(F, \Sys_R)\).

\begin{example}
	\label{eg:grids}
    Consider the monoid presentation \((\NN^2, +, (0,0)) = \langle a, b \mid R\rangle\) where \(a = (1,0)\), \(b = (0,1)\), and \(R = \{(ab, ba)\}\), i.e., encoding ``\(a + b = b + a\)''.
    Then \cref{thm:monoid presentations} tells us that the final Moore automaton satisfying \(R\) is carried by the set \(B^{\NN^2}\) of functions of the form \(\NN^2 \to B\). 
    The rest of the automaton structure on \(B^{\NN^2}\) is given by \(o(\theta) = \theta(0, 0)\) and \(\delta(\theta)(a)(i, j) = \theta(i + 1, j)\) and \(\delta(\theta)(b)(i, j) = \theta(i, j + 1)\) for any \(\theta \in B^{\NN^2}\) and \((i, j) \in \NN^2\). 
    Functions of the form \(\NN^2 \to B\) can be identified with formal power series in two commuting variables with coefficients in \(B\), so we obtain Boreale's characterization of the final commutative Moore automaton from~\cite{Boreale19} as an instance of \cref{thm:monoid presentations}.
\end{example}

\begin{example} 
    \label{eg:biinfinite streams}
    Now consider \((\ZZ, +, 0) = \langle a, b \mid R\rangle\) with \(a = 1\), \(b = -1\), and \(R = \{(ab, \varepsilon), (ba, \varepsilon)\}\). 
    \cref{thm:monoid presentations} tells us that the final Moore automaton satisfying \(R\) is carried by \(B^\ZZ\). 
    Integer-indexed sequences from \(B\) are called \emph{bi-infinite streams}.
    Bi-infinite streams are a core topic in symbolic dynamics~\cite{Sternberg2010}, and are also studied as a certain kind of final coalgebra in~\cite{KupkeR08}.
\end{example}



\section{Equational covarieties and chromatic bounds}
\label{sec:covariety}
Now that we have seen some examples of path constraints, we are ready to develop their basic theory.
There are two main goals of this section: the first is to show that the internal structure of \(\Coalg(F, \Sys)\) for a system of equational path constraints \(\Sys\) allows for standard constructions in coalgebraic modelling.

\begin{restatable}{theorem}{thmpathcovariety}
    \label{thm:path covariety}
    Let \(\Sys\) be a system of path constraints.
    Then \(\Coalg(F, \Sys)\) is closed under coproducts and homomorphic images.
    If \(\Sys\) is equational, then \(\Coalg(F, \Sys)\) is closed under the subcoalgebra relation, i.e., is a covariety (see \cref{def:covariety}).
\end{restatable}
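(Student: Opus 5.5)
It suffices to treat a single path constraint \(e \colon E \subseteq J\), because satisfaction of a system \(\Sys\) means satisfaction of each member. Throughout I will use \cref{lem:always a homom}: every \(F\)-coalgebra homomorphism \(h\colon (X,\beta)\to(Y,\gamma)\) is also a \(J\)-coalgebra homomorphism \(h\colon (X,\beta^J)\to(Y,\gamma^J)\), so that \(J(h)\circ\beta^J=\gamma^J\circ h\).

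\emph{Coproducts.} Let \((X_i,\beta_i)\) satisfy \(E\), with factorizations \(\e_i\colon X_i\to EX_i\) satisfying \(e_{X_i}\circ\e_i=\beta_i^J\). Put \(X=\coprod_i X_i\), and let \(\beta\) be given by the coproduct formula. By \cref{lem:always a homom}, \(\beta^J\circ\incl_i = J(\incl_i)\circ\beta_i^J\). Define
\[
\e=[E(\incl_i)\circ\e_i]_{i\in I}\colon X\to EX .
\]
Naturality of \(e\) gives \(e_X\circ E(\incl_i)\circ\e_i = J(\incl_i)\circ e_{X_i}\circ \e_i = J(\incl_i)\circ\beta_i^J=\beta^J\circ\incl_i\). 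By the uniqueness part of the coproduct property, \(e_X\circ\e=\beta^J\).

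\emph{Homomorphic images.} Let \(q\colon (X,\beta)\twoheadrightarrow(Q,\gamma)\) be a regular epi homomorphism, with \(X\) satisfying \(E\) via \(\e\). Then
\[
\gamma^J\circ q = J(q)\circ\beta^J = J(q)\circ e_X\circ\e = e_Q\circ E(q)\circ\e .
\]
So the square formed by \(q\) and \(e_Q\), with top \(E(q)\circ\e\) and bottom \(\gamma^J\), commutes. Since \(q\) is a regular epi and hence a strong epi, and \(e_Q\) is monic, the diagonal fill-in property yields \(d\colon Q\to EQ\) with \(e_Q\circ d=\gamma^J\). This diagonal argument is the one step needing care: I will use the standard fact that regular epis are orthogonal to monos. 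Concretely, \(E(q)\circ\e\) coequalizes the kernel pair \(l,r\) of \(q\), because \(e_Q\circ E(q)\circ\e\circ l = \gamma^J\circ q\circ l=\gamma^J\circ q\circ r\) and \(e_Q\) is monic. Hence \(E(q)\circ\e\) factors through the coequalizer \(q\).

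\emph{Subcoalgebras, equational case.} Let \(m\colon(U,\beta_U)\hookrightarrow(X,\beta)\) be a monic homomorphism with \(X\) satisfying \(\lambda\equiv\rho\) for \(\lambda,\rho\colon J\Rightarrow H\). By \cref{lem:equational}, it suffices to show \(\lambda_U\circ\beta_U^J=\rho_U\circ\beta_U^J\). Composing with \(H(m)\) and using naturality together with \cref{lem:always a homom},
\[
H(m)\circ\lambda_U\circ\beta_U^J=\lambda_X\circ J(m)\circ\beta_U^J=\lambda_X\circ\beta^J\circ m ,
\]
and similarly for \(\rho\). The two right-hand sides agree because \(X\) satisfies the constraint. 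Since \(H\) preserves monics by the standing assumption, \(H(m)\) is monic and can be cancelled, which gives the result.

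Combined with closure under coproducts and homomorphic images, this shows \(\Coalg(F,\Sys)\) is a covariety. The main point to verify is that \cref{lem:always a homom} indeed gives \(J(h)\circ\beta^J=\gamma^J\circ h\) for every shape \(J\). If I needed to reprove this, I would proceed by induction on \(J\), using functoriality in the case \(K\circ J\) and the universal property of the product in the case \(\prod J_i\).
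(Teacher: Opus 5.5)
Your proof is correct and follows the paper's argument essentially step for step. Coproducts use the copairing \([E(\incl_i)\circ\e_i]_{i\in I}\) together with naturality of \(e\). Homomorphic images factor \(E(q)\circ\e\) through the coequalizer \(q\) after cancelling the mono \(e_Q\). Subcoalgebras cancel the mono \(H(\incl)\) after applying naturality and \cref{lem:always a homom}. The only cosmetic difference is in the image case: you phrase it via the kernel pair and orthogonality of regular epis to monos, while the paper works directly with the pair that \(q\) coequalizes; both give the same factorization.
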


If \(\Cat = \Set\) and \(F\) is accessible, then \cref{thm:path covariety,thm:acc set eq} imply that every system of path constraints defines a covariety.
With the covariety theorem of Rutten (\cref{thm:rutten-gumm}) for endofunctors on \(\Set\), we obtain the following. 

\begin{corollary}
    \label{corr:coaxiomatization}
    Let \(F \colon \Set \to \Set\) be accessible.
    Then for any system of path constraints \(\Sys\), there is a set \(K\) of colours with a cofree coalgebra \((Z_K, \langle\ell, \zeta_K\rangle)\) and a coequation \(C \subseteq Z_K\) such that \(\Coalg(F, \Sys) = \Coalg(F; C)\) (see \cref{def:coequation}).
\end{corollary}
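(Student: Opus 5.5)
The plan is to combine \cref{thm:acc set eq}, \cref{thm:path covariety}, and \cref{thm:rutten-gumm}; the only real work is checking that the hypotheses line up and that a whole \emph{system} is handled, not just a single constraint.

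\textbf{Step 1: every constraint in \(\Sys\) is equational.} For each \(e \colon E \subseteq J\) in \(\Sys\), \cref{thm:acc set eq} gives a regular cardinal \(\kappa_E\) and natural transformations \(\lambda_E,\rho_E \colon J \Rightarrow \Omega_{\kappa_E}\) of which \(e\) is the pointwise equalizer. So \(\Sys\) is a system of equational path constraints in the sense of \cref{def:systems of pcs}. It is not equational over a single \(H\), but \cref{thm:path covariety} only asks that each member be equational over some functor. If one wants a uniform \(H\), one could instead take the product \(\prod_{E\in\Sys}\Omega_{\kappa_E}\) over the relevant shapes, but this is unnecessary.

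\textbf{Step 2: \(\Coalg(F,\Sys)\) is a covariety.} By \cref{thm:path covariety}, \(\Coalg(F,\Sys)\) is closed under coproducts, homomorphic images, and subcoalgebras. Concretely, it is the intersection over \(E \in \Sys\) of the classes \(\Coalg(F,E)\), each of which is a covariety. An intersection of full subcategories, each closed under these three constructions, is again closed under them, so \(\Coalg(F,\Sys)\) is a covariety in the sense of \cref{def:covariety}.

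\textbf{Step 3: obtain the coequation.} Since \(F\) is accessible, \cref{thm:rutten-gumm} supplies a colour palette \(K\) and a coequation \(C \subseteq Z_K\) with \(\Coalg(F,\Sys)=\Coalg(F;C)\). The final \(\Delta_K\times F\)-coalgebra \((Z_K,\langle\ell,\zeta_K\rangle)\) exists because \(\Delta_K\times F\) is accessible whenever \(F\) is: it is \(\kappa\)-accessible for the same \(\kappa\) as \(F\). Barr's result then applies.

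\textbf{Expected difficulty.} There is no real obstacle; the proof is bookkeeping. The one subtle point is that "covariety" in \cref{thm:rutten-gumm} should match \cref{def:covariety}. In \(\Set\) all monos and epis are regular (see \cref{fn:regularity}), so the two notions coincide and the corollary follows.
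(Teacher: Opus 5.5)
Your proposal is correct and follows the paper's argument exactly: \cref{thm:acc set eq} makes every constraint in \(\Sys\) equational over some \(\Omega_\kappa\), \cref{thm:path covariety} then makes \(\Coalg(F,\Sys)\) a covariety, and \cref{thm:rutten-gumm} supplies \(K\) and \(C\). One point that both you and the paper leave implicit is that the subcoalgebra clause of \cref{thm:path covariety} needs \(H=\Omega_\kappa\) to preserve monics; this holds for injections with nonempty domain, and the empty subcoalgebra trivially satisfies every path constraint, so the argument goes through.
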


The covariety theorem of Rutten (\cref{thm:rutten-gumm}) guarantees the existence of a coequation that coaxiomatizes a given covariety of \(F\)-coalgebras for an accessible endofunctor \(F\) on \(\Set\). 
Its proof shows that there is a set of all possible behaviour patterns in a large enough set of colours, and takes that set to be the coequation. 
This description of the coequation is abstract by design, and as a result cannot be expected to produce familiar accounts of behavioural properties like transitivity or commutativity.
\cref{corr:coaxiomatization} does not say much about the coequation \(C\) coaxiomatizing \(\Sys\), only that it exists.

The first main result of the paper, \cref{thm:coax} below, provides a bit more information than \cref{corr:coaxiomatization}: it gives an upper bound on the chromatic number of the covariety determined by an equational path constraint over an accessible endofunctor \(H\).
An additional assumption about \(H\) is required: to get a meaningful bound, we need \(H\) to carry an algebraic structure.

\begin{definition}
    \label{def:monad}
    A triple \((H, \eta, \mu)\) consisting of an endofunctor \(H\) on \(\Cat\) and two natural transformations \(\eta \colon \Id \Rightarrow H\) and \(\mu \colon HH \Rightarrow H\) is called a \emph{monad} if for any object \(X\) we have both \(\mu_X \circ \eta_{HX} = \id{HX} = \mu_X \circ H(\eta_X)\) and \(\mu_X \circ \mu_{HX} = \mu_X \circ H(\mu_X)\).
    An \emph{Eilenberg-Moore (EM-)algebra} for \((H, \eta, \mu)\) is a pair \((A, \alpha)\) consisting of an object \(A\) and an arrow \(\alpha \colon HA \to A\) such that \(\alpha \circ \eta_A = \id{A}\) and \(\alpha \circ \mu_A = \alpha \circ H(\alpha)\). 
    An \emph{algebra homomorphism} \(h\colon (A, \alpha) \to (B, \vartheta)\) is an arrow \(h \colon A \to B\) such that \(\vartheta \circ H(h)= h \circ \alpha\).
\end{definition}

Every pair of the form \((HX, \mu_X)\) is an EM-algebra for the monad \((H, \eta,\mu)\).
In fact, \((HX, \mu_X)\) is the \emph{free} EM-algebra on \(X\):
given an EM-algebra \((A, \alpha)\) and an arrow \(f \colon X \to A\), there is a unique algebra homomorphism \(f^\#\colon (HX, \mu_X) \to (A,\alpha)\) such that \(f^\# \circ \eta_X = f\). 
Specifically, \(f^\# = \mu_A \circ H(f)\).
This universal property is what we need in the construction of our coaxiomatization. 

\begin{restatable}[Coaxiomatization]{theorem}{coaxiomatizationtheorem}%
    \label{thm:coax}
    Let \(F\colon \Set \to \Set\), \(\kappa\) be a cardinal, and \((H, \eta, \mu)\) be a monad with \(\locCard{H} \le \kappa\) (\cref{def:small}).
    Let \(K = H(\kappa + \kappa)\) and assume that there is a final \(\Delta_K\times F\)-coalgebra \((Z_K, \langle \ell_K, \zeta_K\rangle)\).
    Given a system of equational path constraints \(\Sys\) over \(H\), define 
    \begin{equation}
        \label{def:coax}
        C = \big\{ b \in Z_K ~\big|~ \forall \text{\(J\)-path const. } (\lambda \equiv \rho) \in \Sys,~ \ell_K^\# \circ \lambda \circ \zeta_K^J(b) = \ell_K^\# \circ \rho \circ \zeta_K^J(b) \big\}
    \end{equation}
    Then \(C\) coaxiomatizes \(\Coalg(F, \Sys)\), i.e., \(\Coalg(F, \Sys) = \Coalg(F; C)\).
    Consequently, the chromatic number of \(\Coalg(F, \Sys)\) is at most \(\card{H(\kappa + \kappa)}\).
\end{restatable}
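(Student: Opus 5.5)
The plan is to prove the two inclusions \(\Coalg(F,\Sys) \subseteq \Coalg(F;C)\) and \(\Coalg(F;C) \subseteq \Coalg(F,\Sys)\) separately. The first is a naturality argument. The second builds, for each state and each constraint, a colouring that is ``injective enough'' to detect a violation. Throughout, \(\ell_K^\# = \mu_K \circ H(\ell_K) \colon HZ_K \to K\) is the algebra map given by freeness of \(K = H(\kappa+\kappa)\) as an EM-algebra.

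\emph{Soundness.} Let \((X,\beta)\) satisfy \(\Sys\) and let \(k \colon X \to K\) be any colouring. The map \(\beh_k\) is in particular an \(F\)-coalgebra homomorphism \((X,\beta) \to (Z_K,\zeta_K)\). By \cref{lem:always a homom} it is then a \(J\)-coalgebra homomorphism, i.e.\ \(\zeta_K^J \circ \beh_k = J(\beh_k) \circ \beta^J\). Fix a constraint \(\lambda \equiv \rho\) in \(\Sys\). Naturality of \(\lambda\) gives
\[
\ell_K^\# \circ \lambda_{Z_K} \circ \zeta_K^J \circ \beh_k = \ell_K^\# \circ H(\beh_k) \circ \lambda_X \circ \beta^J ,
\]
and the same holds for \(\rho\). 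By \cref{lem:equational}, \(\lambda_X \circ \beta^J = \rho_X \circ \beta^J\), so the two composites agree. Hence \(\beh_k\) lands in \(C\), and this holds for every \(k\).

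\emph{Completeness.} Suppose \((X,\beta)\) satisfies \(C\). Fix \(x \in X\) and a constraint \(\lambda \equiv \rho\), and put \(t_1 = \lambda_X\beta^J(x)\) and \(t_2 = \rho_X\beta^J(x)\), both in \(HX\). Since \(\locCard{H} \le \kappa\), there are \(U_1, U_2 \subseteq X\) with \(\card{U_i} \le \kappa\) and \(t_i \in HU_i\). Then \(U = U_1 \cup U_2\) satisfies \(\card{U} \le \kappa + \kappa\). Choose \(c \colon X \to \kappa+\kappa\) whose restriction to \(U\) is injective; this is possible because \(\card{U} \le \kappa+\kappa\), and outside \(U\) the values of \(c\) are arbitrary. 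Take the colouring \(k = \eta_{\kappa+\kappa} \circ c\). The homomorphism \(\beh_k\) satisfies \(\ell_K \circ \beh_k = k\). Hence, by naturality of \(\lambda\) as above together with \(\mu \circ H(\eta) = \id{}\),
\[
\ell_K^\# \circ \lambda_{Z_K} \circ \zeta_K^J(\beh_k(x)) = \mu \circ H(\ell_K \circ \beh_k)(t_1) = \mu \circ H(\eta) \circ H(c)(t_1) = H(c)(t_1),
\]
and likewise the \(\rho\)-side equals \(H(c)(t_2)\). Since \(\beh_k(x) \in C\), we get \(H(c)(t_1) = H(c)(t_2)\).

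Now \(t_1, t_2 \in HU\), and \(c \circ \incl_U\) is injective. Since \(H\) preserves monics, \(H(c \circ \incl_U)\) is injective, so \(t_1 = t_2\) in \(HU\). Because \(H\) preserves inclusions, they are also equal in \(HX\). As \(x\) was arbitrary, \(\lambda_X \circ \beta^J = \rho_X \circ \beta^J\), and \cref{lem:equational} gives that \((X,\beta)\) satisfies \(\lambda \equiv \rho\). The bound on the chromatic number then follows immediately from \cref{def:chromatic number}.

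The main obstacle is the completeness step. It depends on two things: choosing \(c\) so that it separates \emph{both} supports at once, which is why the palette is built on \(\kappa+\kappa\) rather than \(\kappa\); and using the free-algebra identity \(\ell_K^\# \circ H(\beh_k) = H(c)\) to transport the equation in \(K\) back to \(HX\). The remaining points need checking rather than new ideas: that \(H\) preserves monics and inclusions (so the degenerate case \(U = \emptyset\) is covered by the standing assumption), and that \cref{lem:always a homom} applies to \(\beh_k\).
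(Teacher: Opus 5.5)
Your proposal is correct and follows essentially the same route as the paper. Soundness goes through naturality of \(\lambda,\rho\) and \cref{lem:always a homom}. Completeness, which the paper phrases contrapositively, uses the colouring \(\eta_{\kappa+\kappa}\circ c\), whose free extension collapses to \(H(c)\) and is injective on \(HU\). One small slip: the structure map in \(\ell_K^\# = \mu \circ H(\ell_K)\) is \(\mu_{\kappa+\kappa}\), not \(\mu_K\).
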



    

\begin{example}
    The identity functor \(\Id \colon \Set \to \Set\) has \(\locCard{\Id} = 1\), and paired with \(\mu = \id{} = \eta\) is a monad.
    Thus, for \(H = \Id\), every system of equational path constraints over \(H\) has a chromatic number of at most \(2\) by \cref{thm:coax}. 
    Concrete examples of equational path constraints over \(\Id\) include those obtained from monoid presentations (\cref{thm:monoid presentations}).
    For example, in the commutativity case, \(K = \Id(1 + 1) = 2\), \(Z_2 = (2 \times B)^{A^*}\), and the behaviour colouring is \(\ell(\theta) = \proj_2(\theta(\varepsilon))\).
    The coequation in \eqref{def:coax} unravels to 
    \begin{align*}
        C 
        &= \{\theta \in (2 \times B)^{A^*} \mid \forall a,b \in A,~\proj_2(\theta(ab)) = \proj_2(\theta(ba))\}
    \end{align*}
    Thus, for a Moore automaton \((X, \beta)\), \((X, \beta) \models C\) if and only if for any colouring \(k \colon X \to 2\), \(x \tr{a} {\cdot} \tr{b} z\) and \(x \tr{b} {\cdot} \tr{a} z'\) implies \(k(z) = k(z')\).
    The latter is equivalent to saying \(z = z'\), so \((X, \beta)\models C\) if and only if \((X, \beta)\) satisfies \(\der a \der b \equiv \der b \der a\).
\end{example}


\section{The Terminal Net Construction}
\label{sec:terminal net}

In this last technical section, we present an iterative construction of an \(F\)-coalgebra that is final relative to a system of singular path constraints \(\Sys\) over an arbitrary complete category \(\Cat\) such that \(F\) and the path constraints \(E \in \Sys\) are sufficiently continuous (see \cref{def:pitched}).
It will turn out that polynomial endofunctors are sufficiently continuous for this construction to be carried out (\cref{lem:polynomial pitched}), so this applies to commutativity for Moore automata in \(\Set\), for example.
Our construction is similar to the terminal sequence constructions of Ad\'amek~\cite{Adámek74} and Barr~\cite{Barr93}, in that it generates a diagram from the final object of \(\Cat\) and takes the limit of this diagram to obtain the carrier of the final coalgebra. 

For fear of further repetition, we would like to stress that the path constraints considered in this section are not required to be equational, but they are required to be singular, which means that every \(E \in \Sys\) is of the form \(E \subseteq F^n\) for some \(n \in \NN\) (although different path constraints in \(\Sys\) can be of different lengths).\smallskip

\noindent\emph{The terminal sequence.} 
To understand the terminal net construction formally, it helps to have some familiarity with the \emph{terminal sequence}~\cite{Barr93,Adámek74}.
Write \(1\) for a terminal object of \(\Cat\) (which exists because \(\Cat\) is complete).
The terminal sequence iteratively constructs a final \(F\)-coalgebra by repeatedly applying \(F\) to the terminal map \(\bang \colon F1 \to 1\) in \(\Cat\). 
This generates the diagram below:
\begin{equation}
    \label{eq:terminal sequence}
    \begin{tikzcd}
        1
        & F1 \ar[l, "\bang"']
        & F^21 \ar[l, "F(\bang)"']
        & \cdots \ar[l]
        & F^\omega 1 \ar[l, "\bang^\omega"']
        & FF^{\omega} 1 \ar[l, "\alpha"']
        & \cdots \ar[l]
    \end{tikzcd}
\end{equation}
In \eqref{eq:terminal sequence}, the diagram to the left of \(F^\omega 1\) is called the \emph{\(\omega\) segment} of the terminal sequence. 
The \(\omega\) segment is a special kind of diagram called an \emph{\(\omega^\op\)-chain}, because its shape is the opposite of the well-ordering of the ordinal \(\omega\), 
\(
    \omega^\op = (
        0 \leftarrow 1 \leftarrow 2 \leftarrow \cdots
    )
\).
We formally obtain the \(\omega\)-segment of \eqref{eq:terminal sequence} from the functor \(\Phi \colon \omega^\op \to \Cat\) defined by \(\Phi n = F^n1\) for each \(n\) and \(\Phi(n \leftarrow n+1) = F^{n}(\bang)\).
Also in \eqref{eq:terminal sequence}, \((F^\omega 1, \bang^\omega)\) is defined to be a limit cone of the \(\omega\) segment, given by a universal natural transformation \(\bang^\omega \colon \Delta_{F^\omega 1} \Rightarrow \Phi\).
The arrow \(\alpha\) is induced by the universal property of \((F^\omega1, \bang^\omega)\): one obtains the cone \((FF^\omega 1, \gamma)\) by defining \(\gamma \colon \Delta_{FF^\omega1} \Rightarrow \Phi\) by \(\gamma_n = F^{n}(\bang) \circ F(\bang_n^\omega)\).
Then \(\alpha\) is the unique cone homomorphism \(\alpha \colon (FF^\omega1, \gamma) \to (F^\omega1, \bang^\omega)\).
If \(F\) preserves limits of \(\omega^\op\)-chains (is \emph{\(\omega^\op\)-continuous}), then \(\alpha\) is invertible and \((F^{\omega}, \alpha^{-1})\) is a final \(F\)-coalgebra.\smallskip

\noindent\emph{The terminal net.}
To incorporate a system of path constraints into the terminal sequence \eqref{eq:terminal sequence}, we need to add a ``constraint dimension'' to the diagram.

\begin{restatable}{definition}{definitionofterminalnet}
    \label{def:generalized Adamek-Barr construction}
    Let \(\Sys = \{e^{(i)} \colon E_i \subseteq F^{n_i} \mid i \in I\}\) be a system of singular path constraints indexed by a set \(I\).
    We define the category \(\Word\Sys\) to be the free category generated by the following objects and arrows:
    the objects of \(\Word{\Sys}\) are words \(\Sys_F^* = (\{\code{F}\} \cup \{\code{E}_i \mid i \in I\})^*\) formed from the \(E_i\) in \(\Sys\) and \(F\) treated as formal symbols.
    We write \(\bullet \in \Sys_F^*\) for the empty word.
    The arrows of \(\Word\Sys\) are generated by the following rules:
    \begin{itemize}
        \item there is an arrow \(\mathtt\bang_{w} \colon w \to \bullet\) for each \(w \in \Sys_F^*\),
        \item there is an arrow \(\code{e}_w^{(i)} \colon \code{E}_iw \to \code{F}^{n_i} w\) for each \(i \in I\) and \(w \in \Sys_F^*\), and
        \item for each \(f \colon w \to u\) in \(\Word\Sys\) and \(i \in I\), there are the arrows \(\code{F}(f) \colon \code{F}w \to \code{F}u\) and \(\code{E}_i(f) \colon \code{E}_i w \to \code{E}_i u\).
    \end{itemize}
    Given any word \(w \in \Sys_F^*\), we define \(\WFunc w \colon \Cat \to \Cat\) recursively on \(w\) by \(\WFunc \bullet = \Id\), \(\WFunc {\code{F}w} = F\circ \WFunc w\), and \(\WFunc {\code{E}_iw} = E_i\circ \WFunc w\).
    The \emph{terminal net (for \(\Sys\))} is the diagram \(\Net \colon \Word\Sys \to \Cat\) defined on objects by 
    \(\Net w = \WFunc{w}1\) for any \(w \in \Word\Sys\).
    On arrows,
    \begin{align*}
        \Net(\bang_w) &= \bang_{\Net w} \colon \Net w \to 1
        & 
        \Net(\code{e}_w^{(i)}) &= e_{\Net w}^{(i)} \colon E_i\Net w \to F^{n_i}\Net w
        \\
        \Net(\id{w}) &= \id{\Net w} \colon \Net w \to \Net w
        & 
        \Net(\code{F}(f)) &= F\Net(f) \colon \Net \code{F}w \to \Net\code{F}u
        \\
        \Net(g \circ f) &= \Net(g) \circ \Net(f) \colon \Net w \to \Net v
        &
        \Net(\code{E}_i(f)) &= E_i \Net(f) \colon \Net \code{E}_iw \to \Net\code{E}_iu
    \end{align*}
    where \(w,u,v\in \Sys_F^*\), \(i \in I\), and \(g \colon u \to v\) and \(f \colon w \to u\) in \(\Word\Sys\).
\end{restatable}

Write \(w\Word\Sys\) for the full subcategory of \(\Word\Sys\) consisting of words that begin with \(w\), and write \(\Incl_w\colon w\Word\Sys \hookrightarrow \Word\Sys\) for the inclusion functor.
Since \(\Cat\) is complete, for each \(w \in \Sys_F^*\) we obtain
\if\arxiv1
\begin{enumerate}
    \item\label{item:limitWT} a limit cone \((\lim (\WFunc w \circ \Net), c^w)\) for the functor \(\WFunc w \circ \Net\),
    
    \item\label{item:limitTIW} a limit cone \((\lim (\Net \circ \Incl_w), d^w)\) for the functor \(\Net \circ \Incl_w\),
    
    \item\label{item:conerest} a cone homomorphism \(b^w \colon (\lim \Net , c^\bullet) \to (\lim (\Net \circ \Incl_w), d^w)\).
\end{enumerate} 
\else
(1) a limit cone \((\lim (\WFunc w \circ \Net), c^w)\) for the functor \(\WFunc w \circ \Net\),
(2) a limit cone \((\lim (\Net \circ \Incl_w), d^w)\) for the functor \(\Net \circ \Incl_w\), and
(3) a cone homomorphism \(b^w \colon (\lim \Net , c^\bullet) \to (\lim (\Net \circ \Incl_w), d^w)\).
\fi
It is worth noting that \((\lim \Net , c^\bullet) = (\lim \Net , d^{\bullet})\), so the choice of \(c^\bullet\) instead of \(d^\bullet\) in \if\arxiv1 (3)\fi\cref{item:conerest} is cosmetic.
The key to the terminal net construction producing a final coalgebra (\cref{thm:generalized Adamek-Barr construction}) is that
\if\arxiv1
\cref{item:limitWT,item:limitTIW} 
\else 
(1) and (2)
\fi
coincide when \(F\) and each \(E_i \in \Sys\) preserve limits of \emph{pitched diagrams}, which we define as follows.

\begin{definition}\label{def:pitched}
    A \emph{pitched diagram} is a commutative diagram \(D \colon \mathbf D \to \Cat\) containing a \emph{central} \(\omega^{\op}\)-chain \(\{\alpha_i \colon A_{i+1} \to A_i\}_{i \in \NN} \subseteq \mathbf D\), meaning that every object \(B\) of \(\mathbf D\) has an arrow \(e_B \colon B \to A_i\) for some \(i \in \NN\) (its \emph{pitch}) with \(D(e_B)\) monic.
    A functor is \emph{pitched-continuous} if it preserves limits of pitched diagrams.
\end{definition}

The terminal net is of this form: it is commutative by naturality of each \(e \colon E \subseteq F^n\) in \(\Sys\), and its central \(\omega^\op\)-chain is \(\{\code{F}^j(\bang_{\code{F}}) \colon \code{F}^{j+1} \to \code{F}^j\}_{j \in \NN}\).
Since \(F\) and every \(E \in \Sys\) preserves monics, by induction on \(w \in \Sys_F^*\), one can always find an \(n \in \NN\) and an arrow \(w \to \code{F}^n\) in \(\Word{\Sys}\) such that \(\Net(w)\) is monic.

Generally, for any \(w \in \Sys_F^*\), \(\{w\code{F}^j(\bang_{w\code{F}}) \colon w\code{F}^{j+1} \to w\code{F}^j\}_{j \in \NN}\) is a central \(\omega^\op\)-chain for \(w\Word{\Sys}\), so \(\Net \circ \Incl_w\) is also pitched.
It is furthermore true that for any functor \(G \colon \Cat \to \Cat'\) that preserves monics, if \(D \colon \mathbf D \to \Cat\) is pitched, then so is \(G \circ D\).
This in particular holds for \(\WFunc w\), so \(\WFunc w \circ \Net\) is pitched for all \(w \in \Sys_F^*\).
\if\arxiv1
\fi

\begin{restatable}{lemma}{inclusionlimit}
    \label{lem:W inclusion limit}
    For any \(w \in \Sys_F^*\), there is a unique cone homomorphism \[
        \phi^w \colon (\lim (\Net \circ \Incl_w), d^w) \to (\lim (\WFunc w \circ \Net), c^w)
    \]
	If \(F\) and every \(E_i \in \Sys\) are pitched-continuous, then \(\phi^w\) is a cone isomorphism.
\end{restatable}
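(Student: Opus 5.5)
The plan is to compare the two index categories directly. Define the prefixing functor \(w\cdot(-) \colon \Word\Sys \to w\Word\Sys\) on objects by \(u \mapsto wu\). On arrows it is defined by structural recursion on \(w\): prefixing \(f\) by a single letter \(\code F\) gives \(\code F(f)\), and prefixing by \(\code E_i\) gives \(\code E_i(f)\). This is well defined because \(\Word\Sys\) is free, and \(\code F(-)\) and \(\code E_i(-)\) are functorial on generated arrows.

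\emph{Step 1: the diagrams agree along prefixing.} Unwinding \cref{def:generalized Adamek-Barr construction} gives \(\Net\circ\Incl_w\circ(w\cdot -) = \WFunc w\circ\Net\) on the nose. On objects, \(\Net(wu) = \WFunc{wu}1 = \WFunc w\WFunc u 1\). On arrows, \(\Net(\code F(f)) = F\Net(f)\) and \(\Net(\code E_i(f)) = E_i\Net(f)\); iterating these along the letters of \(w\) gives \(\Net(w\cdot f)=\WFunc w(\Net f)\).

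\emph{Step 2: existence and uniqueness of \(\phi^w\).} Restricting the limit cone \(d^w\) along \(w\cdot(-)\) yields a cone \((\lim(\Net\circ\Incl_w), (d^w_{wu})_u)\) over \(\WFunc w\circ\Net\). The universal property of the limit cone \(c^w\) then gives a unique cone homomorphism \(\phi^w\) with \(c^w_u\circ\phi^w = d^w_{wu}\) for all \(u\). Cone homomorphisms are here understood relative to this restricted cone, so this settles the first claim.

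\emph{Step 3: \(\phi^w\) is an isomorphism.} To invert \(\phi^w\), I turn an arbitrary cone \((A,a)\) over \(\WFunc w\circ\Net\) into a cone over \(\Net\circ\Incl_w\), with legs \(a_u\) at the objects \(wu\). Every object of \(w\Word\Sys\) has the form \(wu\), and \(u\) is uniquely determined by \(wu\), so no legs are missing. The real content is naturality: every arrow \(g \colon wu\to wv\) of \(w\Word\Sys\) must be identified with some arrow of the form \(w\cdot f\), so that the cone condition for \(a\) transfers.

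I expect this step to be the main obstacle. Arrows of \(\Word\Sys\) are composites of generators, and a generator can leave the \(w\)-prefixed region. The generator \(\bang_{wu}\) lands in \(\bullet\). A top-level \(\code e^{(i)}\) replaces a leading \(\code E_i\) by \(\code F^{n_i}\). I would prove by induction on the length of a generating composite that once a composite leaves the set of \(w\)-prefixed words, it cannot return. The reasons are that the only arrows out of \(\bullet\) are identities, that there are no generators into \(\code E_i\)-headed words other than \(\code E_i(f)\), and that the outermost letter only ever changes from \(\code E_i\) to \(\code F\). Hence any arrow inside \(w\Word\Sys\) factors letter by letter as \(w\cdot f\).

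If this combinatorial argument breaks down, for instance if some composite does re-enter the prefixed region, the fallback uses the hypothesis that \(F\) and every \(E_i\) are pitched-continuous. Then \(\WFunc w\) preserves limits of pitched diagrams. Both \(\Net\) and \(\Net\circ\Incl_w\) are pitched, with central chains \(\code F^j\) and \(w\code F^j\) respectively. Under this fallback I would define the missing legs of the extended cone as the unique factorizations through the monic pitch \(\Net(v)\hookrightarrow\Net(w\code F^j)\). Uniqueness of these factorizations comes from monicity; their existence comes from computing both limits as limits over the central chains, which is where pitched-continuity enters. Either way, the two mutually inverse cone homomorphisms are then obtained from the universal properties of \(c^w\) and \(d^w\), showing \(\phi^w\) is a cone isomorphism.
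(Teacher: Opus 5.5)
Your proposal is correct, but its decisive step takes a different route from the paper.

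\textbf{Where you agree with the paper.} Your Steps~1--2 are the paper's first half: it too restricts \(d^w\) along prefixing, using \(\WFunc w\Net(f)=\Net(w(f))\).

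\textbf{Where you differ.} For the inverse, the paper shows that the \emph{limit} cone \(c^w\) is compatible with every arrow \(g\colon wu\to wv\) of \(w\Word\Sys\). It argues by induction on the syntax of \(g\). In the case \(g=\code G(f)\), \(w=\code G w'\), it transports the induction hypothesis about \(c^{w'}\) to \(c^w\) through the comparison isomorphism \(\tau\colon\lim(G\circ\WFunc{w'}\circ\Net)\cong G(\lim(\WFunc{w'}\circ\Net))\). That is the only place pitched-continuity is used.

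You instead prove a statement about arrows alone: every \(\Net(g)\) with \(g\colon wu\to wv\) equals \(\WFunc w(\Net f)\) for some \(f\colon u\to v\). Then every cone over \(\WFunc w\circ\Net\), not merely the limit cone, is a cone over \(\Net\circ\Incl_w\), so the two diagrams have literally the same cones.

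\textbf{What your route buys.}
\begin{itemize}
\item \(\phi^w\) is an isomorphism with no continuity hypothesis at all. Your fallback paragraph, which is too vague to stand on its own, can simply be deleted.
\item Your route makes explicit something the paper uses silently. Its composition case only treats composites \(wu\to wv\to wz\) whose intermediate object is \(w\)-prefixed, which is exactly your no-return property.
\item Nothing downstream changes. Pitched-continuity remains indispensable in the main theorem, for \(\tau^{\code F}\), \(\tau^{\code E_i}\), and uniqueness of cone morphisms into \(F(\lim\Net)\). Your argument just shows that this lemma does not need it.
\end{itemize}

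\textbf{One caveat, shared with the paper.} Your claim that only \(\code E_i(f)\)-generators land in \(\code E_i\)-headed words needs all \(n_j\ge 1\). A length-\(0\) constraint gives, e.g., \(\code e^{(j)}_{\code E_j u}\colon\code E_j\code E_j u\to\code E_j u\) inside \(\code E_j\Word\Sys\). The paper's base case, which asserts \(w=\bullet\) whenever \(g\) is some \(\code e^{(i)}\), misses this too. Your image-level statement still holds there: naturality plus monicity of \(e^{(j)}\colon E_j\Rightarrow\Id\) give \(e^{(j)}_{E_jX}=E_j(e^{(j)}_X)\). Hence \(\Net(r)\circ e^{(j)}_{\Net y}=E_j\bigl(\Net(\code e^{(j)}_z\circ r)\bigr)\) for every \(r\colon y\to\code E_j z\), so such detours can still be rewritten as \(E_j\) applied to an arrow of \(\Word\Sys\).
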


\if\arxiv1
To see the significance of \cref{lem:W inclusion limit}, observe that, intuitively, both \(\sem{w} \circ \Net\) and \(\Net \circ \Incl_w\) can be seen as inclusion functors.
Since \(\sem{w}\Net u = \Net wu\) for any \(u \in \Sys_F^*\), \(\sem{w} \circ \Net\) represents the inclusion of the subcategory of \(\Cat\) consisting of all objects \(\sem{wu}1\) and arrows of the form \(\sem{w}(\Net(f))\) for \(f \colon u \to u'\).
In contrast, \(\Net \circ \Incl_w\) is the inclusion of the full subcategory \(w\Word\Sys\) of \(\Word\Sys\) followed by \(\Net\).

Note that for any \(w,u \in \Sys_F^*\), \(\Net \Incl_w(wu) = \Net (wu) = \WFunc w \Net u\), so the codomain of \(d_{wu}^{w}\) coincides with the codomain of \(c_u^{w}\).
\fi

\begin{restatable}[Terminal Net Construction]{theorem}{adamekbarrthm}%
    \label{thm:generalized Adamek-Barr construction}
    Let \(F \colon \Cat \to \Cat\) and let \(\Sys\) be a system of singular path constraints such that \(F\) and every \(E \in \Sys\) are pitched-continuous. 
    Let \(\Net \colon \Word\Sys \to \Cat\) be the terminal net for \(\Sys\), and (using pitched continuity) let \(\tau \colon (\lim (F \circ \Net), c^{\code{F}}) \to (F(\lim \Net ), F(c^\bullet))\) be the assumed cone isomorphism.
    Let \(b^{\code F}\) be the cone homomorphism induced by the restriction of the limit cone for \(\Net\) to the subcategory \({\code F}\Word\Sys\), and \(\phi^{\code F}\) be the cone homomorphism constructed in \cref{lem:W inclusion limit}.
	Then \((Z_\Sys, \zeta_\Sys)\) is a final \(F\)-coalgebra relative to \(\Sys\), where 
	\begin{gather}
		\label{eq:final relative coalg}
        \begin{gathered}
            \begin{tikzcd}[ampersand replacement = \&]
                Z_\Sys = \lim \Net 
                    \ar[d, "b^{\code F}"]
                    \ar[r, blue, "\zeta_\Sys"]
                \& F(\lim \Net)
                \\
                \lim(\Net \circ \Incl_{\code F})
                    \ar[r, "\phi^{\code F}"]
                \& \lim(F \circ \Net)
                    \ar[u, "\tau^{\code F}"]
            \end{tikzcd}
        \end{gathered}
        \qquad 
        \mathcolor{blue}{\zeta_\Sys} = \tau^{\code F} \circ \phi^{\code{F}} \circ b^{\code{F}}
    \end{gather}
\end{restatable}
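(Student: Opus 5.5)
The plan has three parts: show that \((Z_\Sys, \zeta_\Sys)\) satisfies every \(E_i\), show that every coalgebra in \(\Coalg(F, \Sys)\) has a homomorphism into it, and show that this homomorphism is unique. Throughout, the key tool is \cref{lem:W inclusion limit}. For every word \(w\) it identifies \(\lim(\Net \circ \Incl_w)\) with \(\lim(\WFunc w \circ \Net)\), and pitched-continuity of \(F\) and the \(E_i\) (hence of every \(\WFunc w\)) identifies the latter with \(\WFunc w(\lim \Net)\). So for each \(w\) we obtain an arrow
\[
\zeta^w = \tau^w \circ \phi^w \circ b^w \colon Z_\Sys \to \WFunc w Z_\Sys ,
\]
where \(\tau^w\) is the pitched-continuity isomorphism. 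For \(w = \code F\) this is exactly \(\zeta_\Sys\). A routine induction on word length, using that each \(\phi\) and \(\tau\) is a cone homomorphism, should give \(\zeta^{\code F^{n}} = \zeta_\Sys^n\).

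\emph{Satisfaction.} Here I would show \(\zeta_\Sys^{n_i} = e^{(i)}_{Z_\Sys} \circ \zeta^{\code E_i}\). Both sides are arrows into \(F^{n_i}(\lim\Net)\), which is a limit of \(F^{n_i}\circ\Net\) by pitched-continuity. It therefore suffices to compare the two sides after composing with each projection \(F^{n_i}(c^\bullet_u)\). After this projection, the left side becomes \(c^\bullet_{\code F^{n_i}u}\) and the right side becomes \(\Net(\code e^{(i)}_u)\circ c^\bullet_{\code E_i u}\). These agree because \(c^\bullet\) is a cone over \(\Net\), which contains the arrows \(\code e^{(i)}_u\). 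The real work in this step is tracking the isomorphisms \(\tau, \phi, b\) through the naturality of \(e^{(i)}\).

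\emph{Existence.} Given \((X,\beta)\) satisfying \(\Sys\), with factorizations \(\e^{(i)}_\beta \colon X \to E_iX\), I would build a cone \((X, x)\) over \(\Net\) by recursion on words:
\[
x_\bullet = \bang_X, \qquad x_{\code F w} = F(x_w)\circ\beta, \qquad x_{\code E_i w} = E_i(x_w)\circ \e^{(i)}_\beta .
\]
Naturality over the generating arrows is checked as follows. The arrows \(\bang\) are handled by terminality. The arrows \(\code F(f)\) and \(\code E_i(f)\) are handled by induction. For \(\code e^{(i)}_w\), one needs \(e^{(i)}_{\Net w}\circ E_i(x_w)\circ\e^{(i)}_\beta = F^{n_i}(x_w)\circ\beta^{n_i}\). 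This follows from naturality of \(e^{(i)}\), the identity \(e_X\circ\e_\beta=\beta^{n_i}\), and the observation that \(x_{\code F^{n_i}w} = F^{n_i}(x_w)\circ\beta^{n_i}\). The cone induces \(h \colon X \to Z_\Sys\). To see that \(h\) is a homomorphism, compare \(\zeta_\Sys\circ h\) and \(F(h)\circ\beta\) after projecting through the limit \(F(\lim\Net)\). Both reduce to \(x_{\code F u}\).

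\emph{Uniqueness.} Let \(g\) be any homomorphism. Then \(c^\bullet \circ g\) is a cone that satisfies the same recursion as \(x\). The \(\code E_i\) clause holds because \(g\) intertwines \(\e_\beta\) with \(\zeta^{\code E_i}\), which we get from monicity of \(e^{(i)}\) together with the satisfaction step. By induction, \(c^\bullet_w \circ g = x_w\) for every \(w\), so \(g = h\).

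The main obstacle is the bookkeeping of the isomorphisms \(\tau^w\circ\phi^w\circ b^w\). In particular, one must verify that they compose coherently, i.e.\ \(\zeta^{\code F w} = F(\zeta^w)\circ\zeta_\Sys\) and \(\zeta^{\code E_i w} = E_i(\zeta^w)\circ \zeta^{\code E_i}\). I would prove these first, as a lemma, by testing against limit projections.
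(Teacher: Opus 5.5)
Your proposal is correct and follows the paper's proof essentially step for step. Satisfaction is shown by comparing $\zeta_\Sys^{n_i}$ with $e^{(i)}_{\lim\Net}\circ\tau^{\code{E}_i}\circ\phi^{\code{E}_i}\circ b^{\code{E}_i}$ against the limit projections of $F^{n_i}(\lim\Net)$; existence uses the recursively defined cone $x_w$, which is the paper's $\lambda_w = \WFunc{w}(\bang_X)\circ\beta^w$; and uniqueness comes from showing that any homomorphism induces that same cone. Your additions are the uniform definition $\zeta^w = \tau^w\circ\phi^w\circ b^w$ and the explicit intertwining $\zeta^{\code{E}_i}\circ g = E_i(g)\circ\e^{(i)}_\beta$ via monicity of $e^{(i)}$, and these make precise what the paper leaves implicit when its uniqueness step uses $\zeta_\Sys^w$ for arbitrary words.
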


For \(\Sys = \emptyset\), \cref{thm:generalized Adamek-Barr construction} is precisely the terminal sequence construction.
Note that this means that \(\Pow\) is not pitched-continuous, as it is not \(\omega^{\op}\)-continuous.
However, polynomial functors, like \(\Delta_B \times \Id^A\) on \(\Set\), are pitched-continuous, and provide us with a number of interesting examples of when the terminal net construction produces a relatively final coalgebra.

\begin{restatable}{lemma}{polynomialpitched}
    \label{lem:polynomial pitched}
    Every polynomial endofunctor on \(\Set\) is pitched-continuous.
\end{restatable}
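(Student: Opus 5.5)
Polynomial endofunctors on \(\Set\) have the form \(PX = \coprod_{s \in S} X^{A_s}\) for a set \(S\) of operation symbols and arities \(A_s\). This class includes \(\Id\), the constant functors \(\Delta_B\), and is closed under products and coproducts. My plan is to show that products and coproducts of pitched-continuous functors are pitched-continuous, and to handle the base cases \(\Id\), \(\Delta_B\), and \(\Id^{A}\) directly.

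\textbf{Base cases.} The functors \(\Id\) and \(\Id^A\) preserve all limits, since products commute with limits. So the first real case is \(\Delta_B\). For a pitched diagram \(D\colon \mathbf D \to \Set\), the composite \(\Delta_B \circ D\) is the constant diagram on \(B\) with identity arrows. Its limit is \(B\) as long as \(\mathbf D\) is connected. Connectedness holds because every object \(B'\) has a pitch arrow into the central chain, and the central chain is itself connected. This is exactly where the pitched assumption does work that a bare \(\omega^{\op}\)-chain assumption would not; it would fail for an empty or disconnected diagram. Closure under products \(G_1 \times G_2\), and more generally \(\prod_i G_i\), is then immediate, since limits commute with limits.

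\textbf{Coproducts (main obstacle).} I expect coproducts \(\coprod_{s} G_s\) to be the hardest step. Write \(L = \lim D\). An element of \(\lim (\coprod_s G_s\circ D)\) is a compatible family \((t_j)_j\) with \(t_j \in \coprod_s G_s D_j\). By connectedness, and because each arrow \(\coprod_s G_s(f)\) preserves the summand index, all \(t_j\) lie in the same summand \(s\). Hence
\[
\lim\Big(\coprod_s G_s \circ D\Big) \cong \coprod_s \lim (G_s \circ D) \cong \coprod_s G_s L ,
\]
where the last isomorphism uses pitched-continuity of each \(G_s\). I still need to check that the canonical comparison map \(\coprod_s G_s L \to \lim(\coprod_s G_s \circ D)\) is the map realizing this bijection. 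That follows by inspecting components.

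Putting the pieces together, any polynomial functor is obtained as \(\coprod_s \Id^{A_s}\), so it is pitched-continuous. It is perhaps cleaner to argue directly: \(\lim\big(\coprod_s D^{A_s}\big) = \coprod_s (\lim D)^{A_s}\), using the connectedness of pitched diagrams just once. The monicity of pitch arrows is not needed here. It only matters for other functors, such as the constraints \(E\), and for the preservation facts noted before the lemma.
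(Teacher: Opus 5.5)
Your proof is correct and takes essentially the same route as the paper. The paper also treats \(\Id\), \(\Delta_1\) and products as routine, and puts all the work into the coproduct case by showing that every element of a cone lies in a single summand. It does this by following each object's pitch arrow and the chain arrows down to \(A_0\), which is the directed special case of your connectedness argument; the paper even remarks that the lemma follows from connectedness of pitched diagrams together with Par\'e's result on connected limits.
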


Since every equational path constraint \(E \subseteq F^n\) is an equalizer, if \(F\) is pitched-continuous, then so is \(E\).
Combining this with \cref{thm:acc set eq} tells us that for any polynomial functor \(F \colon \Set \to \Set\), every singular path constraint is equational.
We immediately obtain the following from \cref{thm:generalized Adamek-Barr construction}.

\begin{corollary}
	\label{cor:set based terminal net}
	Let \(F \colon \Set \to \Set\) be a polynomial functor and let \(\Sys\) be a system of singular path constraints for \(F\)-coalgebras. 
	Then the \(F\)-coalgebra \((Z_\Sys, \zeta_\Sys)\) obtained in \cref{thm:generalized Adamek-Barr construction} is a final \(F\)-coalgebra relative to \(\Sys\).
\end{corollary}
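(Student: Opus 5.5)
The corollary follows from \cref{thm:generalized Adamek-Barr construction} once we check its hypotheses: \(\Set\) is complete, \(F\) is pitched-continuous, and every \(E \in \Sys\) is pitched-continuous. The second holds by \cref{lem:polynomial pitched}, since \(F\) is polynomial. The real work is showing that an arbitrary singular path constraint \(e \colon E \subseteq F^n\), which need not be given as equational, is pitched-continuous.

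First, polynomial functors on \(\Set\) are accessible. Each is a coproduct \(\coprod_{s} \Id^{A_s}\), and an element of \(FX\) lies in \(FU\) for \(U\) the image of the corresponding function \(A_s \to X\), which has size \(\le \card{A_s}\); taking \(\kappa\) regular above \(\sup_s \card{A_s}\) works. Since \(F^n \in \Shape(F)\), \cref{thm:acc set eq} applies: there are \(\lambda, \rho \colon F^n \Rightarrow \Omega_\kappa\) with \(e_X\) an equalizer of \(\lambda_X, \rho_X\) for every \(X\).

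Next, take a pitched diagram \(D \colon \mathbf D \to \Set\) with limit cone \((L, l)\). I would show directly that \((EL, E(l))\) is a limit of \(E \circ D\), using only that \(F^n\) is pitched-continuous (as a composite of pitched-continuous functors preserving monics, since \(F\circ D\) is again pitched). Let \((A, a)\) be a cone over \(E \circ D\). Then \((A, e \circ a)\) is a cone over \(F^n \circ D\) by naturality of \(e\), so it factors uniquely through \(F^n L\) via some \(m \colon A \to F^n L\). Uniqueness of a mediating map into \(EL\) is immediate because \(e_L\) is monic. For existence, I must show \(m\) lands in \(EL\), i.e.\ \(\lambda_L \circ m = \rho_L \circ m\).

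Proving \(m\) lands in \(EL\) is the main obstacle, because \(\Omega_\kappa\) is not assumed to preserve these limits, so the maps \(\Omega_\kappa(l_j)\) need not be jointly monic. My first approach is to bypass \(\lambda, \rho\) and work elementwise. In \(\Set\), \(EX \subseteq F^nX\) is a subfunctor, so for \(t = m(x)\) we have \(F^n(l_j)(t) \in ED_j\) for all \(j\). Using the pitch, pick a central object \(A_i\) with \(D(e_B)\) monic for each \(B\). For a polynomial \(F^n\), which is itself polynomial, the limit \(L\) sits inside \(\lim_i D(A_i)\), and \(t \in F^nL\) is a shape \(s\) together with a map \(A_s \to L\). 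I would argue, using the accessibility bound from the proof of \cref{thm:acc set eq}, that membership of \(t\) in \(EL\) is witnessed on a \(\kappa\)-small subset \(U \subseteq L\) with \(t \in F^n U\). I would then use that \(E\) is determined by its restriction to small sets via the co-sieve description: \(t \in EL\) iff \(F^n(g)(t) \in E\lambda'\) for all \(g \colon L \to \lambda'\), \(\lambda' < \kappa\). Each such \(g\) restricted to \(U\) should factor through some projection \(l_j\) after passing far enough down the central chain, because \(U\) is small and the chain separates points of \(L\). This reduces membership to \(F^n(l_j)(t) \in ED_j\), which we know. If this factorization argument fails in general, the fallback is to note that the paper's remark that ``every equational \(E \subseteq F^n\) is an equalizer'' gives closure under limits whenever the equalizing functor is replaced by one jointly preserving the relevant limits, and to choose \(H = F^n\)-valued witnesses instead.

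Once \(E\) is shown pitched-continuous, \cref{thm:generalized Adamek-Barr construction} yields finality of \((Z_\Sys, \zeta_\Sys)\) relative to \(\Sys\).
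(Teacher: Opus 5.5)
Your reduction follows the paper's exactly. You apply \cref{thm:generalized Adamek-Barr construction}, get pitched continuity of \(F\) from \cref{lem:polynomial pitched}, make each \(E\) equational over \(\Omega_\kappa\) via \cref{thm:acc set eq}, and then need \(E\) to be pitched-continuous. The paper handles that last step with the one-line remark before the corollary: an equational \(E \subseteq F^n\) is an equalizer, so it inherits pitched continuity from \(F\). You are right that this tacitly needs the maps \(\Omega_\kappa(l_j)\) to be jointly monic.

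Your elementwise repair breaks at the step ``each \(g\) restricted to \(U\) factors through some projection \(l_j\)''. The projections separate points of \(L\) only jointly, so a single \(l_j\) is guaranteed to be injective on \(U\) only when \(U\) is finite. Here \(U\) is the image of the map \(A_s \to L\) underlying \(t\), which is infinite in general once \(F\) has an infinite arity, such as \(\Id^{A}\) with \(A\) infinite. The fallback also fails: limit-preserving witnesses \(\lambda,\rho \colon F^n \Rightarrow H\) would make \(E\) pitched-continuous, and in general \(E\) is not.

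In fact the gap cannot be closed, because the corollary is false for polynomial functors with infinite arities. Let \(F = \Delta_2 \times \Id^{\NN}\) and \(EX = 2 \times \{f \in X^{\NN} \mid f \text{ has finite image}\}\), a singular path constraint of length \(1\).

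First, \(E\) is not pitched-continuous. It does not preserve the limit of the chain of truncations \(2^0 \leftarrow 2^1 \leftarrow 2^2 \leftarrow \cdots\): every map \(\NN \to 2^k\) has finite image, but not every map \(\NN \to 2^{\NN}\) does.

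Second, the corollary's conclusion itself fails:
\begin{itemize}
\item Identify \(F^k1\) with depth-\(k\) truncations of behaviours \(\theta \colon \NN^* \to 2\). Each \(\Net w\) then imposes a depth-bounded condition.
\item Consequently \(Z_\Sys = \lim\Net\) is the set of \(\theta\) such that \(\{\theta_{uc}|_d \mid c \in \NN\}\) is finite for every word \(u\) and every \(d\). Here \(\theta_v\) is the \(v\)-derivative and \(|_d\) is restriction to words of length \(< d\).
\item Take \(\theta(\e) = 0\), and let \(\theta_c\) be the indicator of the single word consisting of \(c\) copies of the letter \(0 \in \NN\).
\item At depth \(d\) the \(\theta_c\) take at most \(d+1\) values, and each \(\theta_c\) has only finitely many derivatives. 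Hence \(\theta \in Z_\Sys\).
\item However, \(\zeta_\Sys(\theta) = (0, c \mapsto \theta_c)\) has infinite image, so \((Z_\Sys, \zeta_\Sys)\) does not satisfy \(\Sys\).
\end{itemize}

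Your ``go far enough down the chain'' idea does work when every arity of \(F^n\) is finite. Then \(U\) is finite, a single \(l_{A_i}\) is injective on it, and the step goes through, at least for \(L \neq \emptyset\).
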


\section{Future and Related Work}
\label{sec:discussion}

Let us start by addressing areas where further effort is needed.
We are currently missing a structural characterization of covarieties that correspond to (equational) path constraints over a given \(H\). 
Finding a useful characterization appears to be difficult, even for transition systems and frame conditions; see \cref{eg:transitivity again,rem:frame}. 
Also, we have touted ease-of-specification as a strength of path constraints, but only explicitly provided one illustration in \cref{eg:transitivity again}.
Syntactic specifications of path constraints are also implicit in \cref{thm:monoid presentations,eg:independence}, but a more comprehensive study is needed.
We would also like more general versions of \cref{thm:generalized Adamek-Barr construction} for nonsingular path constraints and accessible functors in general.

Another potential area of future work concerns ``fuzzy'' alternatives to equational path constraints. 
There has been increasing interest in coalgebras in the category \(\mathcal V\mathbf{Cat}\) of small categories enriched over a quantale \(\mathcal V\).
These generalize pseudometric and preordered spaces, and account for relations like behavioural distance~\cite{WORRELL2000337,DBLP:conf/lics/Kori0RK24,dangelo_et_al:LIPIcs.CONCUR.2024.20,beohar_et_al:LIPIcs.STACS.2024.10,DBLP:conf/calco/BalanKV15}.
Given \(H,F \colon \mathcal V\mathbf{Cat} \to \mathcal V\mathbf{Cat}\) and \(J \in \Shape(F)\), one can specify a natural inclusion \(E \subseteq J\) with two transformations \(\lambda, \rho \colon J \Rightarrow H\) and a value \(\e\in \mathcal V\) by defining \(EX = \{t \in JX \mid \nu(\lambda_X(t), \rho_X(t)) \ge \e\} \subseteq JX\), where \(\nu\) is the valuation of \(HX\) (one could even write \(\lambda \equiv_\varepsilon \rho\) for this constraint, following~\cite{DBLP:conf/lics/MardarePP16}).
It is readily checked that these path constraints also coaxiomatize covarieties, and that they can capture properties like a bound on the distances traversed by transitions between states in a pseudometric space.
This has certainly piqued our interest, but further study must be left for future work. 
\smallskip

As for related work, there are several branches of the literature that are clearly relevant and contain similar results to ours. 

\smallskip\noindent\textbf{Coalgebraic modal logic.} 
The literature most easily compared with path constraints comes from \emph{coalgebraic modal logic}~\cite{Moss1999}.
Classically, the modal logic of possibility and necessity obtains a semantics from Kripke models, which are transition systems (\cref{eg:examples of coalgebras}\eqref{eg:it:transition}) with a labelling by propositional formulas.
This is the traditional setting of Kripke frames as studied in~\cref{eg:transitivity again}.
Coalgebraic modal logic generalizes the classic modal logic of possibility and necessity from Kripke frames and models to general coalgebras.
Developments in coalgebraic modal logic since its introduction are a bit overwhelming in number, so we will just point the reader to~\cite{CirsteaKPSV11,KupkeP11,KupkeR21,SchroederExpress} for references to more recent results.

For us, the point of similarity between path constraints and formulas in coalgebraic modal logics begins with Kurz~\cite{Kurz98}, who proposes that modal formulas be taken to be synonymous with coequations, and uses Moss's modal logic~\cite{Moss1999} as an example where every coequation is equivalent to a formula.
We would be very interested to see a comprehensive translation between modal formulas and equational path constraints over an interesting \(H\) by some concrete means.

An idea from the coalgebraic modal logic literature that closely resembles ours is the notion of \emph{path} used by Goldblatt~\cite{Goldblatt02} (based on work of Jacobs~\cite{Jacobs00}) in his equational logic for polynomial coalgebras.
Although they were not phrased as such, paths through polynomial coalgebras are special cases of natural transformations \(F \Rightarrow H\) for polynomial \(F\) and \(H\) that are compositions or projections, reverse inclusions, and evaluations. 
Goldblatt even uses them in a similar way as we use left and right transformations, as parts of equations.
Goldblatt's logic is less expressive than equational path constraints in general, because it cannot express properties like transitivity (\cref{eg:transitivity again}).
The precise relationship between Goldblatt's equational logic for polynomial coalgebras and our equational path constraints would be interesting to discover.

\smallskip\noindent\textbf{Duality of equations and coequations.}
There is a significant line of work in which equations and coequations are related by exploiting the duality of algebra and coalgebra~\cite{Ballester-BolinchesCRS2015,SalamancaBR16,GherkeGP08}.
Intuitively, if \(\Cat\) and \(\mathbf{D}\) are dual categories, then varieties of algebras for an endofunctor on \(\mathbf D\) are dual to covarieties of coalgebras for the corresponding dual endofunctor on \(\Cat\), allowing for a two-way translation of axiomatization and coaxiomatization techniques between these categories.

In~\cite{Ballester-BolinchesCRS2015}, in particular, the authors relate equations and coequations for classic deterministic automata, Moore automata with two outputs (\(B = 2\)).
In their setting, coequations are subsets of a final automaton, i.e., coequations in \(1\) colour.
They show that congruences on the monoid of words \(A^*\) are in dual correspondence with these coequations by exhibiting a relatively final (in their terminology, \emph{cofree}) coalgebra satisfying the congruence~\cite[Theorem 22]{Ballester-BolinchesCRS2015}.
This work is generalized to the more abstract setting of dual equivalences in~\cite{SalamancaBR16}.
In both papers~\cite{SalamancaBR16,Ballester-BolinchesCRS2015}, concrete relatively final automata are constructed using a method similar to ours (compare \cref{eg:grids,eg:biinfinite streams} with Examples 26, 27, and 40 in~\cite{Ballester-BolinchesCRS2015} and~\cite[Section 7.2]{SalamancaBR16}).
\cref{thm:monoid presentations} generalizes these constructions to arbitrary sets of output symbols, and \cref{thm:generalized Adamek-Barr construction} to arbitrary polynomial functors.

\smallskip\noindent\textbf{Hidden algebra.}
The coalgebraic picture of hidden algebra~\cite{Cirstea97,Worrell98} is also worth mentioning, since it deals with behavioural properties of coalgebras specified as equations.
To quote Worrell~\cite{Worrell98}, hidden algebras can be seen as deterministic coalgebras.
In our view, this implies a well-defined notion of \emph{path}.
Paths in a hidden algebra can be specified by terms from an algebra, and one can write equations between terms to describe behavioural properties. 
These appear to be similar to equational path constraints over \(\Id\), but the precise connection is unclear at the time of writing. 

\smallskip\noindent\textbf{Logic of coequations.}
Let us also mention Ad\'amek~\cite{Adamek05} and Ad\'amek and Schwencke~\cite{Schwencke10}, where the authors introduce and develop a \emph{logic of coequations} that differs from coalgebraic modal logic in that it directly deals with behaviours. 
The logic of coequations allows for deductive reasoning about avoidance of behaviour patterns in coalgebras.
In~\cite{Adamek05}, Ad\'amek shows that a covariety has a chromatic number \(\le \kappa\) if and only if it is closed under a strengthened notion of bisimilarity called \emph{\(\kappa\)-bisimilarity}, improving an earlier result of Gumm and Schr\"oder~\cite{GummS01}.
This turns out to be useful for giving lower bounds on chromatic numbers: for example, paired with \cref{thm:coax}, one can show that the commutative Moore automata have a chromatic number of exactly \(2\).

\smallskip\noindent\textbf{Differential equations.}
Finally, recall that our initial motivation came from 
a paper of Boreale~\cite{Boreale19}.
This line of work originates in the correspondence between differential equations and coinductive specifications of streams introduced by Pavlovic and Escardo~\cite{PavlovicE98}.
The latter paper led to Rutten's \emph{stream calculus}~\cite{Rutten01}, which provides a syntax for algebraically manipulating streams from a field.

By generalizing the universal property of streams to the universal property of formal power series in commuting variables (\cref{eg:grids}), one can rather straightforwardly obtain a higher-dimensional version of Rutten's stream calculus.
\cref{thm:monoid presentations} appears to take this a step further to arbitrary monoid-indexed sets, and it is not difficult to imagine what a corresponding calculus of monoid-indexed sets from a field might look like.
Again, we leave working out the details for future work.


%

\bibliographystyle{splncs04}
\bibliography{refs.bib}

\if\arxiv1
    \appendix

    \input{appendix.tex}
\else 
    \if\inclappendix1
        \appendix

\input{appendix.tex}
    \fi
\fi

\end{document}

%% file: appendix.tex
\newenvironment{claimproof}[1][\proofname]
{\newcommand\qedsymbol{$\blacksquare$}\proof[#1]}
{\endproof}

\section{Omitted Proofs}
\label{sec:appendix}

\subsection*{\cref{sec:path constraints}}





Let's move on to the first substantial result of the paper.
Recall the endofunctor \(\Omega_\kappa\) on \(\Set\) defined by
\[
    \Omega_\kappa X = \{S \mid \text{\(S\) is a \(\kappa\)-co-sieve on \(X\)}\}
    \quad 
    \Omega_\kappa(h)(S) = \{g \mid g \circ h \in S\}
\]
for any set \(X\) and any function \(h \colon X \to Y\). 

\weaklyaccessibleequational*

The general idea behind this proof is that \(\Omega_\kappa\) mimics the canonical construction of a subobject classifier of \(\Set^{\Set_\kappa^\op}\), where \(\Set_\kappa\) is the category of sets of cardinality at most \(\kappa\)~\cite{maclane1992sheaves}.

\begin{proof}
    Let \(F\) be an accessible endofunctor on \(\Set\) that preserves inclusions, and let \(e \colon E \subseteq J\) be any path constraint.
    Since \(J\) is formed by taking small (set-sized) products and finitely many compositions of functors formed from \(F\) and \(\Id\), both of which are accessible and preserve inclusions, \(J\) is \(\kappa\)-accessible for some regular cardinal \(\kappa\) and preserves inclusions.

    Now consider the following two natural transformations:
    first, the ``true'' transformation \(\top \colon J \Rightarrow \Omega_\kappa\), defined by
    \[
        \top_X(v) = \{f \mid \text{\(f \colon X \to \lambda\) for some \(\lambda < \kappa\)} \}
    \]
    and second, the transformation \(\chi^E \colon J \Rightarrow \Omega_\kappa\) defined by
    \[
        \chi_X^E (v) = \{f \mid \text{\(f \colon X \to \lambda\) with \(\lambda < \kappa\) and \(J(f)(v) \in E\lambda\)}\}
    \] 
    Given \(v \in EX \subseteq JX\) and \(f \colon X \to \lambda\) for \(\lambda < \kappa\), we know that \(E(f)(v) \in E\lambda\), by definition, and therefore \(J(f)(v) = E(f)(v) \in E\lambda\).
    This puts \(f \in \chi_X^E(v)\), so \(\chi^E_X(v) = \top_X(v)\). 
    
    Conversely, suppose \(v \in JX\) and \(\chi_X^E(v) = \top_X(v)\).
    This tells us that for any function \(f \colon X \to \lambda\) with \(\lambda <\kappa\), \(J(f)(v) \in E\lambda\).
    Since \(J\) is \(\kappa\)-accessible, there is a subset \(V \subseteq X\) such that \(v \in FV \subseteq FX\) and \(\card{V} < \kappa\).
    Find the cardinal \(\lambda <\kappa\) with a bijection \(j \colon V \cong \lambda\).
    Now, (using the axiom of choice) there is a retraction \(r \colon X \to V\) of the inclusion \(V \subseteq X\).
    We therefore know that \(v = J(r)(v)\), because \(J\) preserves inclusions and by functoriality \(J(r)\) is a retraction of the inclusion \(JV \subseteq JX\).
    We have 
    \[
        J(j)(v) 
        = J(j) \circ J(r)(v) 
        = J(j \circ r)(v) 
        \in J\lambda
    \]
    Since \(j \circ r \colon X \to \lambda < \kappa\), we have assumed that \(J(j \circ r)(v) \in E\lambda\), so \(J(j)(v) \in E\lambda\).
    Therefore, \(v = E(j^{-1}) \circ E(j)(v) \in EV \subseteq EX\), so \(v \in EX\).
    
    We have just shown that 
    \[
        EX = \{v \in JX \mid \chi^E_X(v) = \top_X(v)\}
    \]
    In other words, for any set \(X\), the following diagram is an equalizer diagram.
    \[\begin{tikzcd}
        EX \ar[r, hook, "e_X"] 
        & JX 
        \ar[r, "\chi^E_X", shift left]
        \ar[r, "\top_X"', shift right]
        & \Omega_\kappa X
    \end{tikzcd}\]
    It follows that \(e\colon E \subseteq J\) is an equational path constraint over \(\Omega_\kappa\).
\end{proof}

The following technical lemma is very useful in the proofs below.

\jpathfunctorial*

\begin{proof}
    We first need to show that if \(h \colon (X, \beta) \to (Y, \gamma)\) is an \(F\)-coalgebra homomorphism, then \(h^J \colon (X, \beta^J) \to (Y, \gamma^J)\) is a coalgebra homomorphism. 
    This proceeds by induction on \(J\). 
    \begin{description}
        \item[Base case.] In the base case, either \(J = \Id\) or \(J = F\). 
        The statement is trivial if \(J = \Id\), and the statement follows by assumption if \(J = F\). 

        \item[Induction step 1.] 
        Given a set \(I\), suppose that \(J_i \in \Shape(F)\) and that \(h\) is a \(J_i\)-coalgebra homomorphism for all \(i \in I\).
        Let \(J = \prod_{i \in I} J_i\).
        Then
        \begin{align*}
            \gamma^J \circ h^J
            &= \langle \gamma^{J_i}\rangle_{i \in I} \circ h  \tag{def.~\(\gamma^{\prod J_i}\)} \\
            &= \langle \gamma^{J_i} \circ h\rangle_{i \in I} \tag{universal prop.~of \(\prod\)}\\
            &= \langle J_i(h) \circ \beta^{J_i}\rangle_{i\in I} \tag{ind.~hyp.}\\
            &= (\prod_{i \in I} J_i)(h) \circ \langle \beta^{J_i}\rangle_{i \in I} \tag{universal prop.~of \(\prod\)}\\
            &= J(h^J) \circ \beta^{J} \tag{def.~\(\beta^{\prod J_i}\)}
        \end{align*}
        
        \item[Induction step 2.] 
        Now suppose that \(h\) is both a \(J_1\)-coalgebra homomorphism and a \(J_2\)-coalgebra homomorphism.
        Let \(J = J_2 \circ J_1\).
        Then 
        \begin{align*}
            \gamma^J \circ h^J
            &= J_2(\gamma^{J_1}) \circ \gamma^{J_2} \circ h  \tag{def.~\(\gamma^{J_2 \circ J_1}\)} \\
            &= J_2(\gamma^{J_1}) \circ J_2(h) \circ \beta^{J_2} \tag{\(h\) a \(J_2\)-coalg.~homom.}\\
            &= J_2(\gamma^{J_1} \circ h) \circ \beta^{J_2} \tag{\(J_2\) a functor}\\
            &= J_2(J_1(h) \circ \beta^{J_1}) \circ \beta^{J_2} \tag{ind.~hyp.}\\
            &= J_2J_1(h) \circ J_2(\beta^{J_1}) \circ \beta^{J_2} \tag{\({J_2}\) a functor}\\
            &= J(h^J) \circ \beta^{J_1} \tag{def.~\(\beta^{{J_2} \circ J'}\)}
        \end{align*}
    \end{description}

    The key observations in the rest of this proof are two-fold: the first is that for any endofunctor \(G \colon \Cat \to \Cat\), the forgetful functor \(U^G \colon \Coalg(G) \to \Cat\) preserves and reflects (a.k.a., creates) colimits~\cite{Rutten00}.
    This then applies to both \(G = F\) and \(G = J\). 
    The second observation is that the triangle below commutes: 
    \[\begin{tikzcd}
        \Coalg(F) \ar[rr, "(-)^J"] \ar[dr, "U^F"'] && \Coalg(J) \ar[dl, "U^J"] \\
        & \Cat &
    \end{tikzcd}\]
    It can immediately be seen from the triangle that \((-)^J\) preserves and reflects the subcoalgebra relation and homomorphic images. 

    To see that \((-)^J\) preserves and reflects colimits, let \(D \colon \mathbf D \to \Coalg(F)\) be any small diagram and let 
    \(((X, \beta), c)\) 
    be a cocone for \(D\).
    Then \(c\) is a colimiting cocone if and only if 
    \((U^F(X, \beta), U^F(c)) = (X, U^F(c))\) 
    is a colimiting cocone for \(U^F \circ D\), because \(U^F\) preserves and reflects colimits~\cite[Theorem 4.5]{Rutten00}.
    But \(U^F = U^J \circ (-)^J\), so \((X, c)\) is a colimiting cocone for \(D\) if and only if 
    \((U^J \circ (-)^J (X, \beta), U^J(c^J)) = (X, U^J(c^J))\) 
    is a colimiting cocone for 
    \[U^F \circ D = U^J \circ (-)^J \circ D\]
    Since \(U^J\) preserves and reflects colimits (also~\cite[Theorem 4.5]{Rutten00}), 
    \((X, U^J(c^J))\) 
    is a colimiting cocone if and only if 
    \(((X, \beta^J), c^J)\) 
    is a colimiting cocone for \((-)^J \circ D\). 
    Chaining our reasoning together, \(c\) is a colimiting cocone if and only if \(c^J\) is a colimiting cocone.
    This shows that \((-)^J\) preserves and reflects colimits.
    In particular, \((-)^J\) preserves and reflects coproducts. 
\end{proof}

\paragraph{Details regarding \cref{eg:transitivity again,rem:frame}.}
Let us spell out some of the details regarding the capturing of some frame conditions.
For the duration of this subsection, we refer to transition systems as \emph{Kripke frames}.

Recall the set \(\NatForm\) of \emph{natural} modal formulas, generated by the grammar \[
    \varphi,\psi ::= p \mid \bot \mid \varphi \vee \psi \mid \Diamond \varphi
\]
Here, \(p\) is meant to represent a fixed basic proposition.
We define the functor \(K^\varphi\) and natural transformation \(\sem{\varphi} \colon K^\varphi \Rightarrow \Pow\) for each \(\varphi \in \NatForm\) recursively by 
\begin{gather*}
    \begin{aligned}
        K^p &= K^\bot = \Id 
        & \sem{p}_X(x) &= \{x\}
        \quad \sem{\bot}_X(x) = \emptyset
        \\
        K^{\varphi \vee \psi} 
            &= K^\varphi \times K^\psi 
        & \sem{\varphi\vee \psi}_X(U, V) 
            &= \sem{\varphi}_X(U) \cup \sem{\psi}_X(V) \
        \\
        K^{\Diamond \varphi} 
            &= \Pow \circ K^\varphi
        & \sem{\Diamond \varphi}_X(U) 
            &= \bigcup \Pow (\sem{\varphi}_X)(U) \\
    \end{aligned}  
\end{gather*}
A simple induction argument establishes that \(K^\varphi \in \Shape(\Pow)\) for all \(\varphi \in \NatForm\).

Let \(\mathbb P\) be the set of propositions, including the fixed proposition \(p\).
Recall that a Kripke model is a structure of the form \(\mathcal M = (X, \langle \nu, \beta\rangle)\) where \(\nu \colon X \to \Pow \mathbb P\) and \(\beta \colon X \to \Pow X\).
For each \(x \in X\), the satisfaction relation \(\mathcal M, x \models \varphi\) is defined recursively on \(\varphi\) as follows~\cite{BlackburnRV2001}:
\begin{description}
    \item[Base Case] given \(q \in \mathbb P\), \(\mathcal M, x \models q\) if \(q \in \nu(x)\), 
    \item[Recursive Step] given modal formulas \(\varphi, \varphi_1,\varphi_2\), 
    \begin{itemize}
        \item \(\mathcal M, x \models \varphi_1 \vee \varphi_2\) if either \(\mathcal M \models \varphi_1\) or \(\mathcal M \models \varphi_2\),

        \item \(\mathcal M, x \models \neg \varphi\) if \(\mathcal M \not \models \varphi\),

        \item \(\mathcal M, x \models \Diamond \varphi\) if \(\exists y \in X\) such that \(y \in \beta(x)\) (i.e., \(x \to y\)) and \(\mathcal M, y \models \varphi\).
    \end{itemize}
    The semantics of the operations \(\wedge\), \(\to\), and \(\Box\) are derived from the operations above, via \(\varphi \wedge \psi = \neg(\neg \varphi \vee \neg \psi)\), \(\varphi \to \psi = \neg \varphi \vee \psi\), and \(\Box \varphi = \neg \Diamond \neg \varphi\).
\end{description}
We then say that \(\mathcal M \models \varphi\) if \(\mathcal M, x \models \varphi\) for all \(x \in X\), and \(\models \varphi\) if \(\mathcal M \models \varphi\) for all Kripke models \(\mathcal M\).
Given a Kripke frame \((X, \beta)\), with \(\beta \colon X \to \Pow X\), \((X, \beta) \models \varphi\) if for any \(\nu \colon X \to \Pow \mathbb P\), the Kripke model \((X, \langle \nu, \beta\rangle) \models \varphi\).
Note that \(\mathcal M, x \not \models \bot\) always, and therefore \(\mathcal M \not\models \Diamond^n \bot\) for \(n > 0\).\smallskip

Again, to make it absolutely clear, \(\NatForm\) is built up from exactly one fixed basic proposition, which we have called \(p\).

\begin{proposition}
    For \(\varphi,\psi \in \NatForm\), the frame condition \(\varphi \leftrightarrow \psi\) is equivalent to the constraint \(\lambda \equiv \rho\) obtained from the transformations 
    \(\lambda = \sem{\varphi} \circ \proj_{K^\varphi} \colon K^\varphi \times K^\psi \Rightarrow \Pow\) and
    \(\rho = \sem{\psi} \circ \proj_{K^\psi} \colon K^\varphi \times K^\psi \Rightarrow \Pow\).
\end{proposition}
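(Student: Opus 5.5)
The plan is to prove a single semantic identity by structural induction on \(\varphi \in \NatForm\), and then read off the equivalence using \cref{lem:equational}.

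\textbf{Key claim.} For every Kripke frame \((X, \beta)\), every valuation \(\nu\), and every \(x \in X\),
\[
    \sem{\varphi}_X\big(\beta^{K^\varphi}(x)\big) = \{\, y \in X \mid x \models \varphi \text{ under } \nu_y \,\},
\]
where \(\nu_y\) is the valuation making \(p\) true exactly at \(y\). Put differently, \(\sem{\varphi}_X(\beta^{K^\varphi}(x))\) is the set of states \(y\) such that \(\varphi\) holds at \(x\) when \(p\) is interpreted as \(\{y\}\). I expect the cleaner route is the following \emph{additive} version. For any \(P \subseteq X\),
\[
    (X,\langle \nu_P,\beta\rangle), x \models \varphi
    \iff
    \sem{\varphi}_X\big(\beta^{K^\varphi}(x)\big) \cap P \neq \emptyset .
\]

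\textbf{Induction.} The cases go as follows.
\begin{itemize}
    \item For \(p\): \(\beta^{\Id}(x) = x\), so \(\sem{p}_X(x) = \{x\}\), and \(\{x\} \cap P \neq \emptyset\) iff \(x \in P\).
    \item For \(\bot\): the set is \(\emptyset\), which never meets \(P\).
    \item For \(\vee\): \(\beta^{K^\varphi \times K^\psi} = \langle \beta^{K^\varphi}, \beta^{K^\psi}\rangle\), and the union meets \(P\) iff one of the two parts does.
    \item For \(\Diamond\varphi\): \(\beta^{\Pow K^\varphi}(x) = \Pow(\beta^{K^\varphi})(\beta(x))\), so
    \[
        \sem{\Diamond\varphi}_X\big(\beta^{K^{\Diamond\varphi}}(x)\big) = \bigcup_{z \in \beta(x)} \sem{\varphi}_X\big(\beta^{K^\varphi}(z)\big).
    \]
    This meets \(P\) iff some successor \(z\) satisfies \(\varphi\), by the induction hypothesis.
\end{itemize}
The one point needing care is that satisfaction depends only on \(\nu(p)\), because \(\varphi\) mentions only \(p\). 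Other propositions are therefore irrelevant, and it suffices to quantify over \(P = \nu^{-1}(p)\).

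\textbf{Conclusion.} Write \(S_\varphi(x) = \sem{\varphi}_X(\beta^{K^\varphi}(x))\). By \cref{lem:equational}, \(\lambda \equiv \rho\) holds iff \(S_\varphi(x) = S_\psi(x)\) for all \(x\). The frame satisfies \(\varphi \leftrightarrow \psi\) iff for all \(P\) and \(x\), \(S_\varphi(x)\) meets \(P\) exactly when \(S_\psi(x)\) does. The forward direction is immediate. For the converse, take the singleton \(P = \{y\}\) to get \(y \in S_\varphi(x) \iff y \in S_\psi(x)\).

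The main obstacle is bookkeeping in the \(\Diamond\) step: one must check that the recursive definition \(\beta^{K\circ J} = K(\beta^J)\circ\beta^K\), with \(K = \Pow\), unfolds to the image \(\Pow(\beta^{K^\varphi})(\beta(x))\). It does, since \(\beta^{\Pow} = \beta\).
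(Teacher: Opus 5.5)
Your argument is correct, but it takes a different route from the paper's.

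\textbf{The paper's route.} The paper first shows that \(\Diamond\) distributes over \(\vee\), both for modal satisfaction and for the maps \(x \mapsto \sem{\varphi}_X(\beta^{K^\varphi}(x))\). It uses this to rewrite every natural formula into a normal form \(\bigvee_i \Diamond^{n_i} p\). From the normal form it reads off \(\sem{\varphi}_X(\beta^{K^\varphi}(x))\) explicitly as the set of endpoints of paths from \(x\) whose length is one of the \(n_i\). It then proves the two directions separately: one with the valuation making \(p\) true at a single chosen state, the other by exhibiting a path witness.

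\textbf{Your route.} You bypass the normal form. A single structural induction proves that \(x \models \varphi\) under the valuation \(P\) iff \(S_\varphi(x) \cap P \neq \emptyset\). In other words, every natural formula acts as a union-preserving operator on the extension of \(p\). The equivalence then follows abstractly, because such an operator is determined by its values on singletons.

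\textbf{What each buys.} Your route is shorter, and it makes visible why the restriction to \(\NatForm\) matters: additivity in \(p\) is what \(\wedge\), \(\Box\) and \(\neg\) break. It also handles \(\bot\) with no special care. The paper pads \(\varphi\) and \(\psi\) to a common number \(k\) of disjuncts by repetition, which does not literally apply when one side is equivalent to \(\bot\) (an empty disjunction). Moreover, the paper's ``straightforward induction'' for the normal form already relies on your unfolding of \(\beta^{\Pow \circ K^\varphi}\) as \(\Pow(\beta^{K^\varphi}) \circ \beta\). What the paper's route gives in return is an explicit reachability reading of the constraint. It says that \(z\) is reachable from \(x\) by a path whose length is among the \(n_i\) iff it is reachable by one whose length is among the \(m_j\). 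This is the form one actually uses for concrete frame conditions such as transitivity.
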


\begin{proof}
    Let \((X, \beta)\) be a Kripke frame.
    %
    A standard exercise in modal logic shows that \(\models \Diamond (\varphi_1 \vee \varphi_2) \leftrightarrow \Diamond \varphi_1 \vee \Diamond \varphi_2\).
    It is also true that every Kripke frame \((X, \beta)\) satisfies 
    \begin{equation}
        \sem{\Diamond (\varphi_1 \vee \varphi_2)}\circ \proj_{K^{\varphi_1}} 
        \equiv \sem{\Diamond \varphi_1 \vee \Diamond \varphi_2} \circ \proj_{K^{\varphi_2}}
    \end{equation}
    This amounts to a calculation:
    on the left, we have 
    \[
        K_1 := K^{\Diamond(\varphi_1 \vee \varphi_2)} = \Pow \circ (K^{\varphi_1} \times K^{\varphi_2})
    \] 
    and on the right 
    \[
        K_2 := K^{\Diamond\varphi_1 \vee \Diamond\varphi_2} = (\Pow \circ K^{\varphi_1}) \times (\Pow \circ K^{\varphi_2})
    \]
    Given a Kripke frame \((X, \beta)\), let \(x \in X\).
    Unravelling definitions, we obtain
    \begin{gather*}
        \beta^{K_1}(x) 
        = \Big\{ (\beta^{K^{\varphi_1}}(y), \beta^{K^{\varphi_2}}(y)) ~\Big|~ y \in \beta(x)\Big\}
        \\
        \beta^{K_2}(x) 
        = \Big(\Big\{ \beta^{K^{\varphi_1}}(y) ~\Big|~ y \in \beta(x)\Big\}, \Big\{ \beta^{K^{\varphi_2}}(z) ~\Big|~ z \in \beta(x)\Big\}\Big)
    \end{gather*}
    Thus,
    \begin{align*}
        &\sem{\Diamond (\varphi_1 \vee \varphi_2)} \circ \beta^{K_1}(x)\\
        &=  \bigcup \Big\{ \sem{\varphi_1} \circ \beta^{K^{\varphi_1}}(y)\cup \sem{\varphi_2} \circ \beta^{K^{\varphi_2}}(y) ~\Big|~ y \in \beta(x)\Big\} \\
        &=  \bigcup \Big\{ \sem{\varphi_1} \circ \beta^{K^{\varphi_1}}(y)\cup \sem{\varphi_2} \circ \beta^{K^{\varphi_2}}(z) ~\Big|~ y,z \in \beta(x)\Big\} \\
        &=  \bigcup \Big\{ \sem{\varphi_1} \circ \beta^{K^{\varphi_1}}(y) ~\Big|~ y \in \beta(x)\Big\} \cup \bigcup \Big\{\sem{\varphi_2} \circ \beta^{K^{\varphi_2}}(z) ~\Big|~ z \in \beta(x)\Big\} \\
        &=  \sem{\Diamond \varphi_1 \vee \Diamond \varphi_2} \circ \beta^{K_2}(x)
    \end{align*}
    It follows from above and a straightforward induction argument that, both logically and in terms of \(\sem{-}\), every formula \(\varphi \in \NatForm\) is equivalent to a formula of the form 
    \[
        \bigvee_{i=1}^k \Diamond^{n_i} p = \Diamond^{n_1}p \vee \Diamond^{n_2} p \vee \cdots \vee \Diamond^{n_k} p
    \]
    where \(p\) is the fixed basic proposition.
    Thus, given \(\varphi,\psi \in \NatForm\), without loss of generality we can write 
    \begin{equation}
        \label{eq:explicit-natform}
        \varphi = \bigvee_{i=1}^k \Diamond^{n_i} p 
        \qquad
        \qquad
        \psi = \bigvee_{j=1}^k \Diamond^{m_i} p
    \end{equation}
    for some \(n_i, m_j, k \in \NN\)
    (this may involve repeating some terms).
    For the same \(\varphi,\psi\) above, by calculation we see that \(z \in \sem{\varphi}(x)\) if and only if there is an \(i \le k\) and a path \(x \to x_1 \to \cdots \to x_{n_i} = z\), and similarly for \(\psi\).\smallskip

    \noindent
    We are now ready to address the core of the proposition.

    \noindent
    (\(\Rightarrow\)) Suppose \((X, \beta) \models \varphi \leftrightarrow \psi\), and let \(x \in X\) and \(z \in \sem{\varphi}\circ\beta^{K^\varphi}(x)\). 
    From the latter statement and \eqref{eq:explicit-natform}, we know there is an \(i \le k\) and a path \(x \to x_1 \to \cdots \to x_{n_i} = z\).
    Let \(\nu \colon X \to \Pow\mathbb P\) be the function defined \(\nu(z) = \{p\}\) and \(\nu(y) = \emptyset\) for all \(y \neq z\). 
    Then \((X, \langle \nu, \beta\rangle), x \models \varphi\), which by assumption implies \((X, \langle \nu, \beta\rangle), x \models \psi\).
    Thus, for some \(j \le k\), there is a path \(x \to y_1 \to \cdots \to y_{m_j} = z'\) such that \(p \in \nu(z')\). 
    Note, then, that \(z' \in \sem{\psi}\circ\beta^{K^\psi}(x)\).
    But only \(\nu(z) = \{p\}\) is nonempty, so \(z = z'\). 
    Consequently, \(z \in \sem{\psi}\circ\beta^{K^\psi}(x)\). 
    Since \(z\) and \(x\) were arbitrary, \(\sem{\varphi} \circ \beta^{K^\varphi} \subseteq \sem{\psi}\circ\beta^{K^\psi}\). 
    The claim is moreover symmetric in \(\varphi\) and \(\psi\), so the opposite inclusion is derived in the same way. 
    Hence, \(\sem{\varphi} \circ \beta^{K^\varphi} = \sem{\psi}\circ\beta^{K^\psi}\).
    This implies \((X, \beta)\) satisfies \(\sem{\varphi} \circ \proj_{K^\varphi} \equiv \sem{\psi} \circ \proj_{K^\psi}\).\smallskip

    \noindent(\(\Leftarrow\)) Now suppose \((X, \beta)\) satisfies \(\sem{\varphi} \circ \proj_{K^\varphi} \equiv \sem{\psi} \circ \proj_{K^\psi}\).
    Then \(\sem{\varphi} \circ \beta^{K^\varphi} = \sem{\psi}\circ\beta^{K^\psi}\). 
    To establish the logical equivalence we are after, let \((X, \beta) \models \varphi\). 
    Let \(x \in X\) and \(\nu \colon X \to \Pow \mathbb P\), and note that by definition there is an \(i \le k\) such that \((X, \langle \nu, \beta\rangle), x \models \Diamond^{n_i} p\). 
    This implies there is a path \(x \to x_1 \to \cdots \to x_{n_i}\) with \(p \in \nu(x_{n_i})\).
    Let \(z = x_{n_i}\).
    The existence of the path mentioned tells us that \(z \in \sem{\varphi} \circ \beta^{K^\varphi}(x)\).
    Now, since \(\sem{\varphi} \circ \beta^{K^\varphi} = \sem{\psi}\circ\beta^{K^\psi}\), \(z \in \sem{\psi}\circ\beta^{K^\psi}(x)\). 
    By definition of \(\sem{\psi}\) and \eqref{eq:explicit-natform}, there is a \(j \le k\) and a path \(x \to y_1 \to \cdots \to y_{m_j} = z\). 
    Since \(\Diamond^{m_j} p\) is a summand of \(\psi\), we obtain \((X, \langle \nu, \beta\rangle), x \models \psi\). 
    But \(x\) and \(\nu\) were arbitrary, so \((X, \beta) \models \psi\), as desired. 
    This shows that \((X, \beta) \models \varphi \to \psi\). 
    From symmetry we obtain \((X, \beta) \models \varphi \leftrightarrow \psi\).
\end{proof}

It was also mentioned that symmetry, which is characterized by the frame condition \(p \to \Box\Diamond p\), can also be captured as an equational path constraint using a different method. 
Given a Kripke frame \((X, \beta)\), symmetry can be expressed in terms of \(\beta\) as the property that for any \(x,y \in X\), \(y \in \beta(x)\) (i.e., \(x \to y\)) if and only if \(x \in \beta(y)\) (i.e., \(y \to x\)).
Let us give an equivalent but slightly different phrasing: unravelling definitons, 
\[
    \beta^2(x) = \{\beta(y) \mid y \in \beta(x)\} = \big\{\{z \mid y \to z\} ~|~ x \to y \big\}
\]
Then symmetry is equivalent to the identity
\begin{equation}
    \label{eq:symm-phrasing1}
    \big\{\{z \mid y \to z\} ~|~ x \to y \big\} = \big\{\{x\} \cup \{z \mid y \to z\} ~|~ x \to y \big\}
\end{equation}
which ensures every \(y\) with a transition \(x \to y\) also has a transition \(y \to x\).
In terms of \(\beta\), \eqref{eq:symm-phrasing1} becomes
\[
    \{\beta(y) \mid y \in \beta(x)\} = \{\{x\} \cup \beta(y) \mid y \in \beta(x)\}
\]
This identity is something we can express as an equational path constraint.
Define the operation \(\rho_X \colon X \times \Pow^2X \Rightarrow \Pow^2X\) by 
\[
    \rho_X(x, \Phi) = \big\{\{x\} \cup U ~\big|~ U \in \Phi \big\}
\]
This is natural in \(X\), for given any \(f \colon X \to Y\) we have 
\[
\begin{tikzcd}
    X \times \Pow^2 X \ar[r, "\rho_X"] \ar[d, "f \times \Pow^2(f)"'] 
    & \Pow^2 X \ar[d, "\Pow^2(f)"] \\
    Y \times \Pow^2 Y \ar[r, "\rho_Y"] 
    & \Pow^2 Y \\
\end{tikzcd}
\qquad
\begin{aligned}
    &\Pow^2(f) \circ \rho_X(x, \Phi) \\
    &= \big\{f(\{x\} \cup U) ~\big|~ U \in \Phi \big\} \\
    &= \big\{\{f(x)\} \cup f(U) ~\big|~ U \in \Phi \big\} \\
    &= \rho_Y (f(x), \{f(U) \mid U \in \Phi\}) \\
    &= \rho_Y (f(x), \Pow^2(f)(\Phi)) \\
    &= \rho_Y \circ (f \times \Pow^2(f))(x, \Phi)
\end{aligned}
\]
Now, take \(J = \Id \times \Pow^2\), and let \(\lambda = \proj_{\Pow^2} \colon J \Rightarrow \Pow^2\).
Then a frame \((X, \beta)\) satisfies \(\lambda \equiv \rho\) if and only if \eqref{eq:symm-phrasing1} holds, which we have already seen is equivalent to symmetry.

\paragraph{Monoid presentations.}
We end this section with details about the relatively final Moore automaton construction. 

\monoidpresentation*

\begin{proof}
    We will assume, without loss of generality, that \((M, *, 1_M) = (A^*/{\approx_R}, \star, [\varepsilon]_{\approx_R})\).
    Let \((X, \langle o_X, \delta_X\rangle)\) be a Moore automaton.
    Recall that we define the derivatives of a state \(x\) recursively by \(\der_\varepsilon x = x\) and \(\der_{aw} x = \der w~\delta(x)(a)\).
    We will show that the map \(\beh_{X} \colon X \to B^M\) defined by \(\beh_X(x)([w]_{\approx_R}) = o_X(\der w~ x)\) is (1) well-defined as a function, and (2) the unique Moore automaton homomorphism.
    
    For (1), we show that if \(w \approx_R u\), then \(\der w~ x = \der u~ x\). 
    This is done by induction on the derivation of \(w \approx_R u\). 
    The only interesting steps in that induction proof are the base case in which \((w, u) \in R\) and the congruence induction step.
    In the mentioned base case, \(\der w~ x = \der u~ x\) if \((w, u) \in R\), because \((X, \langle o, \delta\rangle)\) satisfies \(R\). 
    In the mentioned induction step, we let \(w_1 \approx_R u_1\) and \(w_2 \approx_R u_2\), and assume for an induction hypothesis that \(\der w_i~y = \der u_i~y\) for \(i=1,2\) and all \(y \in X\).
    It suffices, now to observe that \(\der wu~x = \der u~\der w~x\) for any \(w,u \in A^*\), which can be shown by induction on \(w\).
    Indeed, we now have 
    \[
        \der w_1w_2~x = \der w_2\der w_1~x \stackrel{IH}= \der u_2\der w_1~x \stackrel{IH}= \der u_2\der u_1~x = \der_{u_1u_2} x 
    \]
    Thus, if \(w \approx_R u\), 
    \(\beh_X(x)([w]_{\approx_R}) = o_X(\der w~x) = o_X(\der_u~x) = \beh_X(x)([u]_{\approx_R})\).

    For (2), we need to show that \(\beh_X\) is a homomorphism and that it is unique in that regard. 
    Starting with the homomorphism property, we need to show (i) that \(o_R(\beh_X(x)) = o_X(x)\) and (ii) that \(\delta_R(\beh_X(x))(a) = \beh_X(\delta_X(x)(a))\).
    For (i), 
    \[
        o_R(\beh_X(x)) = \beh_X(x)(\varepsilon) = o_X(\der \varepsilon~x) = o_X(x)
    \]
    For (ii), given \(w \in A^*\),
    \begin{align*}
        \delta_R(\beh_X(x))(a)([w]_{\approx_R})
        &= \beh_X(x)([a]_{\approx_R} \star [w]_{\approx_R}) \\
        &= \beh_X(x)([aw]_{\approx_R}) \\
        &= o_X(\der aw~x) \\
        &= o_X(\der w\der a~x) \\
        &= o_X(\der w~\delta_X(x)(a)) \\
        &= \beh_X(\delta_X(x)(a))([w]_{\approx_R}) 
    \end{align*}
    To see that \(\beh_X\) is the unique Moore automaton homomorphism, consider an arbitrary Moore automaton homomorphism \(h \colon (X, \langle o_X, \delta_X\rangle) \to (Z_R, \langle o_R, \delta_R\rangle)\).
    Then,
    \[
        h(x)([\varepsilon]_{\approx_R}) 
        = o_R(h(x))
        = o_X(x)
        = \beh_X(x)([\varepsilon]_{\approx_R})
    \]
    and inductively, for any \(a,w\), 
    \begin{align*}
        h(x)([aw]_{\approx_R})
        &= h(x)([a]_{\approx_R} \star [w]_{\approx_R}) \\
        &= \delta_R(h(x)([w]_{\approx_R}))(a) \\
        &= \delta_R(\beh_X(x)([w]_{\approx_R}))(a) \tag{IH} \\
        &= \beh_X(x)([a]_{\approx_R} \star [w]_{\approx_R})  \\
        &= \beh_X(x)([aw]_{\approx_R}) 
    \end{align*}
    Since \([-]_{\approx_R} \colon X \to M\) is surjective, \(\beh_X = h\).
\end{proof}

\subsection*{\cref{sec:covariety}}

We now move on to our structural result regarding \(F\)-coalgebras that satisfy a system of path constraints.

\thmpathcovariety*

\noindent%
Before we prove \cref{thm:path covariety}, we need to understand a bit about how coproducts in \(\Coalg(F)\) interact with \((-)^J\) for \(J \in \Shape(F)\).

\begin{lemma}
    \label{lem:nth step coproducts}
    Assume \(\Cat\) has coproducts, let \(I\) be a set, and let \((X_i, \beta_i)\) be an \(F\)-coalgebra for each \(i \in I\).
    Write \((Y, \gamma) = \coprod_{i \in I} (X_i, \beta_i)\), and let \(J \in \Shape(F)\). 
    Then we have \((Y, \gamma^J) \cong \coprod_{i \in I} (X_i, \beta_i^J)\).
\end{lemma}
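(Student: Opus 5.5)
The plan is to read the statement as a direct consequence of \cref{lem:always a homom}. That lemma says the functor \((-)^J \colon \Coalg(F) \to \Coalg(J)\) preserves colimits, and a coproduct is a colimit of a discrete diagram. Concretely, let \(D \colon I \to \Coalg(F)\) be the discrete diagram \(i \mapsto (X_i, \beta_i)\). By assumption, \((Y, \gamma) = \coprod_{i \in I}(X_i, \beta_i)\) comes with coproduct injections \(\incl_i \colon (X_i, \beta_i) \to (Y, \gamma)\), which are \(F\)-coalgebra homomorphisms. By the coproduct formula in \cref{def:covariety}, the underlying object is \(\coprod_{i} X_i\) and \(\gamma = [F(\incl_i) \circ \beta_i]_{i \in I}\).

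First, I apply \((-)^J\) to this colimiting cocone. The first part of the proof of \cref{lem:always a homom}, the induction on \(J\), shows that each \(\incl_i\) is also a \(J\)-coalgebra homomorphism \((X_i, \beta_i^J) \to (Y, \gamma^J)\). So \(((Y, \gamma^J), (\incl_i)_{i \in I})\) is a cocone over the discrete diagram \((-)^J \circ D\). Second, I invoke the colimit-preservation clause of \cref{lem:always a homom} to conclude that this cocone is colimiting in \(\Coalg(J)\). That is exactly the claimed isomorphism \((Y, \gamma^J) \cong \coprod_{i\in I}(X_i, \beta_i^J)\), witnessed by the identity on \(Y\) and compatible with the injections.

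The only subtle point is that colimit preservation rests on the fact that the forgetful functors \(U^F\) and \(U^J\) create colimits, with \(U^F = U^J \circ (-)^J\). For this I also need the coproduct \(\coprod_i (X_i, \beta_i^J)\) to exist in \(\Coalg(J)\). It does, because \(\Cat\) has coproducts and \(U^J\) creates them: the coproduct formula applied to \(J\) gives the \(J\)-coalgebra \((\coprod_i X_i, [J(\incl_i)\circ \beta_i^J]_i)\).

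As a sanity check that avoids the abstract lemma, I would verify directly that \(\gamma^J \circ \incl_i = J(\incl_i) \circ \beta_i^J\) for every \(i\), which is just the homomorphism property above. By the uniqueness clause of the coproduct's universal property in \(\Cat\), this forces \(\gamma^J = [J(\incl_i) \circ \beta_i^J]_{i \in I}\), which is the coproduct formula for \(J\)-coalgebras. The main obstacle, and it is a small one, is keeping straight that the isomorphism is the identity on carriers, so the claim is really the equality \(\gamma^J = [J(\incl_i)\circ\beta_i^J]_i\). Once that is stated, this two-line universal-property argument is the cleanest presentation.
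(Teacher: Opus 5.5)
Your proposal is correct and takes the same approach as the paper, which proves the lemma in one line from the colimit preservation of \((-)^J\) established in \cref{lem:always a homom}. Your direct check is also valid: it uses that each \(\incl_i\) is a \(J\)-coalgebra homomorphism and the coproduct formula for \(J\)-coalgebras to get \(\gamma^J = [J(\incl_i)\circ\beta_i^J]_{i \in I}\), which makes explicit that the isomorphism is the identity on the carrier \(Y\).
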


\begin{proof} 
    This follows directly from the preservation and reflection of colimits by \((-)^J\) established in \cref{lem:always a homom}.
\end{proof}

\begin{proof}[Proof of \cref{thm:path covariety}]
    Towards closure under coproducts, let \((X_i, \beta_i)\) be an \(F\)-coalgebra satisfying \(\Sys\) for each \(i \in I\).
    Let \(e \colon E \subseteq J\) be a path constraint in \(\Sys\), and let \(e_{X_i} \circ \e_i \colon X_i \to EX_i \to JX_i\) be the factorization of \(\beta_i^J\) for each \(i \in I\).
    Then by composing with \(E(\incl_i)\), we obtain an arrow \(E(\incl_i) \circ \e_i \colon X_i \to EX_i \to EY\) for each \(i \in I\).
    This induces an arrow \([E(\incl_i) \circ \e_i]_{i \in I} \colon Y \to EY\).
    Since for any \(i \in I\) we have
    \begin{equation*}
        \begin{tikzcd}
            X_i \ar[r, "\incl_i"] \ar[rr, bend left, "J(\incl_i) \circ \beta_i^J"] \ar[d, "\e_i"]
            & Y \ar[r, "\gamma^J"] \ar[d, dashed]
            & J Y
            \\
            EX_i \ar[rr, bend right, "e_{X_i}"] \ar[r, "E(\incl_i)"]
            & EY \ar[ur, "e_Y"]
            & JX_i \ar[u, "J(\incl_i)"']
        \end{tikzcd}
        \quad
        \begin{aligned}
            &\gamma^J \circ \incl_i \\
            &= J(\incl_i) \circ \beta_i^J & \text{(\cref{lem:always a homom})}\\
            &= J(\incl_i) \circ e_{X_i} \circ \e_i &\text{(factoring \(\beta_i^J\))}\\
            &= e_{Y} \circ E(\incl_i) \circ \e_i &\text{(\(e\) natural)}
        \end{aligned}
    \end{equation*}
    we conclude that \(\gamma^J = e_Y \circ [E(\incl_i) \circ \e_i]_{i \in I}\). 
    Therefore, \(\gamma^J\) factors through \(EY\), so \((Y, \gamma)\) satisfies \(E\). 
    Since \(e \colon E \subseteq J\) in \(\Sys\) was arbitrary, \((Y,\gamma) \in \Coalg(F, \Sys)\).

    For closure under homomorphic images, let \((X, \beta) \in \Coalg(F, \Sys)\), \(q \colon (X, \beta) \to (Y, \gamma)\) such that \(q \colon X \twoheadrightarrow Y\) is a regular epi, and let \(e \colon E \subseteq J\) be a path constraint in \(\Sys\). 
    Consider the diagram below. 
    \begin{equation}
        \begin{tikzcd}
            Q 
                \ar[r, shift left, "p"]
                \ar[r, shift right, "r"']
            & X
                \ar[r, "\e_\beta"] 
                \ar[rr, bend left, "\beta^J"] 
                \ar[d, two heads, "q"] 
            & EX 
                \ar[r, hook, "e_X"]
                \ar[d, "E(q)"] 
            & {JX} 
                \ar[d, "J(q)"] 
            \\
            & Y 
                \ar[r, dashed, "\e_\gamma"]
                \ar[rr, bend right, "\gamma^J"]
            & EY 
                \ar[r, hook, "e_Y"]
            & {JY} 
        \end{tikzcd}
    \end{equation}
    Above, \(q\) is a coequalizer of \(p,r \colon Q \rightrightarrows X\).
    Our goal is to induce the dashed arrow using the universal property of \(q\).
    This can be established using the monicity of \(e_Y\).
    From the calculation
    \(
        e_Y \circ E(q) \circ \e_\beta 
        = J(q) \circ e_X \circ \e_\beta 
        = J(q) \circ \beta^J 
        = \gamma^J \circ q 
    \)
    it follows that
    \begin{equation}
        e_Y \circ E(q) \circ \e_\beta \circ p
        = \gamma^J \circ q \circ p
        = \gamma^J \circ q \circ r
        = e_Y \circ E(q) \circ \e_\beta \circ r
    \end{equation}
    It follows from \(e_Y\) being monic that \(E(q) \circ \e_\beta \circ p = E(q) \circ \e_\beta \circ r\).
    Now, since \(q\) is the coequalizer of \(p\) and \(r\), we obtain a unique arrow \(\e_\gamma \colon Y \to EY\) satisfying the identity 
    \(
       \e_\gamma \circ q = E(q) \circ \e_\beta
    \).
    Thus, 
    \[
        e_Y \circ \e_\gamma \circ q
        = J(q) \circ \beta^J
        = \gamma^J \circ q
    \]
    Since \(q\) is epi, \(e_Y \circ \e_\gamma = \gamma^J\), as desired.
    Therefore \((Y, \gamma) \in \Coalg(F, \Sys)\) as well.

    For closure under subcoalgebras, let \(\incl \colon (U, \beta_U) \hookrightarrow (X, \beta)\) be a subcoalgebra and \(e \colon E \subseteq J\) be an equational path constraint in \(\Sys\) with left and right transformations \(\lambda,\rho \colon J \Rightarrow H\). 
    We obtain the following diagram.
    \begin{equation}
        \begin{tikzcd}
            U 
                \ar[r, "{\beta_U^J}"] 
                \ar[d, hook, "\incl"'] 
            & {J U} 
                \ar[r, shift left, "\lambda_U"] 
                \ar[r, shift right, "\rho_U"'] 
                \ar[d, "{J(\incl)}"']  
            & HU 
                \ar[d, "{H(\incl)}"]
            \\
            X 
                \ar[r, "\beta^J"] 
            & {J X}
                \ar[r, shift left, "\lambda_X"] 
                \ar[r, shift right, "\rho_X"']  
            & HX
        \end{tikzcd}
    \end{equation}
    Then we have
    \begin{align*}
        H(\incl) \circ \lambda_U \circ \beta_U^J
        &= \lambda_X \circ J(\incl) \circ \beta_U 
        \tag{naturality of \(\lambda\)}\\
        &= \lambda_X \circ \beta^J \circ \incl 
        \tag{\cref{lem:always a homom}}\\
        &= \lambda_X \circ e_X \circ \e_\beta \circ \incl 
        \tag{factorization of \(\beta^J\)}\\
        &= \rho_X \circ e_X \circ \e_\beta \circ \incl 
        \tag{equational over \(H\)}\\
        &= H(\incl) \circ \rho_U \circ \beta_U^J
        \tag{previous steps in reverse}
    \end{align*}
    We have assumed that \(H\) preserves monics, so \(H(\incl)\) is monic. 
    It follows that \(\rho_U \circ \beta_U^J = \lambda_U \circ \beta_U^J\). 
    By \cref{lem:equational}, \((U, \beta_U)\) satisfies \(E\).
\end{proof}

\coaxiomatizationtheorem*

\begin{proof}
    The proof is easiest to present alongside a diagram.
    Let \((X, \beta)\) be an \(F\)-coalgebra. 
    Since \(K = H(\kappa + \kappa)\) has an EM-algebra structure, for any colouring \(k \colon X \to K\) we obtain an extension \(k^\# \colon HX \to K\), meaning that \(k^\# \circ \eta_{\kappa + \kappa} = k\).
    We also obtain a unique \(\Delta_K \times F\)-coalgebra homomorphism \(\beh \colon (X, \langle \beta, k\rangle) \to (Z_K, \langle \zeta_K, \ell_K\rangle)\) using that \((Z, \langle \zeta_K, \ell_K\rangle)\) is a final \(\Delta_K \times F\)-coalgebra.
    For each \(J\)-path constraint \((\lambda \equiv \rho) \in \Sys\), we obtain the following diagram:
    \begin{equation}
        \label{eq:coax diagram}
        \begin{gathered}
            \begin{tikzcd}[ampersand replacement = \&]
                X \ar[rr, "\beta^J"] \ar[dd, "\beh"'] \ar[dr, "k"]
                \ar[rrr, to path={ -- ([yshift=2ex]\tikztostart.north) -| node[above, pos=0.25] {\(\eta_X\)} (\tikztotarget)}, rounded corners=5pt]
                \&
                \& JX \ar[r, shift left, "\lambda_X"] \ar[r, shift right, "\rho_X"'] \ar[dd]
                \& HX \ar[dd] \ar[dr, "k^\#"]
                \&
                \\
                \& K \ar[rrr, double, no head, crossing over]
                \&
                \&
                \& K
                \\
                Z_K \ar[rr, "\zeta_K^J"] \ar[ur, "\ell_K"]
                \ar[rrr, to path={ -- ([yshift=-2ex]\tikztostart.south) -| node[below, pos=0.25] {\(\eta_{Z_K}\)} (\tikztotarget)}, rounded corners=5pt]
                \&
                \& JZ_K \ar[r, shift left, "\lambda_{Z_K}"] \ar[r, shift right, "\rho_{Z_K}"']
                \& HZ_{K} \ar[ur, "\ell_K^\#"'] 
                \&
            \end{tikzcd}
        \end{gathered}
    \end{equation} 
    Before we proceed, it will be helpful to have the following fact at our fingertips:

    \begin{claim}
        \label{claim:commute}
        For \(\tau \in \{\lambda, \rho\}\), \(k^\# \circ \tau_{X} \circ \beta^J = \ell_K^\# \circ \tau_{Z_K} \circ \zeta_K^J \circ \beh\).
    \end{claim}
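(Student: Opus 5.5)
The plan is to chase the lower part of diagram~\eqref{eq:coax diagram} from right to left. I would use three facts: $\beh$ is a $J$-coalgebra homomorphism, $\tau$ is natural, and the free-extension operation $(-)^\#$ behaves well under precomposition. Throughout, $K = H(\kappa+\kappa)$ carries the free EM-algebra structure $\mu_{\kappa+\kappa}\colon HK \to K$. So for any arrow $f\colon Y \to K$ the extension is $f^\# = \mu_{\kappa+\kappa}\circ H(f)$, and this is the meaning I take for both $k^\#$ and $\ell_K^\#$.

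\emph{Step 1.} Since $\beh\colon (X,\langle k,\beta\rangle)\to(Z_K,\langle \ell_K,\zeta_K\rangle)$ is a $\Delta_K\times F$-coalgebra homomorphism, its two components give
\[
  \ell_K\circ\beh = k
  \qquad\text{and}\qquad
  F(\beh)\circ\beta = \zeta_K\circ\beh .
\]
So $\beh$ is an $F$-coalgebra homomorphism $(X,\beta)\to(Z_K,\zeta_K)$. By \cref{lem:always a homom}, $\beh$ is then also a $J$-coalgebra homomorphism:
\[
  \zeta_K^J\circ\beh = J(\beh)\circ\beta^J .
\]

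\emph{Step 2.} Naturality of $\tau\colon J\Rightarrow H$ gives $\tau_{Z_K}\circ J(\beh) = H(\beh)\circ\tau_X$. Combining this with Step~1,
\[
  \ell_K^\#\circ\tau_{Z_K}\circ\zeta_K^J\circ\beh
  = \ell_K^\#\circ H(\beh)\circ\tau_X\circ\beta^J .
\]

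\emph{Step 3.} It remains to show $\ell_K^\#\circ H(\beh) = k^\#$. Unfolding the formula for free extensions and using functoriality of $H$ together with $\ell_K\circ\beh = k$ from Step~1,
\[
  \ell_K^\#\circ H(\beh)
  = \mu_{\kappa+\kappa}\circ H(\ell_K)\circ H(\beh)
  = \mu_{\kappa+\kappa}\circ H(\ell_K\circ\beh)
  = \mu_{\kappa+\kappa}\circ H(k)
  = k^\# .
\]
Alternatively, one can note that $\ell_K^\#\circ H(\beh)$ is an algebra homomorphism out of the free algebra $(HX,\mu_X)$ that restricts along $\eta_X$ to $k$, and then invoke uniqueness of $k^\#$. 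Substituting into Step~2 yields the claim, and the argument is the same for $\tau=\lambda$ and $\tau=\rho$.

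No step is genuinely hard. The point needing care is bookkeeping: $k^\#$ and $\ell_K^\#$ must be taken with respect to the same EM-algebra structure on $K$, namely the free one $\mu_{\kappa+\kappa}$. Also, $\beh$ is only given as a $\Delta_K\times F$-homomorphism, so its $F$-component must be extracted before \cref{lem:always a homom} can be applied.
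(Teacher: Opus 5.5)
Your proof is correct and follows the same route as the paper: you extract the $F$-homomorphism property of $\beh$, upgrade it to a $J$-homomorphism via \cref{lem:always a homom}, apply naturality of $\tau$, and finish with $\ell_K^\#\circ H(\beh) = k^\#$. The only difference is in that last identity. The paper obtains it from naturality of $\eta$ together with uniqueness of $k^\#$, which tacitly requires that $\ell_K^\#\circ H(\beh)$ be an algebra homomorphism, whereas your direct computation with $f^\# = \mu_{\kappa+\kappa}\circ H(f)$ avoids that check.
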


    \noindent\emph{Proof of Claim.}
        Since \(\beh\) is a homomorphism of \(\Delta_K\times F\)-coalgebras, \(\ell_K \circ \beh = k\). 
        By naturality of \(\eta\), \(
            \ell_K^\# \circ H(\beh) \circ \eta_X 
            = \ell_K^\# \circ \eta_{Z_K} \circ \beh 
            = \ell_K \circ \beh
            = k
        \),
        so from uniqueness of \(k^\#\) we know that \(\ell_K^\# \circ H(\beh) = k^\#\). 
        Therefore,
        \begin{align*}
            \ell_K^\# \circ \tau_{Z_K} \circ \zeta_K^J \circ \beh
            &= \ell_K^\# \circ \tau_{Z_K} \circ J(\beh) \circ \beta^J \tag{\cref{lem:always a homom}}\\
            &= \ell_K^\# \circ H(\beh) \circ \tau_X \circ \beta^J \tag{\(\tau\) natural}\\
            &= k^\# \circ \tau_X \circ \beta^J \tag{uniqueness of \(k^\#\)}
        \end{align*}
    This concludes the proof of the claim.
    For the main theorem, we argue each direction separately.\medskip
    
    \noindent%
    (\(\Rightarrow\)) Let us begin by showing that if \((X, \beta) \in \Coalg(F, \Sys)\), then we also have \((X, \beta) \in \Coalg(F;C)\). 
    Given \((\lambda \equiv \rho) \in \Sys\) and a colouring \(k \colon X \to K\), we obtain \eqref{eq:coax diagram}. 
    We need to show that for any \(x \in X\), \(\beh(x) \in C\) (see \eqref{def:coax}). 
    But from the Claim we immediately obtain 
    \begin{align*}
        \ell_K^\# \circ \lambda_{Z_K} \circ \zeta_K^J \circ \beh
        &= k^\# \circ \lambda_{X} \circ \beta^J \\
        &= k^\# \circ \rho_{X} \circ \beta^J \tag{\((X, \beta) \in \Coalg(F, \Sys)\)} \\
        &= \ell_K^\# \circ \rho_{Z_K} \circ \zeta_K^J \circ \beh
    \end{align*}
    It follows that \(\beh(x) \in C\) for all \(x \in X\).
    Since \((\lambda \equiv \rho) \in \Sys\), and the state \(x\) and colouring \(k\) were arbitrary, \((X, \beta) \models C\). \medskip

    \noindent%
    (\(\Leftarrow\)) We actually show the contrapositive: if \((X, \beta) \notin \Coalg(F, \Sys)\), then \((X, \beta) \notin \Coalg(F; C)\).
    Suppose \((X, \beta) \notin \Coalg(F, \Sys)\).
    By \cref{lem:equational}, there is a \((\lambda \equiv \rho) \in \Sys\) and an \(x \in X\) such that \(\lambda_X\circ \beta^J(x) \neq \rho_X \circ \beta^J(x)\). 
    Let \(t = \lambda_X\circ \beta^J(x)\) and \(s = \rho_X\circ \beta^J(x)\).
    Since \(H\) is \(\kappa\)-accessible, there are \(U_s,U_t \subseteq X\) such that \(s \in HU_s\) and \(t \in HU_t\), \(\card{U_s} \le \kappa\) and \(\card{U_t} \le \kappa\).
    Taking their union \(U = U_s \cup U_t\), we have \(\card{U_s \cup U_t} \le \kappa + \kappa\) and \(s,t \in HU \subseteq HX\).

    Now, since \(\card{U} \le \kappa + \kappa\), there is an injective function \(j \colon U \to \kappa + \kappa\). 
    Choose any extension of \(j\) to a function \(j' \colon X \to \kappa + \kappa\), ie., \(j = j' \circ \incl \colon U \subseteq X \to \kappa + \kappa\).
    Composing with the unit of \(H\) we obtain a colouring \(k = \eta_{\kappa + \kappa} \circ j'\) of \(X\),
    \[\begin{tikzcd}[ampersand replacement=\&]
        X \ar[r, blue, "k"] \ar[dr, "j'"'] \& K = H(\kappa + \kappa) \\
        U \ar[u, "\incl"] \ar[r, "j"'] \& \kappa + \kappa \ar[u, "\eta_{\kappa + \kappa}"']
    \end{tikzcd}\]
    and we obtain from the uniqueness of \(k^\#\) that 
    \begin{align*}
		k^\# 
		&= \mu_{\kappa+\kappa} \circ H(k) \tag{def.~\((-)^\#\)} \\
		&= \mu_{\kappa+\kappa} \circ H(\eta_{\kappa + \kappa} \circ j') \tag{def.~\(k\)} \\
		&= \mu_{\kappa+\kappa} \circ H(\eta_{\kappa + \kappa}) \circ H(j') \tag{\(H\) functorial}	\\
		&= H(j')	\tag{monad laws}			
    \end{align*}
    Since \(j'\) is injective on \(U\) and \(H\) preserves monics, \(k^\# = H(j')\) is injective restricted to \(HU \subseteq HX\).
    Concretely, 
    \begin{align*}
        k^\# \circ \incl_{HU} 
        &= H(j') \circ \incl_{HU} \tag{above} \\
        &= H(j') \circ H(\incl_{U}) \tag{\(H\) preserves inclusions} \\
        &= H(j' \circ \incl_U) \tag{\(H\) functorial}\\
        &= H(j) \tag{def.~\(j'\)}
    \end{align*}
    In particular, this means that \(k^\#(t) \neq k^\#(s)\). 
    We arrive at the following calculation: 
    \begin{align*}
        \ell^\# \circ \lambda_{Z_K} \circ \zeta^J \circ \beh(x)  
        &= k^\# \circ \lambda_{X} \circ \beta^J(x)                \tag{Claim} \\
        &= k^\# (t)                                               \tag{def.~\(t\)} \\
        &\neq k^\# (s)                                            \tag{assumed} \\
        &= k^\# \circ \rho_{X} \circ \beta^J(y)                   \tag{def.~\(s\)} \\
        &= \ell^\# \circ \rho_{Z_K} \circ \zeta^J \circ \beh(y)   \tag{Claim}
    \end{align*}
    It follows that \(\beh(x) \notin C\), so \((X, \beta) \notin \Coalg(F;C)\). 
\end{proof}

\subsection*{\cref{sec:terminal net}}
Let \(F \colon \Cat \to \Cat\) be an endofunctor on a complete category \(\Cat\) with final object \(1\) and let \(\Sys\) be a system of singular path constraints for \(F\)-coalgebras.
Here we prove that the terminal net construction produces an \(F\)-coalgebra that is final relative to \(\Coalg(F, \Sys)\).
Let us begin again with the definition of the construction.

\definitionofterminalnet*

Observe that \(F = \WFunc{\code F}\) and \(E_i = \WFunc{\code E_i}\), and that \(\Net w = \WFunc w1\).
Write \(w\Word\Sys\) for the full subcategory of \(\Word\Sys\) consisting of words that begin with \(w\), and write \(\Incl_w\colon w\Word\Sys \hookrightarrow \Word\Sys\) for the inclusion functor.
Since \(\Cat\) is complete, we obtain
\begin{enumerate}
    \item the limit cone \((\lim (\WFunc w \circ \Net), c^w)\) for the functor \(\WFunc w \circ \Net\),
    
    \item the limit cone \((\lim (\Net \circ \Incl_w), d^w)\) for the functor \(\Net \circ \Incl_w\),
    
    \item a cone homomorphism \(b^w \colon (\lim \Net, c^\bullet) \to (\lim (\Net \circ \Incl_w), d^w)\).
\end{enumerate} 
for any \(w \in \Sys_F^*\).
It is worth noting that \((\lim \Net, c^\bullet) = (\lim \Net, d^{\bullet})\), so the choice of \(c^\bullet\) instead of \(d^\bullet\) in \cref{item:conerest} is arbitrary.
The key to the terminal net construction producing a final coalgebra (\cref{thm:generalized Adamek-Barr construction}) is that \cref{item:limitWT,item:limitTIW} coincide when \(F\) and each \(E_i \in \Sys\) are pitched-continuous (see \cref{def:pitched} and the proceeding text).

\inclusionlimit*

\begin{proof}
    For the first statement, it suffices to see that \((\lim (\Net \circ \Incl_w), d^w)\) is a cone for \(\WFunc w \circ \Net\), as this would induce a unique cone homomorphism \(\phi^w \colon \lim(\Net \circ \Incl_w) \to \lim(\WFunc w \circ \Net)\). 
    For the second, it suffices to show that \((\lim (\WFunc w \circ \Net), c^w)\) is a cone for \(\Net \circ \Incl_w\) using pitched continuity of each \(\WFunc w\), since this would induce an inverse for \(\phi^w\).
    
    For the first statement, let \(g \colon u \to v\) in \(\Word\Sys\).
    We need to show the identity \(\WFunc w\Net (g) \circ d_{wu}^w = d_{wv}^w\).
    But this is immediate from the equation \(\WFunc w\Net (g) = \Net(w(g))\) (which can easily be shown by induction on \(w\)), because \(w(g) \colon wu \to wv\) and \((\lim(\Net \circ \Incl_w), d^w)\) is a cone for \(\Net \circ \Incl_w\).
    In a diagram,
    \[
    \begin{tikzcd}[column sep=10pt]
        & \lim(\Net \circ \Incl_w) 
            \ar[dl, "d_{wu}^w"']
            \ar[dr, "d_{wv}^w"]
        & 
        \\
        \Net \Incl_w wu
            \ar[rr, "\Net \Incl_w(w(g))"]
            \ar[d, no head, double]
        && \Net \Incl_w wv
            \ar[d, no head, double]
        \\
        \WFunc w\Net u
            \ar[rr, "\WFunc w\Net (g)"]
        && \WFunc w\Net v
    \end{tikzcd}
    \]
    The triangle in the top of the diagram above commutes because \((\lim(\Net \circ \Incl_w), d^w)\) is a cone.
    The square commutes by definition of \(\Net\).
    Thus, \(d^w \colon \Delta_{\lim(\Net \circ \Incl_w)} \Rightarrow \WFunc{w} \circ \Net\) is a natural transformation, i.e., a cone for \(\WFunc{w} \circ \Net\).
    This induces our cone homomorphism \(\phi^w \colon (\lim(\Net \circ \Incl_w), d^w) \to (\lim(\WFunc w \circ \Net), c^w)\).
    
    Let us move on to the second statement.
    Let \(w, u, v \in \Sys_F^*\) and consider an arbitrary arrow \(g \colon wu \to wv\) in \(w\Word\Sys\).
    We show by induction on \(g\) that 
    \[
        \Net(g) \circ c^w_{u} = c^w_{v}
    \]
    where \((\lim(\WFunc{w} \circ \Net), c^w)\) is a limit cone for \(\WFunc{w} \circ \Net\).
    \begin{description}
        \item[Base Cases] 
        In the base case, either \(g = \id{wu}\), \(g = \bang_{wu}\), or \(w = \bullet\) and \(g = \code{e}^{(i)}_{u}\).
        The first situation is trivial because \(\Net(\id{wu}) = \id{\Net wu}\), and \(wu = wv\) implies \(u = v\).
        In the second and third situation, we necessarily have \(w = \bullet\).
        This is because \(wv = \bullet\) implies \(w = v = \bullet\) in the second situation, and in the third, we assume it! 
        But then, \[
            \WFunc w \circ \Net 
            = \WFunc \bullet \circ \Net 
            = \Net 
            = \Net \circ \Incl_\bullet 
            = \Net \circ \Incl_w
        \] 
        and our cones for \(\WFunc w \circ \Net\) and \(\Net \circ \Incl_w\) coincide.
        That is, we would have \(c_{u}^\bullet = d_{\bullet u}^\bullet\) for any \(u \in \Sys_F^*\).
        Thus, 
        \[
            \Net(g) \circ c_u^w
            = \Net(g) \circ d_{\bullet u}^\bullet
            = d_{\bullet v}^\bullet
            = c_{v}^\bullet
            = c_{v}^w
        \]

        \item[Induction Step 1] (Composition of arrows in \(\Word\Sys\)) For \(f \colon wu \to wv\) and \(g \colon wv \to wz\), and assuming 
        \(
            \Net(f) \circ c^w_{u} = c^w_{v}
        \) 
        and 
        \(
            \Net(g) \circ c^w_{v} = c^w_{z}
        \),
        we compute
        \begin{equation*}
            \Net(g \circ f) \circ c^w_{u} 
            = \Net(g) \circ \Net(f) \circ c^w_{u} 
            = \Net(g) \circ c^w_v 
            = c^w_{z} 
        \end{equation*}
        as desired. 
        This handles the composition case.
        
        \item[Induction Step 2] (Application of \(E_i\) or \(F\))
        Where \(w = \code{G}w'\) for \(w' \in \Sys_F^*\), let \(f \colon w'u \to w'v\) and \(g = \code{G}(f) \colon wu \to wv\) for either \(\code{G} = \code{F}\) or \(\code{G} = \code{E}_i\) for some \(i \in I\), and assume that \(\Net(f) \circ c_u^w = c_v^w\).
        Since \(G := \WFunc {\code{G}}\) is pitched-continuous and \(\WFunc{w'} \circ \Net\) is a pitched diagram (this is true for any monic-preserving functor in place of \(\WFunc{w'}\)), there is an isomorphism of cones 
        \[
            \tau \colon 
            (\lim(\WFunc {w} \circ \Net), c^{w}) 
            = (\lim(G \circ \WFunc {w'} \circ \Net), c^{w}) 
            \xrightarrow{\cong} (G(\lim(\WFunc {w'} \circ \Net)), G(c^{w'}))
        \]
        We obtain the following diagram, in which all three inner triangles commute:
        \begin{equation*}\begin{tikzcd}
            \Net (wu) = G\Net(w'u) 
                \ar[rr, to path={ -- ([yshift=2ex]\tikztostart.north) -| node[above, pos=0.25] {\footnotesize\(\Net(g) = \Net(\code{G}(f)) = G\Net(f)\)} (\tikztotarget)}, rounded corners=5pt]
            & G(\lim(\WFunc {u} \circ \Net)) 
                \ar[l, "G(c_{u}^{w'})"'] 
                \ar[r, "G(c_v^{w'})"]
            & \Net(wv) = G\Net(w'v)
            \\
            & \lim(\WFunc {\code{G}u} \circ \Net) 
                \ar[ul, out=180, in=-61, "c_{u}^w"] 
                \ar[ur, out=0, in=-121, "c_v^w"'] 
                \ar[u, "\tau"]
        \end{tikzcd}\end{equation*}
        Indeed, the top triangle commutes because
        \[
            G\Net(f) \circ G(c_u^{w'}) = G(\Net(f) \circ c_u^{w'}) = G(c_v^{w'})
        \]
        by the induction hypothesis.
        Thus, the outer triangle commutes as well.
    \end{description}

    \noindent This concludes the induction steps, as the only recursive steps in the construction of arrows in \(\Word\Sys\) are the applications of \(\code{E}_i\) and \(\code{F}\) to and composition of arrows.
    It follows that \((\lim(\WFunc{w} \circ \Net), c^w)\) is a cone for \(\Net \circ \Incl_w\) as well as \(\WFunc{w} \circ \Net\). 
    This induces a unique cone homomorphism \((\phi^{w})^{-1} \colon (\lim(\WFunc{w} \circ \Net), c^w) \to (\lim(\Net \circ \Incl_w), d^w)\), which is necessarily inverse to \(\phi^w\) by the universal properties of the limiting cones for \(\WFunc{w} \circ \Net\) and \(\Net \circ \Incl_w\).
\end{proof}

Given \(w,u \in \Sys_F^*\), we can also restrict any cone for \(\Net \circ \Incl_w\) to a cone for \(\Net \circ \Incl_{wu}\). 
This induces a unique homomorphism of cones for \(\Net \circ \Incl_{wu}\),
\begin{gather*}
    b^{wu,w} \colon (\lim(\Net \circ \Incl_{wu}), d^{wu}) \to (\lim(\Net \circ \Incl_{w}), d^{w}) 
    \\
    \begin{tikzcd}[ampersand replacement=\&]
        \lim(\Net \circ \Incl_w)
            \ar[rr, "b^{w,wu}"]
            \ar[dr, "d_{wu}^w"']
        \&\& \lim(\Net \circ \Incl_{wu}) 
            \ar[dl, "d^{wu}"]
        \\
        \& \Net \circ \Incl_{wu} \&
    \end{tikzcd}
\end{gather*}
Combining this with the inverse cone homomorphism we obtained in \cref{lem:W inclusion limit}, we define the cone homomorphism 
\begin{gather*}
    a^{w,wu} \colon (\lim(\WFunc w \circ \Net), c^w) \to (\lim(\WFunc {wu} \circ \Net), c^{wu})
    \\
    \begin{tikzcd}[ampersand replacement = \&]
        \lim(\WFunc w \circ \Net)
            \ar[rr, "a^{w,wu}"]
            \ar[dr, "c^w"']
        \&\&\lim(\WFunc {wu} \circ \Net)
            \ar[dl, "c^{wu}"]
        \\
        \& \WFunc{w} \circ \Net \&
    \end{tikzcd}
\end{gather*}
to be the composition of cone homomorphisms below:
\begin{gather*}
    \begin{tikzcd}[ampersand replacement = \&]
        \lim(\WFunc {w} \circ \Net)
            \ar[r, blue, "a^{w,wu}"]
            \ar[d, "\phi^{w}"']
        \& \lim(\WFunc {wu} \circ \Net)
        \\
        \lim(\Net \circ \Incl_{w})
            \ar[r, "b^{w,wu}"]
        \& \lim(\Net \circ \Incl_{wu})
            \ar[u, "(\phi^{wu})^\inv"]
    \end{tikzcd}
    \qquad
    \mathcolor{blue}{a^{wu,u}} = (\phi^{wu})^{\inv} \circ b^{w, wu} \circ \phi^w
\end{gather*}
This and \cref{lem:W inclusion limit} provide the necessary technical backbone for the main theorem of this section. 
We are going to use the suggestive notation \(\phi^{-wu} = (\phi^{wu})^\inv\) in our proof of \cref{thm:generalized Adamek-Barr construction} below.

\adamekbarrthm*

\begin{proof}
    Let us start by showing that \((\lim \Net, \zeta_\Sys)\) satisfies \(\Sys\).

    Write \(\Sys = \{e^{(i)} \colon E_i \subseteq F^{n_i} \mid i \in I\}\), and recall the recursive definition \(\zeta_{\Sys}^0 = \id{\lim \Net}\) and \(\zeta_{\Sys}^{n+1} = F(\zeta_{\Sys}^n) \circ \zeta_{\Sys}\).
    We need to factor \(\zeta_{\Sys}^{n_i}\) through \(e^{(i)} \colon E_i \to F^{n_i}\) for each \(i \in I\).   
    To this end, we argue that \(\zeta_{\Sys}^k\) is a homomorphism of \(F^k \circ \Net\) cones, \((\lim \Net, c_{\code F^k}^\bullet) \to (F^k(\lim \Net), F^k(c^\bullet))\) for all \(k \ge 0\). 
    We proceed by induction on \(k\).

    \begin{description}
        \item[Base Case] For \(k = 0\), \(\zeta_{\Sys}^k = \id{\lim \Net}\colon (\lim \Net, c^\bullet) \to (F^0(\lim \Net), F^0(c^\bullet))\) is trivially a cone homomorphism.

        \item[Induction step] Suppose we have established that
        \begin{gather*}
            \zeta_{\Sys}^k \colon (\lim \Net, c^\bullet) \to (F^k(\lim \Net), F^k(c^\bullet))
            \\
            \begin{tikzcd}[ampersand replacement=\&]
                \lim \Net
                    \ar[dr, "c_{\code F^k}^\bullet"']
                    \ar[rr, "\zeta_\Sys^k"]
                \&\& F^k(\lim \Net)
                    \ar[dl, "F^k(c^\bullet)"]
                \\
                \& F^k \circ \Net
            \end{tikzcd}
        \end{gather*}
        is a cone homomorphism.
        Then 
        \begin{gather*}
            F(\zeta_{\Sys}^k) \colon (F(\lim \Net), F(c_{\code F^k}^\bullet)) \to (F^k(\lim \Net), F^{k+1}(c^\bullet))
            \\
            \begin{tikzcd}[ampersand replacement=\&]
                F(\lim \Net)
                    \ar[dr, "F(c_{\code F^k}^\bullet)"']
                    \ar[rr, "F(\zeta_{\Sys}^k)"]
                \&\& F^{k+1}(\lim \Net)
                    \ar[dl, "F^{k+1}(c^\bullet)"]
                \\
                \& F^{k+1} \circ \Net
            \end{tikzcd}
        \end{gather*}
        is also a cone homomorphism by functoriality. 
        By definition, \(\zeta_{\Sys} = \tau^{\code F} \circ \phi^{\code{F}} \circ b^{\code{F}}\) is a composition of cone homomorphisms and as is therefore a cone homomorphism.
        It follows that \(\zeta_{\Sys}^{k+1} = F(\zeta_{\Sys}^k) \circ \zeta_{\Sys}\) is also a cone homomorphism,
        \begin{equation*}
            \begin{tikzcd}[ampersand replacement=\&]
                \&  (F^{k+1}(\lim \Net), F^{k+1}(c^\bullet)) 
                \\ 
                (\lim \Net, c_{\code F^{k+1}}^\bullet)
                    \ar[ur, "\zeta^{k+1}"]
                    \ar[r, "\zeta_\Sys"']
                \& (F(\lim \Net), F(c_{\code F^k}^\bullet))
                    \ar[u, "F(\zeta_\Sys^k)"'] \&
            \end{tikzcd}
        \end{equation*}
    \end{description} 
    This concludes the proof that \(\zeta_{\Sys}^k\) is a cone homomorphism for all \(k \ge 0\).
    In particular, we know that \(\zeta_\Sys^{n_i} \colon (\lim \Net, c_{\code F^{n_i}}^\bullet) \to (F^{n_i}(\lim \Net), F^{n_i}(c^\bullet))\) is a cone homomorphism for each \(i \in I\).

    Now fix an arbitrary \(i \in I\).
    Since \(F\) preserves limits of pitched diagrams, the composition \(F^{n_i}\) preserves limits of pitched diagrams.
    This tells us that \((F^{n_i}(\lim \Net), F^{n_i}(c^\bullet))\) is a limit cone for \(F^{n_i} \circ \Net\), so therefore \(\zeta_{\Sys}^{n_i}\) is the \emph{unique} cone homomorphism \((\lim \Net, c_{\code F^{n_i}}^\bullet) \to (F^{n_i}(\lim \Net), F^{n_i}(c^\bullet))\) for \(F^{n_i}\circ \Net\).
    Now, since \(e^{(i)} \colon E_i \subseteq F^{n_i}\) is natural, the arrow 
    \[
        e_{\lim \Net}^{(i)} \colon (E_i(\lim \Net), E_i(c_{\code F^{n_i}}^{\bullet})) \to (F^{n_i}(\lim \Net), F^{n_i}(c^{\bullet}))
    \]
    is a cone homomorphism. 
    And again, it is therefore the unique cone homomorphism, because we have assumed that \(E_i\) preserves limits of pitched diagrams.
    Thus, to see that \(\zeta_\Sys^{n_i}\) factors through the arrow \(e_{\lim \Net}^{(i)}\), it suffices to construct any cone homomorphism \((\lim \Net, c_{\code F^{n_i}}^{\bullet}) \to (F^{n_i}(\lim \Net), F^{n_i}(c^\bullet))\) that happens to factor through \(E_i(\lim \Net)\). 
    We can describe this arrow explicitly: it is 
    \[
        \psi 
        = 
        \mathcolor{blue}{e_{\lim \Net}^{(i)}} 
        \circ \mathcolor{blue}{\tau^{\code E_i} }
        \circ \mathcolor{blue}{\phi^{\code E_i} }
        \circ \mathcolor{blue}{b^{\code E_i}}
    \]
    To see that \(\psi\) is a cone homomorphism, let \(w \in \Sys_F^*\). 
    We need to show that \(F^{n_i}(c_w^\bullet) \circ \psi = c^{\bullet}\).
    To that end, consider the diagram in \eqref{eq:psi}.
    \begin{equation}
        \label{eq:psi}
        \begin{tikzcd}[
                column sep=5em, 
                row sep=3em, 
                every label/.append style = {font = \small}
            ]  
            \lim \Net 
                \ar[rr, "c_{\code F^{n_i}w}^\bullet"]
                \ar[d, blue, "b^{\code E_i}"]
                \ar[dr, "c_{\code{E}_iw}^\bullet"]
            && \Net \code F^{n_i} w = F^{n_i}\Net w
            \\
            \lim(\Net \circ \Incl_{\code E_i})
                \ar[d, blue, "\phi^{\code E_i}"]
                \ar[r, "d_{\code E_iw}^{\code E_i}"]
            & \Net \code E_i w = E_i\Net w
                \ar[ur, "e_{\Net w}^{(i)}"]
            &
            \\
            \lim(E_i \circ \Net)
                \ar[r, blue, "\tau^{\code E_i}"]
                \ar[ur, "c_{w}^{\code E_i}"]
            & E_i(\lim \Net)
                \ar[u, "E_i(c_w^\bullet)"']
                \ar[r, blue, "e_{\lim \Net}^{(i)}"]
            & F^{n_i}(\lim \Net) 
                \ar[uu, "F^{n_i}(c_w^\bullet)"]
        \end{tikzcd} 
    \end{equation}
    In the diagram above, the entire left side and bottom side make up \(\psi\). 
    All of the triangles on the left commute because the outer arrows are cone homomorphisms, and the triangle on top commutes because \((\lim \Net, c^\bullet)\) is a cone for \(\Net\).
    The square commutes because \(e^{(i)}\) is natural.
    This shows that \(\psi\) is a cone homomorphism, so \(\zeta_{\Sys}^{n_i} = \psi\) follows from uniqueness.

    It remains to see that \((\lim \Net, \zeta_{\Sys})\) is also final relative to \(\Sys\).
    To this end, consider an arbitrary \(F\)-coalgebra \((X, \beta)\) that satisfies \(\Sys\), and let \(\beta^{n_i} = e_X^{(i)} \circ \varepsilon_i\) be the factorization of \(\beta^{n_i}\) through \(e_X^{(i)} \colon E_iX \hookrightarrow F^{n_i}X\) for each \(i \in I\). 
    We are going to construct a coalgebra homomorphism \((X, \beta) \to (\lim \Net, \zeta_\Sys)\) and show that it is unique. 

    To begin with, let us define an arrow \(\beta^w \colon X \to \WFunc wX\) for each \(w \in \Sys_F^*\) by recursion on \(w\).
    In the base case, \(w = \bullet\) and \(\WFunc w = \Id\), and so we simply take \(
        \beta^\bullet = \id{X}
    \).
    For the recursion steps, define \(
        \beta^{\code E_i w} = E_i(\beta^w) \circ \e_i
    \) and \(
        \beta^{\code Fw} = F(\beta^w) \circ \beta
    \). 
    Observe the following two facts, each of which can be proven by induction:
    \begin{enumerate}
        \item \label{it:fact1} \(\beta^{k} = \beta^{\code F^{k}}\) for all \(k \in \NN\).  
        \item \label{it:fact2} For any \(w,u \in \Sys_F^*\), \(\WFunc w(\beta^{u}) \circ \beta^w = \beta^{wu}\).
    \end{enumerate}
    We are now interested in the family of maps given by \(\WFunc w(\bang_{X}) \circ \beta^w \colon X \to \WFunc w 1 = \Net w\) for each \(w \in \Sys_F^*\), because, as we will see shortly, \((X, \lambda)\) is a cone for \(\Net\), where 
    \begin{equation}\label{eq:lambda}
        \lambda_w = \WFunc w(\bang_X) \circ \beta^w
    \end{equation}
    Indeed, we are about to show that for any \(w,u \in \Sys_F^*\) and \(g \colon w \to u\) in \(\Word\Sys\),
    \[
        \Net(g) \circ \lambda_w 
        = \Net(g) \circ \WFunc w(\bang_{X}) \circ \beta^w 
        \stackrel{\mathcolor{blue}*}{=} \WFunc u(\bang_X) \circ \beta^u
        = \lambda_u
    \]
    The first and last equality are the definition.
    We show \(\mathcolor{blue}*\) by induction on \(g\).
    
    \vspace*{-8pt}
    \begin{description}
        \item[Base Case 1] The base case in which \(g = \id{w}\) is trivial.
        
        \item[Base Case 2] Suppose \(g = \bang_w\).
        In this case, \(u = \bullet\), and we have 
        \begin{equation*}
            \Net(g) \circ \WFunc w(\bang_{X}) \circ \beta^w
            = \Net(\bang_w) \circ \WFunc w(\bang_{X}) \circ \beta^w
            = \bang_{\WFunc w1} \circ \beta^w
            = \bang_{X}
            = \WFunc \bullet(\bang_X) \circ \beta^\bullet
        \end{equation*}
        as desired, since \(u = \bullet\).

        \item[Base Case 3] (\(g = \code e_{v}^{(i)}\)) Here, \(w = \code E_iv\) and \(u = \code F^{n_i}v\) for some \(v\).
        We have 
        \begin{align*}
            \Net(g) \circ \WFunc w(\bang_{X}) \circ \beta^w 
            &= e_{\Net v}^{(i)} \circ E_i \WFunc v(\bang_{X}) \circ \beta^w  
                \tag{def.~\(g\), \(\Net(\code e_v^{(i)}) = e_{\Net v}^{(i)}\)}
            \\
            &= F^{n_i}\WFunc v(\bang_{X}) \circ e_{\WFunc vX}^{(i)} \circ \beta^w  
                \tag{\(e^{(i)}\) natural}
            \\
            &= F^{n_i} \WFunc v(\bang_{X}) \circ e_{\WFunc vX}^{(i)} \circ \beta^{\code E_i v}  
                \tag{\(w = \code E_iv\)}
            \\
            &= F^{n_i} \WFunc v(\bang_{X}) \circ e_{\WFunc vX}^{(i)} \circ E_i(\beta^{v}) \circ \e_i  
                \tag{def.~of \(\beta^{\code E_i v}\)}
            \\
            &= F^{n_i}\WFunc v(\bang_{X}) \circ F^{n_i}(\beta^{v}) \circ e_{X}^{(i)} \circ \e_i
                \tag{\(e^{(i)}\) natural} 
            \\
            &= F^{n_i}\WFunc v(\bang_{X}) \circ F^{n_i}(\beta^{v}) \circ \beta^{\code F^{n_i}} 
                \tag{\(\beta^{n_i} = e_X^{(i)} \circ \e_i\)} 
            \\
            &= F^{n_i}\WFunc v(\bang_{X}) \circ \beta^{\code F^{n_i}v} 
                \tag{\cref{it:fact2} above} 
            \\
            &= \WFunc u(\bang_{X}) \circ \beta^{u}  \tag{\(u = \code F^{n_i}v\)}
        \end{align*}

        \item[Induction Step 1] (Composition of arrows in \(\Word\Sys\)) 
        Suppose \(g = g_1 \circ g_0\), where \(g_0 \colon w \to u\), \(g_1 \colon u \to v\), 
        and for our induction hypothesis we are assuming that
        \[
            \Net(g_0) \circ \WFunc w(\bang_{X}) \circ \beta^w = \WFunc v(\bang_X) \circ \beta^v
        \]
        as well as
        \[
            \Net(g_1) \circ \WFunc v(\bang_{X}) \circ \beta^v = \WFunc u(\bang_X) \circ \beta^u
        \]
        Then we compute as follows:
        \begin{align*}
            &\Net(g) \circ \WFunc w(\bang_{X}) \circ \beta^w  \\
            &= \Net(g_1 \circ g_0) \circ \WFunc w(\bang_{X}) \circ \beta^w \tag{def.~\(g\)}
            \\
            &= \Net(g_1) \circ \Net(g_0) \circ \WFunc w(\bang_{X}) \circ \beta^w 
            \tag{\(\Net\) functorial} \\
            &= \Net(g_1) \circ \WFunc v(\bang_{X}) \circ \beta^v 
            \tag{ind.~hyp.} \\
            &= \WFunc u(\bang_X) \circ \beta^u
            \tag{ind.~hyp.}
        \end{align*}

        \item[Induction Step 2] (Application of \(E_i\))
        Suppose that \(
            w = \code{E_i}w'
        \), \(
            u = \code E_i u'
        \), and \(
            g = \code E_i(h)
        \) for some \(h \colon w' \to u'\) such that 
        \(
            \Net(h) \circ \WFunc {w'}(\bang_{X}) \circ \beta^{w'} 
            = \WFunc {u'}(\bang_X) \circ \beta^{u'}
        \).
        Then 
        \begin{align*}
            &\Net(g) \circ \WFunc w(\bang_{X}) \circ \beta^w \\
            &= E_i\Net(h) \circ \WFunc {\code E_iw'}(\bang_X) \circ \beta^{\code E_iw'} 
                \tag{\(\Net (\code E_i(h)) = E_i\Net(h)\)}
            \\
            &= E_i\Net(h) \circ \WFunc {\code E_iw'}(\bang_X) \circ E_i(\beta^{w'}) \circ \e_i 
                \tag{def.~\(\beta^{\code E_iw'}\)}
            \\
            &= E_i\Net(h) \circ E_i\WFunc {w'}(\bang_X) \circ E_i(\beta^{w'}) \circ \e_i 
                \tag{def.~\(\WFunc {\code E_iw'}\)}
            \\
            &= E_i(h \circ \WFunc {w'}(\bang_X) \circ \beta^{w'}) \circ \e_i 
                \tag{\(E_i\) a functor}
            \\
            &= E_i(\WFunc {u'}(\bang_X) \circ \beta^{u'}) \circ \e_i 
                \tag{ind.~hyp.}
            \\
            &= E_i\WFunc {u'}(\bang_X) \circ E_i(\beta^{u'}) \circ \e_i 
                \tag{\(E_i\) a functor} \\
            &= \WFunc u(\bang_X) \circ \beta^{u} \tag{def.~\(u\), \(\beta^u\)}
        \end{align*}

        \item[Induction Step 3] (Application of \(F\))
        Similar to Induction step 2.
    \end{description}

    This concludes the proof that \((X, \lambda)\) is a cone for \(\Net\). 
    Thus, we obtain a unique cone homomorphism \(\beh_\beta \colon (X, \lambda) \to (\lim \Net, c^\bullet)\).
    To see that \(\beh_\beta\) is an \(F\)-coalgebra homomorphism, observe that there is also the cone homomorphism \(\zeta_{\Sys} \circ \beh_\beta \colon (X, \lambda) \to (F(\lim \Net), F(c^\bullet))\) for \(\Net\), which is unique because \(\Net\) is pitched and \(F\) is pitched-continuous. 
    By uniqueness, it therefore suffices to show that the composition \(F(\beh_\beta) \circ \beta\) is a cone homomorphism of the same type.
    To this end, let \(w \in \Sys_F^*\), and calculate
    \begin{align*}
        F(c_w^\bullet) \circ F(\beh_\beta) \circ \beta 
        &= F(c_w^\bullet \circ \beh_\beta) \circ \beta  \tag{\(F\) a functor}\\
        &= F(\WFunc w(\bang_X) \circ \beta^w) \circ \beta  \tag{\(\beh_\beta\) a cone homom.} \\
        &= F\WFunc w(\bang_X) \circ F(\beta^w) \circ \beta  \tag{\(F\) a functor}\\
        &= \WFunc {\code Fw}(\bang_X) \circ \beta^{\code Fw} \tag{def.~of \(\beta^{\code Fw}\)} \\
        &= \lambda_w \tag{def.~\(\lambda\)}
    \end{align*}
    as desired. 
    Again, it follows from \(\Net\) being pitched and the pitched continuity of \(F\) that there is at most one cone homomorphism \((X, \lambda) \to (F(\lim \Net), F(c^\bullet))\) for \(\Net\), so we must have 
    \[
        \zeta \circ \beh_\beta = F(\beh_\beta) \circ \beta
    \]
    Hence, \(\beh_\beta\) is a coalgebra homomorphism from an \(F\)-coalgebra that satisfies \(\Sys\).

    So far, we have seen that \((\lim \Net, \zeta_\Sys)\) is \emph{weakly final} among the \(F\)-coalgebras that satisfy \(E\), in that every \((X, \beta) \in \Coalg(F, \Sys)\) has a coalgebra homomorphism into \((\lim \Net, \zeta_\Sys)\).
    We need to show that it is indeed \emph{final}, in the sense that this coalgebra homomorphism is unique. 
    Given \emph{any} coalgebra homomorphism \(h \colon (X, \beta) \to (\lim \Net, \zeta_{\Sys})\), it suffices to see that \(h\) is a cone homomorphism \((X, \lambda) \to (\lim \Net, c^\bullet)\), where \(\lambda\) was defined in \eqref{eq:lambda}, because the latter is a limit cone.
    For this, we start by observing the identity \(\zeta_\Sys^w \circ h = \WFunc w(h) \circ \beta^w\) for all \(w \in \Sys_F^*\), by induction on \(w\) (similar to \cref{lem:always a homom}) and by virtue of \(h\) being a coalgebra homomorphism.
    Therefore, for \(w \in \Sys_F^*\),
    \begin{align*}
        c_w^\bullet \circ h 
        &= \WFunc w(c_\bullet^\bullet) \circ \zeta_\Sys^w \circ h 
            \tag{\(\zeta_\Sys^w \colon (\lim \Net, c^\bullet) \to (\WFunc w(\lim \Net), \WFunc w(c^\bullet))\)}
        \\
        &= \WFunc w(\bang_{\lim \Net}) \circ \zeta_\Sys^w \circ h 
            \tag{\(\bang_{\lim \Net} = c_\bullet^\bullet \colon \lim(\WFunc \bullet \circ \Net) \to \Net \bullet = 1\)}
        \\
        &= \WFunc w(\bang_{\lim \Net}) \circ \WFunc w(h) \circ \beta^w 
            \tag{observation above}
        \\
        &= \WFunc w(\bang_{\lim \Net} \circ h) \circ \beta^w 
            \tag{\(\WFunc{w}\) a functor}
        \\
        &= \WFunc w(\bang_X) \circ \beta^w \\
        &= \lambda_w \tag{def.~of \(\lambda\)}
    \end{align*}
    as desired. 
    It follows that \(h = \beh_\beta\), and finally that \((\lim \Net, \zeta_{\Sys})\) is the final \(F\)-coalgebra that satisfies \(\Sys\). 
\end{proof}

Recall that the class of polynomial endofunctors on \(\Set\) is the smallest containing the terminal functor \(\Delta_1\), the identity functor \(\Id\), and closed under products \(\prod_{i \in I} F_i\) and coproducts \(\coprod_{i \in I} F_i\).
Note that polynomial functors preserve monics.

\polynomialpitched*

This follows directly from the facts that every pitched diagram is connected and that polynomial functors on \(\Set\) preserve \emph{connected} limits (see~\cite{Paré_1990}).
We choose to give an elementary proof instead, for the sake of concreteness.

\begin{proof}
    Let \(F\) be a polynomial endofunctor on \(\Set\).
    The proof proceeds by induction on the definition of \(F\).
    Since \(\Delta_1\), \(\Id\), and \(\prod_{i \in I} \colon \Set^I \to \Set\) are all continuous, the only interesting case is the induction step for coproducts. 

    Let \(I\) be a set of indices and let \(F_i \colon \Set \to \Set\) be a pitched-continuous functor for each \(i\in I\).
    Let \(D \colon \mathbf D \to \Set\) be a pitched diagram with central \(\omega^\op\)-chain \(\{\alpha_i \colon A_{i+1} \to A_i\}_{i \in \NN}\), and define \(D_i = F_i \circ D\) for each \(i \in I\).
    The coproduct of functors \(\coprod_{i \in I} D_i\) is also a pitched diagram, so the rest of the proof will amount to showing that a coproduct of limit cones over pitched diagrams is a limit cone for the pitched diagram constructed as the coproduct of the pitched diagrams.

    For each \(i \in I\), let \((L_i, \ell^{(i)})\), \(\ell^{(i)} \colon \Delta_{L_i} \Rightarrow D_i\), be a limit cone for \(D_i\). 
    Write \(D_I = \coprod_{i \in I} D_i\), \(\ell = \coprod_{i \in I} \ell^{(i)}\), and \(L = \coprod_{i \in I} L_i\).
    We are going to show that \(\ell \colon \Delta_L \Rightarrow D_I\) is a limit cone for \(D_I\). 
    To that end, let \(q \colon \Delta_Q \Rightarrow D_I\) be any cone.
    We design a cone homomorphism \(c \colon (Q, q) \to (L, \ell)\) as follows. 
    Recall that \(A_0\) is the terminal object of the central \(\omega^\op\)-chain for \(\mathbf D\), and \(q_{A_0} \colon Q \to D_IA_0\).
    For each \(i \in I\), let 
    \[
        Q_i = \{x \in Q \mid q_{A_0}(x) \in D_i A_0\}
    \]
    Observe that \(Q = \coprod_{i \in I} Q_i\).    
    We aim to show that for any object \(B\) of \(\mathbf D\), \(q_B\) maps \(x\) to the \(D_iB\) component of \(D_IB\), i.e., \(q_B(x) \in D_iB\).
    
    To that end, write \(\alpha_{nm} = \alpha_{m} \circ \alpha_{m+1} \circ \cdots \circ \alpha_{n} \circ \alpha_{n+1} \colon A_n \to A_m\) for any \(m < n\).
    Given any arrow \(f \colon B \to B'\) in \(\mathbf D\), since \(D_I(f) = \coprod_{i \in I} D_i(f)\), we know that \(y \in D_iB\) if and only if \(D_I(f)(y) \in D_iB'\).
    This brings us to the following line of reasoning: let \(B\) be an object of \(\mathbf D\) and let \(x \in Q_i\) for some fixed \(i \in I\).
    Since \(D_I\) is a pitched diagram, there exists an arrow \(e \colon B \to A_n\) for some \(n \in \NN\). 
    Since \(q\) is a cone,
    \[
        D_I(\alpha_{n0}) \circ D_I(e) \circ q_{B}(x)
        = D_I(\alpha_{n0} \circ e) \circ q_{B}(x)
        = q_{A_0}(x)
        \in D_iA_0
    \]
    Therefore \(q_B(x) \in D_iB\), as desired. 

    The desired cone homomorphism \(c \colon Q \to L\) can now be defined as follows.
    For each \(i \in I\), we obtain the cone \(q^{(i)} \colon \Delta_{Q_i} \Rightarrow D_i\) by setting
    \(
        q_B^{(i)}(x) = q_B(x)
    \) for each \(B\) in \(\mathbf D\). 
    This is well-defined because \(q\) is a cone and because \(q_B(x)\) is in \(D_iB\), as we argued above.
    Now, since \(\ell^{(i)}\) is a limit cone for \(D_i\), we obtain a unique cone homomorphism \(c_i \colon (Q_i, q^{(i)}) \to (L_i, \ell^{(i)})\), and this is true for every \(i \in I\). 
    Combining these cone homomorphisms, we obtain the cone homomorphism \(c \colon Q \to L\), where \(c = \coprod_{i \in I} c_i\).
    This is the unique cone homomorphism because the inclusion maps \(Q_i \hookrightarrow Q\) are jointly epi.

    It follows that \((L, \ell)\) is a limit cone for \(D_I = \coprod_{i \in I} D_i\).
\end{proof}